\DeclareMathAlphabet      {\mathbfit}{OML}{cmm}{b}{it}
\newtheorem{theorem}{Theorem}
\newtheorem{lemma}{Lemma}
\newtheorem{corollary}[theorem]{Corollary}
\newtheorem{remark}{Remark}
\newenvironment{proof}{}{\hfill\rule{2mm}{2mm}}
\newcommand{\remove}[1]{}
\begin{document}
%
% paper title
% can use linebreaks \\ within to get better formatting as desired
\title{Rate-Memory Trade-off for Multi-access Coded  Caching with Uncoded Placement}

\author{Kota Srinivas Reddy and Nikhil Karamchandani \\
	Department of Electrical Engineering, \\
	 Indian Institute of Technology, Bombay \\
	Email: ksreddy@ee.iitb.ac.in, nikhilk@ee.iitb.ac.in
}

% make the title area
\maketitle

\begin{abstract}
	We study a multi-access variant of the popular coded caching framework, which consists of a central server with a catalog of $N$ files, $K$ caches with limited memory $M$, and $K$ users such that each user has access to  $L$ consecutive caches with a cyclic wrap-around and requests one file from the central server's catalog. The server assists in file delivery by transmitting a message of size $R$ over a shared error-free link and the goal is to characterize the optimal rate-memory trade-off. This setup was  studied previously by Hachem et al., where an achievable rate and an information-theoretic lower bound were derived. However, the multiplicative gap between them was shown to scale linearly with the access degree $L$ and thus order-optimality could not be established. 
	
	A series of recent works have used a natural mapping of the coded caching problem to the well-known index coding problem to derive tighter characterizations of the optimal rate-memory trade-off under the additional assumption that the caches store uncoded content. We follow a similar strategy for the multi-access framework and provide new bounds for the optimal rate-memory trade-off $R^*(M)$ over all uncoded placement policies. In particular, we derive a new achievable rate for any $L \ge 1$ and a new lower bound, which works for any uncoded placement policy and $L \ge K/2$. We then establish that the (multiplicative) gap between the new achievable rate and the lower bound is at most $2$ independent of all parameters, thus establishing an order-optimal characterization of $R^*(M)$ for any $L\ge K/2$.  This is a significant improvement over the previously known gap result, albeit under the restriction of uncoded placement policies. Finally, we also characterize $R^*(M)$ exactly for a few special cases. 
%(mentioned in Theorem \ref{thm_ubexact}). This extends our other recent work \cite{reddy2018multiaccess} which studies the exact $R^*(M)$  over all uncoded placement policies for small examples and also for the case of $L=K-1$.

%We study a cache-aided content delivery network consisting of a central server which hosts a  catalog of $N$ files, and $K$ caches each with limited memory $M$ which store content related to the files. There are $K$ users, each of which requests a file from the catalog, and has access to the data stored in $L \ge 1$ neighboring caches (with a cyclic wrap-around). The server transmits a common message to all the users, so that each of them can recover their requested file. This setup was recently studied in \cite{hachem2017codedmulti}, where a coloring-based placement and coded-delivery policy was proposed and the required server transmission size was shown to be \textit{order-optimal} with respect to information-theoretic bounds. We propose an alternate index coding-based placement and delivery scheme for this setup, which performs better than the previously proposed strategy. Furthermore, for multiple special cases including the $( N, K \le 4, L)$ and $(N, K, L = K-1)$-setups, we show that the scheme is \textit{exactly optimal} under the restriction of uncoded placement. This extends other recent work \cite{wan2016optimality,yu2017exact} which studies exact optimality for the single cache-access case ($L=1$) to the multi cache-access case $(L > 1$). 

%
%%%%%%%%%%%COMMENT: Idea behind using `Resource Pooling'%%%%%%%%%
\end{abstract}
{\let\thefootnote\relax\footnote{Preliminary versions of this work appeared in  \cite{reddy2018multiaccess} and  \cite{reddy2019rate}.}}
\IEEEpeerreviewmaketitle

\section{Introduction}\label{sec:introduction}
%In recent years, there has been a rapid increase in the number of smart devices, which leads to an unprecedented growth of (a steep increase in)  the Internet traffic and Video on Demand  (VoD) services account for an increasing data traffic  \cite{Cisco}. One of the most effective ways to meet this rise in demand is caching i.e., instead of serving data from the origin through a bandwidth constrained back-haul link, the data is served from / using a local resource located near  the user. In this paper, we study a cache-aided content delivery network (CCDN), shown in Figure \ref{fig:problemsetupmultiaccess}.
The rapid increase in the usage of smart devices has lead to an unprecedented growth in internet traffic. A recent study \cite{Cisco1} shows that data traffic from Video on Demand (VoD) services will increase exponentially in the forthcoming years. One way to meet this rise in demand is by prefetching and caching some of the data locally. Motivated by this, we study a cache-aided content delivery network (CCDN), as shown in Figure \ref{fig:problemsetupmultiaccess}.

In the last few years, there has been a lot of interest in characterizing the fundamental performance limits of the cache-aided content delivery network (CCDN), see for example \cite{jsac1, jsac2}. %, see for example Figure \ref{fig:problemsetupmultiaccess}. 
A CCDN consists of a central server with a catalog of files, a collection of users, and several caches with limited storage capabilities. The caches  can pre-fetch and store some of the content from the server, such that when users request files from the central server, the caches help the central server in serving the user requests. The main challenges in a CCDN are designing the (\emph{i}) placement policy, which decides what to store in the caches,   (\emph{ii}) delivery policy, which decides how to serve the user requests, with the goal of minimizing the server's transmission rate.

In particular, the seminal work of \cite{maddah2014fundamental, maddah-ali2013} studied the fundamental limits of server's transmission rate of a particular CCDN setup (Ali-Niesen setup) which is as follows: there is a central server with $N$ files of unit size and $K$ users, each one associated with a distinct cache of size $M$ units. Each user requests a file from the central server and based on the request profile as well as the content stored in the caches, the server broadcasts messages on a shared error-free link to the users. The goal is to minimize the server's transmission rate while ensuring that each user can recover its' requested file. \cite{maddah2014fundamental} proposed a (uncoded) placement and (coded) delivery policy for this setup and analyzed the achievable server transmission rate of the scheme as a function of parameters of the system ($N, K, M$). Moreover, \cite{maddah2014fundamental} also showed that the rate achieved by the policy is within a constant multiplicative factor ($12$) of the information-theoretic optimal rate by comparing it to a lower bound, which assumes no restrictions on either the placement or the delivery policy used. Following the initial papers, there have been significant improvements in terms of both the achievable rates \cite{amiri2017fundamental, zhang2018fundamental,wei2017novel, gomez2018fundamental, gomez2018novel} as well as the lower bound arguments \cite{wang2016new, wang2017improved, ghasemi2017improved, yu2019characterizing} which can be used to tighten the gap significantly. In fact, under the natural restriction of an uncoded placement phase where caches are not allowed to store coded content, the rate proposed in \cite{maddah2014fundamental} is shown to be exactly optimal in \cite{yu2018exact, wan2016optimality}. This is done by mapping the CCDN setup to the well-studied index coding problem (ICP) \cite{bar2011index, arbabjolfaei2018fundamentals} (described in Section \ref{sec:prelim}) and using the bounds available in the literature for the ICP.

%A large number of papers have studied similar problem setup with different goals. Especially \cite{wan2016optimality,yu2017exact}  showed that if we restrict to only uncoded placement policies, the rate achieved in \cite{maddah2014fundamental} is exactly optimal (Note that the policy in \cite{maddah2014fundamental} is also uncoded placement policy). The results in \cite{wan2016optimality}, are based on exploiting the connection between the Ali setup  with uncoded placement and well-known index coding problem (ICP).

Several variants of the above setup have been studied in the literature, see for example \cite{maddah2016coding} for an extensive survey. In particular, \cite{hachem2017codedmulti} studied a generalization of the Ali-Niesen setup where instead of each user accessing a single (distinct) cache, every user has access to multiple ($L$) consecutive caches (with a cyclic wrap-around), as shown in Figure \ref{fig:problemsetupmultiaccess}. This was motivated by the upcoming heterogeneous cellular architecture which will consist of  dense deployment of wireless access points (APs) with small coverage and relatively large data rates, in addition to sparse cellular base-stations (BSs) with large coverage and smaller data rates. Placing caches at local APs can help reduce the BS transmission rate, with each user capable of accessing the content stored at multiple nearby APs in addition to receiving the BS broadcast. For this multi-access CCDN, \cite{hachem2017codedmulti} proposed a coloring based placement and delivery policy and analyzed its' achievable rate $R_{\text{color}}(M)$. By comparing this rate to the information-theoretic optimal rate $R^*_{\text{inf}}(M)$ which puts no restrictions on either the placement or the delivery policy used, \cite[Theorem~4]{hachem2017codedmulti} showed that ${R_{\text{color}}(M)}/{R^*_{\text{inf}}(M)}\leq cL$, where $c$ is some constant, independent of all system parameters. Thus, the gap between the achievable rate and the information-theoretic lower bound increases linearly with $L$, the number of caches each user has access to, and the obvious challenge is to improve the achievable rate and/or lower bound to establish the exact optimal rate-memory trade-off or at least up to a constant factor for any $L$. 
%this tquestions of interest are  (\emph{i})  what is the exact memory -rate trade-off (at least for uncoded placement policies)?,  (\emph{ii}) are there any policy which achieves order optimality?.

% was also derived and showed that the rate achieved by the coloring based policy is  linear in $L$ (independent of other system parameters) with respect to information theoretic lower bound. More precisely, \cite[Theorem~4]{hachem2014multi} showed that $\frac{R_{\text{color}}(M)}{R^*_{\text{inf}}(M)}\leq cL$, where $c$ is some constant, independent of system parameters, $R_{\text{color}}(M)$ is the coloring based achievable rate, and $R^*_{\text{inf}}(M)$ is the information theoretical lower bound with no restrictions on either the placement or the delivery policy used. Now, the next questions of interest for this multi ($L$)- access setup are (\emph{i})  what is the exact memory -rate trade-off (at least for uncoded placement policies)?,  (\emph{ii}) are there any policy which achieves order optimality?.
\begin{figure}[t]
	\begin{center}
		\includegraphics[scale=0.33]{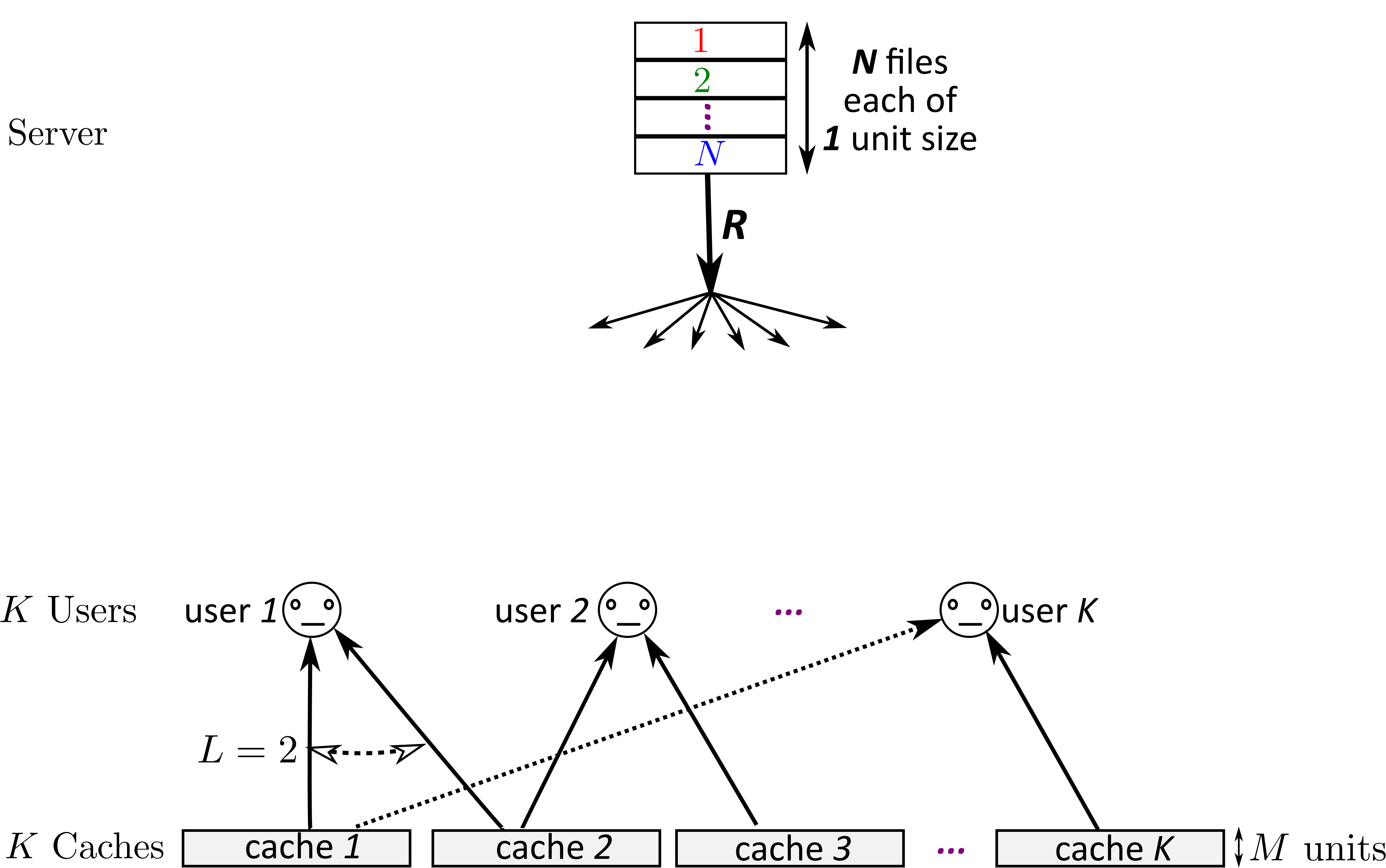}
		%\caption{\sl A multi-access CCDN consisting of $N$ files, $K$ caches, each of size $M$ units, and $K$ users, each user is connected to $L=2$ caches.   \label{fig:problemsetupmultiaccess}}
		\caption{A multi-access CCDN consisting of $N$ files, $K$ caches, each of size $M$ units, and $K$ users, each user is connected to $L=2$ caches.   \label{fig:problemsetupmultiaccess}}
	\end{center}
	\vspace{-0.5in}
\end{figure}

%Following the recent work \cite{yu2018exact, wan2016optimality} on the Ali-Niesen setup ($L = 1$) which studies the problem under the natural restriction of an uncoded placement phase, \cite{reddy2018multiaccess} considered a few specific examples for the multi-access setup $(L > 1)$ such as $L = K - 1$ and small CCDN with $K \le 4$, and demonstrated an improved scheme as well as a matching lower bound under the restriction of uncoded placement. 
In this paper, we study the same multi-access setup and make the following technical contributions: 
\begin{itemize}
	\item  derive a new achievable rate for the general multi-access CCDN with $L > 1$, based on a scheme using uncoded placement; can be order-wise better than the best previously known rate \cite{hachem2017codedmulti},
	\item  derive a general lower bound on the optimal rate for any multi-access CCDN with $L \geq K/2$, under the restriction of uncoded placement,
	\item  establish order-optimal (up to a multiplicative factor of $2$) uncoded placement rate-memory trade-off for any multi-access CCDN with $L \geq K/2$, 
	\item  establish exact optimal uncoded placement rate-memory trade-off for a few special cases, for example $L=K-1$; $L=K-2$; $L=K-3$ with $K$ even. 
\end{itemize} 
As mentioned before, \cite{hachem2017codedmulti} derived an achievable rate and an information-theoretic lower bound which differ by a multiplicative gap scaling linearly with $L$. Using an improved achievable rate as well as a better lower bound under the restriction of uncoded placement, we are able to establish the uncoded placement rate-memory trade-off for any multi-access CCDN with $L \geq K/2$ up to a constant factor, independent of $L$.
To illustrate  the gains of our policy over the coloring based policy in \cite{hachem2017codedmulti}, we plot 
\begin{enumerate}
	\item the rate-memory trade-off for the ($N=4, K=4, L= 2$)-CCDN in Figure \ref{fig:442i}.  For this system, our policy performs better than the coloring based policy in all the memory regimes. Moreover, as we show in Appendix I, our policy in fact achieves the optimal rate-memory trade-off.
\item the transmission rate  at the memory point $M = N / K$ as a function of the number of users/caches $K$  for an ($N \geq K, K, L =K/2$)-CCDN in Figure \ref{fig:lgk2i}. As the plot shows, the additive gap between the performance of the two schemes increases linearly with $K$. %The plot shows that the additive gain is increasing linearly with $K$.
\end{enumerate}
\begin{figure}[h]
	\centering

	%\hspace{0.1in}
	\begin{minipage}{0.48\textwidth}
		\centering
		\includegraphics[width = 1.0\textwidth]{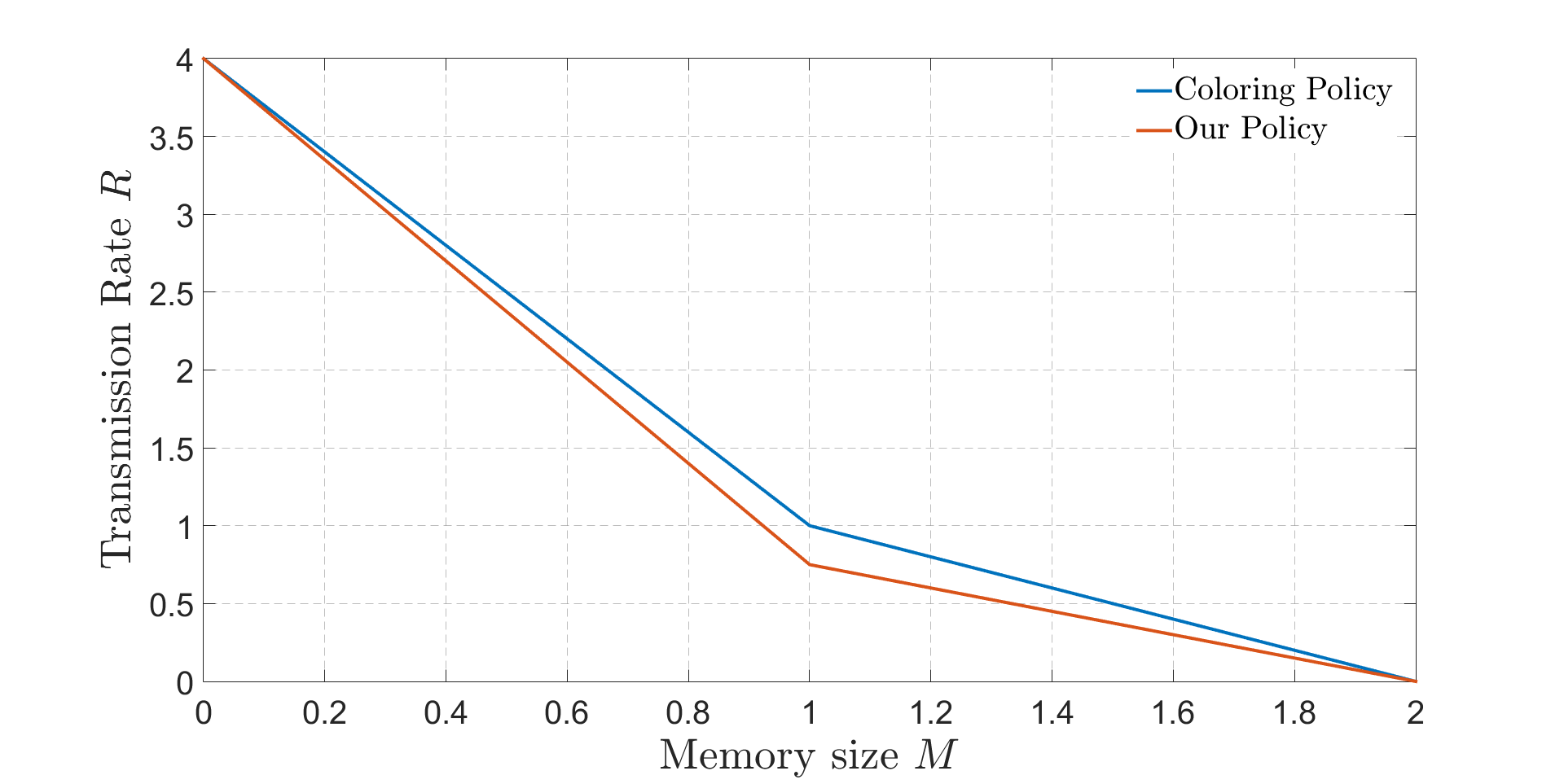}\\
		\caption{Plot of the transmission rate $R$ as a function of the memory size of the each cache $M$ for the ($N=4, K=4, L= 2$)-CCDN. }\label{fig:442i} 
		\vspace{-0.25in} 
		%	\end{figure}
		%\begin{figure}[h]
	\end{minipage}
	\hspace{0.1in}
	\begin{minipage}{0.48\textwidth}
		\centering
		\includegraphics[width = 1.0\textwidth]{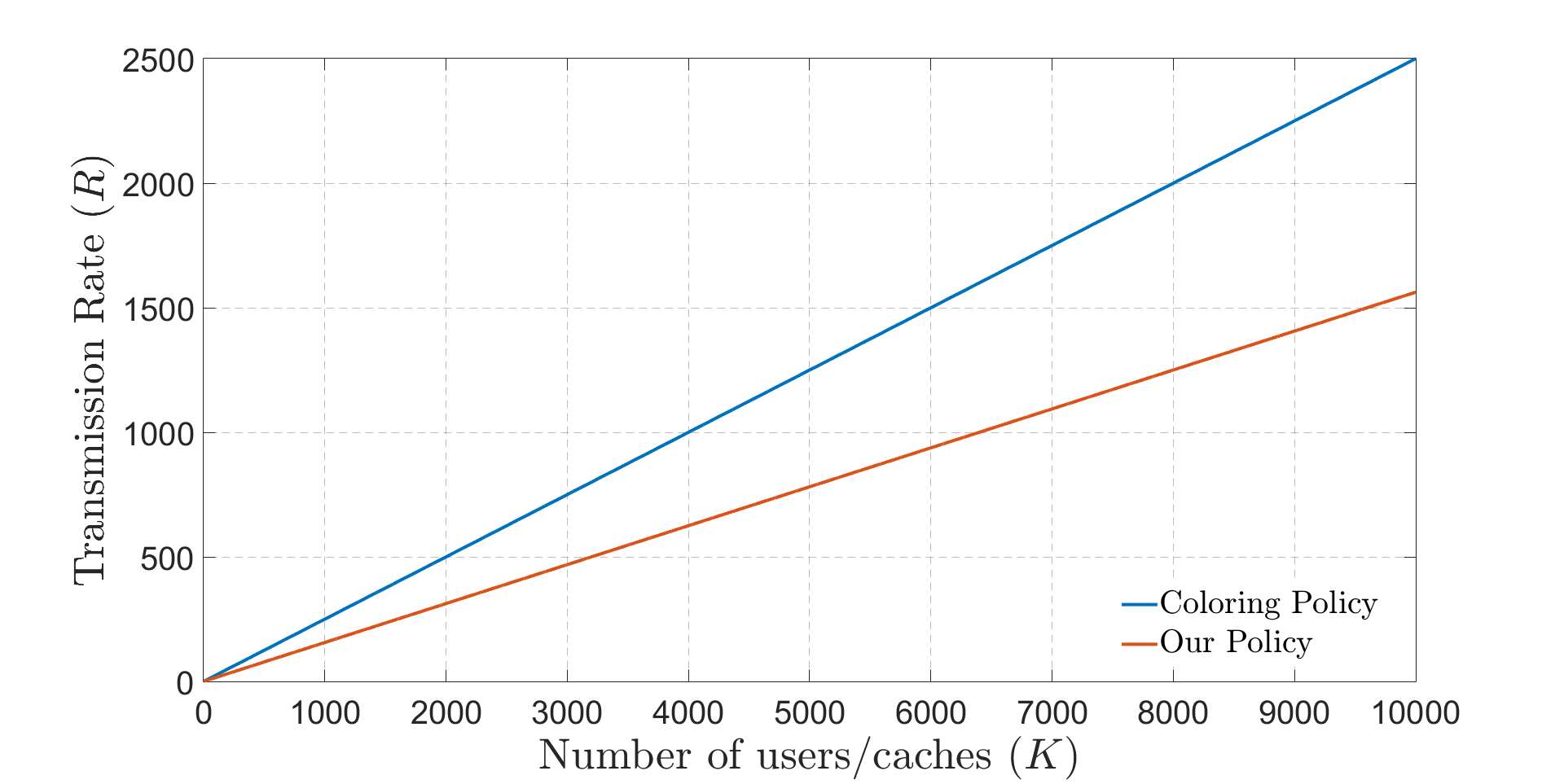}\\
		\caption{Plot of the transmission rate $R$ at memory point $M=N/K$ as a function of the number of users/caches $K$ for an ($N \geq K, K, L =K/2$)-CCDN.}\label{fig:lgk2i} 
		\vspace{-0.25in} 
	\end{minipage}
	%\hspace{-0.4in}	
	
\end{figure}

Our results are based on mapping our multi-access CCDN setup to appropriate index coding problems \cite{bar2011index} and finding bounds on their solutions, which might be of interest in their own right. The index coding problems that result from our mapping are similar in spirit to those studied in \cite{maleki2014index,vaddi2016optimal,vaddi2017capacity}. Similar index coding-based approaches have been used recently to study other variants of the original coded caching problem and for designing better achievability schemes as well as prove converses under the restriction of uncoded placement \cite{wan2017novel, bahrami2017towards, parrinello2018coded,  wan2017novelindex, jin2018uncoded, ozfatura2018uncoded, parrinello2018fundamental, wei2017coded, ji2014caching}. 

The rest of the paper is organized as follows. Section \ref{sec:setting} briefly describes the problem setting, while Sections~\ref{sec:not} and \ref{sec:prelim} describe some useful notations and preliminaries. Section \ref{sec:policy} and Section \ref{sec:results_multiacc} describe our policy and the main results. The summary of our work and future goals are given in Section \ref{sec:discussions} and the proofs of our main results are given in Section \ref{sec:proofs}. %For detailed proofs, please refer to the extended version of this manuscript \cite{reddy2018isitmultiaccess}. 

\section{Problem Setting} \label{sec:setting}

We consider a cache-aided content delivery network (CCDN) which consists of a central server, $K$ users, and $K$ caches as shown in Figure \ref{fig:problemsetupmultiaccess}. We assume,
\begin{itemize}
	\item  the central server contains $N$ $(\geq K)$ files $\mathcal{F}_1,\mathcal{F}_2,...,\mathcal{F}_N$, each of size $1$ unit (=$F$ bits\footnote{We will assume the file size $F$ to be large}),
	\item  each user has access to $L$ consecutive caches with a cyclic wrap-around\footnote{For symmetry, we assume that Caches 1 and $K$ are adjacent.} as shown in Figure \ref{fig:problemsetupmultiaccess},
	\item  cache sizes are uniform and are $M$ units each,
	\item  each user requests one file which has to be served by a central server's message, and the content in the $L$ caches it has access to, and
	 \item  the communication channel between the central server and the users is an error-free shared (broadcast) channel.
	 %\item  [--] our system is delay intolerant.
\end{itemize}
We will refer to the above system as the ($N, K, L$)$-$CCDN. 

The system runs in two phases: a placement phase and a delivery phase.
\subsubsection{Placement Phase} In the placement phase, we fill the caches  with the content  related to the $N$ files. Like \cite{wan2016optimality,yu2018exact}, we restrict to an uncoded placement phase.  We are allowed to split the files into parts and store the file parts, but coding across the file parts is not allowed while storing in the caches.  The placement phase occurs before users reveal their requests and hence is independent of user requests. 

After the placement phase, each user (User j) requests one file (File $d_j$) from the central server, chosen arbitrarily from amongst the $N$ files.  We call $\mathbf{d}=$ ($d_1,d_2,...,d_K$) as the request profile.

%We can think of placement phase as an operation which occurs during the network off peak time to reduce the network peak time traffic. i.e., transferring load from  peak time to off peak time. Hence, we don't consider the cost occurred (or bandwidth required) during the placement phase. Memory is the only constraint during the placement phase. 

\subsubsection{Delivery Phase} In the delivery phase,  depending on the request profile ($\mathbf{d}$) and content stored in the caches, the central server broadcasts a message of size $R$ units such that each user can recover their requested file using the server transmission and the content in the $L$ caches it has  access to. {We refer to the message size $R$ as the server transmission rate.}

A rate-memory pair ($R,M$) is said to be achievable for request profile $\mathbf{d}=(d_1,d_2,...,d_k)$   if there exists a placement and delivery scheme {with server transmission rate $R$} for cache size $M$,  such that every user (User $k$) can recover its requested file (File ${d_k}$). A rate-memory pair ($R,M$) is said to be achievable  if this pair is achievable for any possible arbitrary request profile.  For a given cache size $M$, we define optimal rate-memory trade-off  $R^*(M)$ as the smallest rate $R$ for which the rate-memory pair  ($R,M$) is achievable. Our goal is to characterize $R^*(M)$ under the restriction of uncoded placement and come up with a placement and delivery policy that achieves  $R^*(M)$.

%{\color{blue}A rate-memory pair ($R,M$) is said to be achievable if there exists a placement and delivery scheme {with server transmission rate $R$} for cache size $M$,  such that every user can recover its requested file. For given cache size $M$, we define $R^*(M)$ {(optimal rate-memory trade-off)} as the smallest rate $R$ for which the rate-memory pair  ($R,M$) is achievable for any arbitrary request profile. Our goal is to characterize $R^*(M)$ under the restriction of uncoded placement and come up with a placement and delivery policy which achieves  $R^*(M)$.}

As mentioned in Section \ref{sec:introduction}, \cite{hachem2017codedmulti} studied this setup with multi-cache access ($L > 1$) and proposed a coloring-based achievability scheme which builds on the coded delivery ideas presented in \cite{maddah2014fundamental, maddah-ali2013} for single cache access ($L=1$). For the general setup\remove{\footnote{The coloring based scheme also assumes that $K$ is an integer multiple of $L$}}, the server transmission rate $R_{color}(M)$ for this scheme is given by
$$R_{color}(M)=\frac{K-\frac{KLM}{N}}{1+\frac{KM}{N}}, \ M\in\bigg\{0,\frac{N}{K},\frac{2N}{K},...\frac{N}{L}\bigg\}.$$ For general $0 \le M \le N / L$, the lower convex envelope of these points is achievable via memory-sharing.  Incidentally, the proposed scheme used uncoded placement and coding is used only in the server broadcast message. 

In this paper, we derive a new achievable rate for the general\remove{\footnote{$K$ need not be integer multiples of $L$}} multi-access CCDN with $L > 1$, based on a scheme using uncoded placement, which can be order-wise better than the best previously known rate  $R_{color}(M)$ \cite{hachem2017codedmulti}. Our new achievable rate is exactly optimal for a few cases and order wise optimal for any $L\geq {K}/{2}$.

\section{Notations}\label{sec:not}

\begin{itemize}
	\item $[n]=\{1,2,3,...,n\}$.
	\item \begin{align*}
	[i:j]=\begin{cases}
	\{i,i+1,...,j\} & \text{ if } i\leq j, \\
	\{i,i+1,...,K,1,2,...,j\} & \text{ if } i>j.
	\end{cases}
	\end{align*}
	%\item $P(\mathcal{S})$ is power set of set $\mathcal{S}$
	%\item ${F}_{i,[j,k,...]}-$ parts of File $i$ stored in caches $j,k,...$
	%\item ${F}_{i,\mathcal{S}}-$ parts of File $i$ available to usrers in the set $\mathcal{S}$ 
	\item \begin{align*}
	<i>=\begin{cases}
	<i+K> & \text{ if } i\leq 0, \\
	i & \text{ if } 0<i\leq K,\\
	<i-K> & \text{ if } i> K.
	\end{cases}
	\end{align*}
	\item $|.|$ denotes the cardinality of a set or size of a  file/subfile. 
	\item {$\mathcal{F}_{i,\mathcal{S}}$} denotes parts of File $i$ exclusively available to users with index in set $\mathcal{S} \subseteq [K]$.
	\item $P(\mathcal{S})$ denotes the power set of set $\mathcal{S}$.	
%	\item $(i,j)-$ $i^{th}$ row and $j^{th}$ column in a table
%	\item $(i,\bar{j})-$ entire $i^{th}$ row except $(i,j)$
%	\item interfering node: In an ICP, Node $A$ is interfering with Node $B$, if Node $A$'s requested file is not the side information of Node $B$
%	\item Proper coloring: a node color should be different from all its interfering nodes
\end{itemize}
%\begin{align*}
%R_o(M)= \begin{cases}
%K-\Big[K-\frac{(K-L)^2}{K}\Big]\frac{MK}{N}, & \text{if } 0\leq M\leq \frac{N}{K},  \\
%\frac{(K-L)^2}{K}(2-\frac{MK}{N}), & \text{if } \frac{N}{K}\leq M\leq \frac{2N}{K},\\
%0 & \text{if } M\geq \frac{2N}{K}.
%\end{cases}
%\end{align*}
%
%{\small
%
%%
%%	\begin{table}[h]
%%	\centering
%%	\resizebox{\linewidth}{!}{%
%%		\begin{tabular}{ c }
%		\begin{align*}
%				{R}_e(M)= \begin{cases}
%				K-\Big[K-\frac{(K-L)(K-L+1)}{2K}\Big]\frac{MK}{N}, & \text{if } 0\leq M\leq \frac{N}{K},  \\
%				\frac{(K-L)(K-L+1)}{2K}(2-\frac{MK}{N}), & \text{if } \frac{N}{K}\leq M\leq \frac{2N}{K},\\
%				0 & \text{if } M\geq 2\frac{N}{K}.
%				\end{cases}
%		\end{align*}}
%	
%	{\small
%	\begin{align*}
%	R_a(M) = \begin{cases}
%	K-\Big[K-\frac{6}{K}-\frac{2}{K(K-1)}\Big]\frac{MK}{N}, & \text{if } 0\leq M\leq \frac{N}{K},  \\
%	\Big[\frac{6}{K}+\frac{2}{K(K-1)}\Big]\big(2-\frac{MK}{N}\big), & \text{if } \frac{N}{K}\leq M\leq \frac{2N}{K},\\
%	0 & \text{if } M\geq \frac{2N}{K}.
%	\end{cases}
%	\end{align*}}
%		
%		
%%	\end{tabular}}
%%	\vspace{0.5cm}
%%	\caption{Interfering sub nodes for sub node (3,K)} \label{interfere30}
%%\end{table}
%%\resizebox{\linewidth}{
%%	\begin{tabular}{c}
%%		\begin{align*}
%%		{R}_e(M)= \begin{cases}
%%		K-\Big[K-\frac{(K-L)(K-L+1)}{2K}\Big]\frac{MK}{N}, & \text{if } 0\leq M\leq \frac{N}{K},  \\
%%		\frac{(K-L)(K-L+1)}{2K}(2-\frac{MK}{N}), & \text{if } \frac{N}{K}\leq M\leq \frac{2N}{K},\\
%%		0 & \text{if } M\geq 2\frac{N}{K}.
%%		\end{cases}
%%		\end{align*}
%%	\end{tabular}
%%
%%}
\section{Preliminaries}\label{sec:prelim}
%We use a few graph parameters to characterize our system's optimal rate-memory trade-off and their description is given below.\\
%\textit{Proper Coloring Scheme:} A coloring scheme of a (directed) graph is defined as assigning colors to the nodes and is said to be proper if none of the edges contain same color to its both nodes (or no two adjacent nodes assigned same color\cite{localchromatic2013}). \\
%\textit{Local Chromatic Number $\mathcal{X}_l(\mathcal{G})$:} We define the closed out neighborhood of a Node $v$ as Node $v$ itself and all its out neighbors (the nodes which contain a directed edge from Node $v$ i.e., Node $i$ is an out neighbor of Node $v$ iff ($v,i$) is a directed edge in $\mathcal{G}$). The local chromatic number ($\mathcal{X}_l(\mathcal{G})$) is defined as the maximum number of different colors that appear in the closed out neighborhood of any vertex, minimized over all proper colorings. 
%
%Let the set $\mathcal{N}^+(i)$ denote the closed out-neighborhood (including $i$) of a given vertex  $i$ in a directed graph $ \mathcal{G}$, i.e., $j\in \mathcal{N}^+(i)$ iff ($i,j$) is a directed edge or $j=i$. Let $c:V\rightarrow[k]$ be a proper coloring scheme with $k$ (any positive integer) colors and $|c(\mathcal{N}^+(i))|$ denote the number of colors in the closed out neighborhood of the Node $i$. Then,
%$$\mathcal{X}_l(\mathcal{G})=\min_c\max_{i\in V}|c(\mathcal{N}^+(i))|.$$

We map our setup to the well-studied index coding problem (ICP) \cite{bar2011index} and use some of the ideas developed for this problem to characterize the optimal rate-memory trade-off for our setup. Similar to our setup, ICP has a server with a catalog of say $n$ files. There are $n$ nodes, such that Node $i$ requests File $i$ and has access to a  subset of the remaining files $J_i$ $\subseteq[n]/{i}$. $J_i$ is called the side information of Node $i$.  Depending on  the side information profile, the server broadcasts a message so that each node can recover its requested file using the broadcast message and the side information available. The goal is to characterize the minimum broadcast rate for any given instance of the ICP.    

Even though the ICP and our problem setting look similar \big(\{server, node requests, side information\} $\approx$ \{central server, user requests, accessible caches content\}\big), the differences between the ICP and our problem setup are
\begin{enumerate}
	\item in an ICP problem, the side information is already given whereas, in our setup, we have the choice of what to store in caches.
	\item in an ICP problem, the node requests are fixed whereas, in our setup, the user requests are arbitrary. 
\end{enumerate}
Once the cache contents are fixed and user requests are revealed, then the problem of minimizing the central server's broadcast rate in our setup is equivalent to that for a corresponding ICP.

\remove{We can represent an $n$ node ICP by an equivalent $n$ {\color{red}vertex} directed side information graph $\mathcal{G}$,   where each vertex corresponds to a unique node, and there exists an edge from Vertex $i$  to Vertex $j$ if Node $i$ has File $j$ (Node $j$'s requested file). Lemma \ref{lemma:lowerbound} gives a lower bound on the broadcast rate of an ICP using the side information graph ($\mathcal{G}$) and it follows from \cite[Corollary 1]{arbabjolfaei2013capacity}.

%\begin{lemma}
%	Given a $n$ user ICP with side side information graph $\mathcal{G}$, for any subset $\mathcal{J}\subseteq[n]$ such that the subgraph of $\mathcal{G}$ induced by the nodes in $\mathcal{J}$ does not contain a directed cycle, then the {\color{red}minimum} broadcast rate $S(\mathcal{G})$ of the ICP is $$S(\mathcal{G})\geq\sum_{j\in\mathcal{J}} M_j \ , $$  where $M_j$ is the size of file requested by User $j$.
%\end{lemma}
\begin{lemma}\label{lemma:lowerbound}
 Consider an $n$ node ICP with side information graph $\mathcal{G}$. Let $M_i$ be the file requested by Node $i$ and $S(\mathcal{G})$ be the minimum broadcast rate. Then, for any subset $\mathcal{J}\subseteq[n]$ such that the subgraph of $\mathcal{G}$ induced by the vertices in $\mathcal{J}$ does not contain a directed cycle, we have $$S(\mathcal{G})\geq\sum_{j\in\mathcal{J}}|M_j| \ . $$ % where $|M_j|$ indicates the size of file $ M_j$.
\end{lemma}}

%We upper bound the minimum broad cast rate of an ICP by the local chromatic number of the (directed) complement of the side information graph corresponding to the ICP. We call the (directed) complement of the side information graph as interference graph of the ICP.
%
%\textit{Interference Graph ($\overline{\mathcal{G}}$):} Interference graph ($\overline{\mathcal{G}}$) is a directed graph with $n$ nodes, where each node corresponds to a unique user, and there exists an edge from Node $i$  to Node $j$ if User $i$ does not have File $j$.

\remove{\color{blue}In an ICP, Node $j$ is said to be \textit{interfering with} Node $i$,  if Node $i$ does not have File $j$  as side information. \remove{In the side information graph $\mathcal{G}$, for any vertex $i$, all the other vertices except the out-neighbors are interfering vertices {\color{red}(vertices corresponding to the interfering nodes of Node $i$). }} The closed anti-outneighborhood of a Node  $i$ ($\mathcal{N}^+(i)$) is Node $i$ itself and all its interfering nodes. A coloring scheme for an ICP assigns a color to each node and is said to be \textit{proper} if no node shares its color with any of its interfering nodes. The local chromatic number of an ICP ($\mathcal{X}_l$)  is defined as the maximum number of different colors that appear in the closed anti-outneighborhood of any node, minimized over all proper colorings.
 %A coloring scheme for an ICP is defined as assigning colors to the {\color{red}nodes} in the ICP and is said to be proper if none of the {\color{red}nodes} contain same color with its interfering {\color{red}nodes}. \\
%\textit{Local Chromatic Number $\mathcal{X}_l(\mathcal{G})$:} We define 

Let $c:[n]\rightarrow[k]$ be a proper coloring scheme with $k$ (any positive integer) colors and $|c(\mathcal{N}^+(i))|$ denote the number of colors in the closed anti-outneighborhood of the Node $i$. Then,
$$\mathcal{X}_l=\min_c\max_{i\in V}|c(\mathcal{N}^+(i))|.$$}

In an ICP, Node $j$ is said to be {interfering with} Node $i$,  if Node $i$ does not have File $j$  as side information. \remove{In the side information graph $\mathcal{G}$, for any vertex $i$, all the other vertices except the out-neighbors are interfering vertices {\color{red}(vertices corresponding to the interfering nodes of Node $i$). }} Let $\mathcal{N}^+(i)$ denote the closed anti-outneighborhood of a Node  $i$ which is defined as the set containing Node $i$ itself and all its interfering nodes. A coloring scheme for an ICP assigns a color to each node and is said to be \textit{proper} if no node shares its color with any of its interfering nodes.  
	Let $c:[n]\rightarrow[k]$ be a proper coloring scheme with $k$ (any positive integer) colors and $|c(\mathcal{N}^+(i))|$ denote the number of colors in the closed anti-outneighborhood of the Node $i$. Then, the local chromatic number of an ICP ($\mathcal{X}_l$)  is defined as
	$$\mathcal{X}_l=\min_c\max_{i\in V}|c(\mathcal{N}^+(i))|.$$
In words, the local chromatic number of an ICP is defined as the maximum number of different colors that appear in the closed anti-outneighborhood of any node, minimized over all proper colorings.
%

%********************************************************************

We use the following result  to upper bound the broadcast rate of an ICP, which follows from \cite[Theorem~1]{localchromatic2013}. 
\remove{\begin{lemma}\label{lemma:upperbound}
	Given an $n$ node ICP with side information graph ${\mathcal{G}}$, the minimum broadcast rate $S({\mathcal{G}})$ of the ICP satisfies $$S({\mathcal{G}})\leq\mathcal{X}_l, $$  where $\mathcal{X}_l$ is the local chromatic number of the ICP.
\end{lemma}}

\begin{lemma}\label{lemma:upperbound}
	Given an $n$ node ICP, the minimum broadcast rate of the ICP is upper bounded by its local chromatic number $\mathcal{X}_l$.
\end{lemma}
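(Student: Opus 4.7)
The plan is to mimic the construction of Shanmugam, Dimakis, and Langberg (Theorem~1 of \cite{localchromatic2013}), which upper bounds the broadcast rate of any ICP by building a scalar-linear scheme from a proper coloring together with an MDS matrix. First I would fix a proper coloring $c:[n]\to[k]$ that attains the local chromatic number, so $\max_{i}|c(\mathcal{N}^+(i))|=\mathcal{X}_l$. Then, over a sufficiently large finite field $\mathbb{F}_q$ (any $q\ge k$ suffices), I would select a matrix $H\in\mathbb{F}_q^{k\times\mathcal{X}_l}$ in general position, meaning that any $\mathcal{X}_l$ of its rows are linearly independent; a Vandermonde-style choice works. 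To each node $i$ associate the row vector $h_i:=H_{c(i),:}$ read off by the node's color.

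For the scheme itself, treat each file as a vector over $\mathbb{F}_q$ (splitting files into blocks of $\log_2 q$ bits, which is fine because the file size $F$ is large) and have the server broadcast
\[
Y \;=\; \sum_{i=1}^{n} h_i\,X_i \;\in\;\mathbb{F}_q^{\mathcal{X}_l},
\]
which costs exactly $\mathcal{X}_l$ symbols per file-block and hence normalized rate $\mathcal{X}_l$. Node $i$ uses its side information to subtract the known terms $h_j X_j$ for $j\notin\mathcal{N}^+(i)$, leaving the residual $\sum_{j\in\mathcal{N}^+(i)} h_j X_j$, which collapses by colors into $\sum_{\ell\in c(\mathcal{N}^+(i))} H_{\ell,:}\,Z_\ell$ with $Z_\ell=\sum_{j\in\mathcal{N}^+(i),\,c(j)=\ell} X_j$. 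Since $|c(\mathcal{N}^+(i))|\le\mathcal{X}_l$ and any $\mathcal{X}_l$ rows of $H$ are linearly independent, node $i$ can invert this system and recover every $Z_\ell$.

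The main obstacle is then to cleanly extract $X_i$ itself from these color sums, and this is precisely where the properness of $c$ is exploited: no interferer of $i$ shares its color, so the color class inside $\mathcal{N}^+(i)$ indexed by $\ell=c(i)$ consists of $i$ alone, giving $Z_{c(i)}=X_i$ directly. The two ingredients thus play complementary roles—the MDS property of $H$ compresses the broadcast to $\mathcal{X}_l$ symbols while keeping the induced vectors linearly independent in every anti-outneighborhood, while the properness of $c$ isolates each node's own file inside its own color class. The only non-trivial technical point is choosing $q$ large enough for such an $H$ to exist, which is standard and places no restriction on the final bound.
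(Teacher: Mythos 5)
Your proof is correct and is a faithful reconstruction of the argument in the cited reference (Shanmugam, Dimakis, and Langberg): partition nodes by a proper coloring attaining $\mathcal{X}_l$, pick an MDS-style matrix so that any $\mathcal{X}_l$ color vectors are linearly independent, broadcast the weighted sum, cancel known terms using side information, and use properness to isolate $X_i$ as the singleton color class in $\mathcal{N}^+(i)$. The paper itself does not prove this lemma but simply cites it as Theorem~1 of \cite{localchromatic2013}, so your reconstruction is the expected content and matches the source's approach essentially verbatim.
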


We can represent an $n$ node ICP by an equivalent $n$ {vertex} directed side information graph $\mathcal{G}$,   where each vertex corresponds to a unique node, and there exists an edge from Vertex $i$  to Vertex $j$ if Node $i$ has File $j$ (Node $j$'s requested file). 

There are several available lower bounds for the ICP. In particular, we use Lemma \ref{lemma:lowerbound} to derive a lower bound on the server transmission rate for our multi-access CCDN and it follows from \cite[Corollary 1]{arbabjolfaei2013capacity}.

%\end{lemma}
\begin{lemma}\label{lemma:lowerbound}
	Consider an $n$ node ICP with side information graph $\mathcal{G}$. Let $M_i$ be the file requested by Node $i$ and $S(\mathcal{G})$ be the minimum broadcast rate. Then, for any subset $\mathcal{J}\subseteq[n]$ such that the subgraph of $\mathcal{G}$ induced by the vertices in $\mathcal{J}$ does not contain a directed cycle, we have $$S(\mathcal{G})\geq\sum_{j\in\mathcal{J}}|M_j| \ . $$ % where $|M_j|$ indicates the size of file $ M_j$.
\end{lemma}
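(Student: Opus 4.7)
The statement is the classical \emph{acyclic subgraph bound} for index coding, so the plan is a standard information-theoretic cut-set style argument built on the chain rule of entropy, together with the acyclicity of the induced subgraph on $\mathcal{J}$.

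First, I would use the hypothesis that the subgraph of $\mathcal{G}$ induced by $\mathcal{J}$ contains no directed cycle to topologically order its vertices as $j_1, j_2, \ldots, j_m$, where $m = |\mathcal{J}|$, such that for every $a < b$ there is no edge from $j_a$ to $j_b$ inside the induced subgraph. Translating this back to the ICP, it says: whenever $a < b$, node $j_a$ does \emph{not} have $M_{j_b}$ in its side information $Y_{j_a}$. Consequently, $Y_{j_a} \subseteq \{M_{j_1}, \ldots, M_{j_{a-1}}\} \cup W$, where $W$ denotes the collection of all files indexed outside $\mathcal{J}$.

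Second, let $X$ denote the server's broadcast codeword, with $H(X) \le S(\mathcal{G})$ (in file-size units). Treating the files as independent and uniformly distributed, so that $H(M_j) = |M_j|$ and the files in $\mathcal{J}$ are independent of $W$, I would expand
\begin{align*}
\sum_{a=1}^m |M_{j_a}| \;=\; H(M_{j_1}, \ldots, M_{j_m}) \;=\; H(M_{j_1}, \ldots, M_{j_m} \mid W),
\end{align*}
and then apply the chain rule to obtain
\begin{align*}
H(M_{j_1}, \ldots, M_{j_m} \mid W)
\;\le\; H(X \mid W) + \sum_{a=1}^m H\!\left(M_{j_a} \,\big|\, X, M_{j_1}, \ldots, M_{j_{a-1}}, W\right).
\end{align*}

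Third, I would argue that each conditional entropy in the sum vanishes. Node $j_a$ must decode $M_{j_a}$ from $(X, Y_{j_a})$, so $H(M_{j_a} \mid X, Y_{j_a}) = 0$; since $Y_{j_a} \subseteq \{M_{j_1}, \ldots, M_{j_{a-1}}\} \cup W$ and extra conditioning cannot increase entropy, the corresponding term in the sum is also zero. This yields $\sum_{a=1}^m |M_{j_a}| \le H(X \mid W) \le H(X) \le S(\mathcal{G})$, which is the desired bound.

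The only genuinely delicate step is the bookkeeping that lets me fold in all of $\{M_{j_1}, \ldots, M_{j_{a-1}}\} \cup W$ as conditioning even though $Y_{j_a}$ may be a strict subset; once acyclicity gives the right topological order, the rest is routine chain-rule manipulation.
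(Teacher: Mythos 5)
Your proof is correct, and the information-theoretic argument you give — reverse topological ordering of the acyclic induced subgraph, chain-rule expansion of the joint entropy of the requested files, and using decodability together with the fact that each node's side information is swallowed by the conditioning set — is exactly the standard derivation of the acyclic induced subgraph (MAIS) bound for index coding. Note that the paper does not supply its own proof of this lemma; it imports it by citation from Arbabjolfaei et al.\ (\cite[Corollary 1]{arbabjolfaei2013capacity}), so there is no in-paper proof to compare against. Your argument would serve as a self-contained replacement for that citation. One small presentational point: the displayed inequality you label as ``apply the chain rule'' in fact combines the chain rule with two standard steps, namely $H(M_{j_1},\ldots,M_{j_m}\mid W)\le H(X,M_{j_1},\ldots,M_{j_m}\mid W)$ and then the chain-rule split of the right side; spelling those two sub-steps out would make the derivation airtight, though the conclusion is unaffected.
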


\section{Our Policy} \label{sec:policy}

In this section, we 
describe our policy with an example. 
   Our policy can have a lower server transmission rate than the coloring-based policy proposed in \cite{hachem2017codedmulti}. Our placement and delivery policies, both are different to the coloring-based policy. Our delivery policy is based on a solution to an appropriately defined ICP. %the  ICP described in Section \ref{sec:prelim}. 

Consider a general $(N, K, L)$-CCDN setup. We first describe our achievable scheme for corner points, i.e., for memory points $M=iN/K$, where $i\in \Big[\big\lceil\frac{K}{L}\big\rceil\Big]\cup\{0\}$. 
%\begin{itemize}
\subsection{$i=0$}
%\item{$i=0$}
%\begin{itemize}
%	\item  $i=0$	
	 If $i=0$, then $M=0$, i.e.,  no file parts are stored in the cache. The worst-case request pattern is all users requesting different files. Hence, $R(0)=K\text{ units.}$ 
\subsection{$i\in \big[\big\lfloor\frac{K}{L}\big\rfloor\big]$}
	 %\item $i\in \big[\big\lfloor\frac{K}{L}\big\rfloor\big]$ 
	 
	 First, we define the term $\hat{\mathcal{S}}$ as follows:
	 \begin{align}\label{shat}
	 \hat{\mathcal{S}}=\{s=\{a_1,a_2,...,a_i\}\subseteq[K]:|a_j-a_l|\geq L,  |K-|a_j-a_l||\geq L \text{ }\forall 1\leq j\neq l\leq i\}.
	 \end{align}
	 In words, $\hat{\mathcal{S}}$ is the collection of subsets $s$ of $[K]$ which satisfy (\emph{i}) $|s|=i$, and 
	 (\emph{ii}) if { $i>1$}, every two distinct elements $a_j,a_l$ of $s$ satisfy $|a_j-a_l|\geq L$ and $|K-|a_j-a_l||\geq L$.
%	 Let $\hat{\mathcal{S}}$ be the collection of subsets $s$ of $[K]$ which satisfy (\emph{i}) $|s|=i$, and 
%	 (\emph{ii}) if { $i>1$}, every two elements $(j,l)$ of $s$ satisfy $|j-l|\geq L$ and $|K-|j-l||\geq L$. {\color{blue}Mathematically $\hat{\mathcal{S}}$ is defined as}
	 %\end{itemize}
	 
	 \subsubsection{Placement policy} Divide each file into $|\hat{\mathcal{S}}|={K-iL+i-1 \choose i-1}\frac{K}{i}$ parts, with one subfile corresponding to each subset $s \in \hat{S}$. Store the subfile corresponding to set $s$ in all the $i$ caches whose index belongs to $s$. 
	 
	  The above policy creates overlaps in the cache contents to increase coded-multicasting opportunities while ensuring that there are no redundant copies. For example, a user should not have two copies of the same subfile amongst its accessible caches. 
	  
	 \subsubsection{Delivery policy} After users have revealed their requests, form an instance of the ICP with the  file parts which are unavailable locally. The server transmits messages according to the solution of the ICP. 
	 \subsection{$i= \big\lceil\frac{K}{L}\big\rceil$}
	% \item $i= \big\lceil\frac{K}{L}\big\rceil$
	 
	 We first create a list (of size $\big\lceil\frac{K}{L}\big\rceil\times N$ elements) by repeating the sequence $\{1,2,...,N\}$ $MK/N=\big\lceil\frac{K}{L}\big\rceil$ times, i.e., $\{1,2,...N-1,N,1,2,...N-1,N,1,2,...,\}$. Then we fill the caches according to the list sequentially. Note that  the total memory required to fit the list is $i\times N$ units, which is equal to our total cache memory. Hence, the memory constraint is satisfied. 
	This storage policy makes sure that  each user has access to all files in the central server's catalog.  Hence, the worst-case transmission rate is 0 units.
	  
	  \textbf{\emph{Example:}} For ($N=3,K=3,L=2$) $-$ CCDN, at $i=2$,  the list is $\{1,2,3,1,2,3\}$ and Cache 1 is filled with files 1 and 2, Cache 2 is filled with files 3 and 1, Cache 3 is filled with files 2 and 3. Observe that with this storage policy, each user has access to all the files. Hence, the worst-case transmission rate is 0 units.
%\end{itemize}
For the remaining points $(M\neq iN/K)$, our achievable rate is derived by memory sharing. 
%\vspace{-0.5in}
%\section*{}

Now, we illustrate our policy with an example.

\textbf{\emph{Example:}} ($N=4,K=4,L=2$) $-$ CCDN\\
Consider an example with $N=4$ files, $K=4$ users / caches and each user connected to $L=2$ caches, as shown in Figure \ref{fig:m1storage}. 
%\textbf{Achievability:}
We discuss the achievability for  $\bigg|\Big[\big\lceil\frac{K}{L}\big\rceil\Big]\cup\{0\}\bigg|=3$ memory points $M=\{0,1,2\}$. At the remaining points, achievable rate is obtained by memory sharing. %Let the $4$ files be denoted by $\mathcal{F}_1=A$, $\mathcal{F}_2=B$, $\mathcal{F}_3=C$, and $\mathcal{F}_4=D$.
%\begin{figure}[t]
%	\begin{center}
%		\includegraphics[scale=0.33]{setup1b}
%		\caption{\sl A multi-access CCDN consisting of $N$ files, $K$ caches, each of size $M$ units, and $K$ users, each user is connected to $L=2$ caches.   \label{fig:problemsetupmultiaccess}}
%	\end{center}
%	\setlength{\belowcaptionskip}{-1in}
%\end{figure}
%{\usepackage{caption}
%\captionsetup{justification=centering}
\begin{figure}[t]
	\begin{center}
		\includegraphics[scale=0.45]{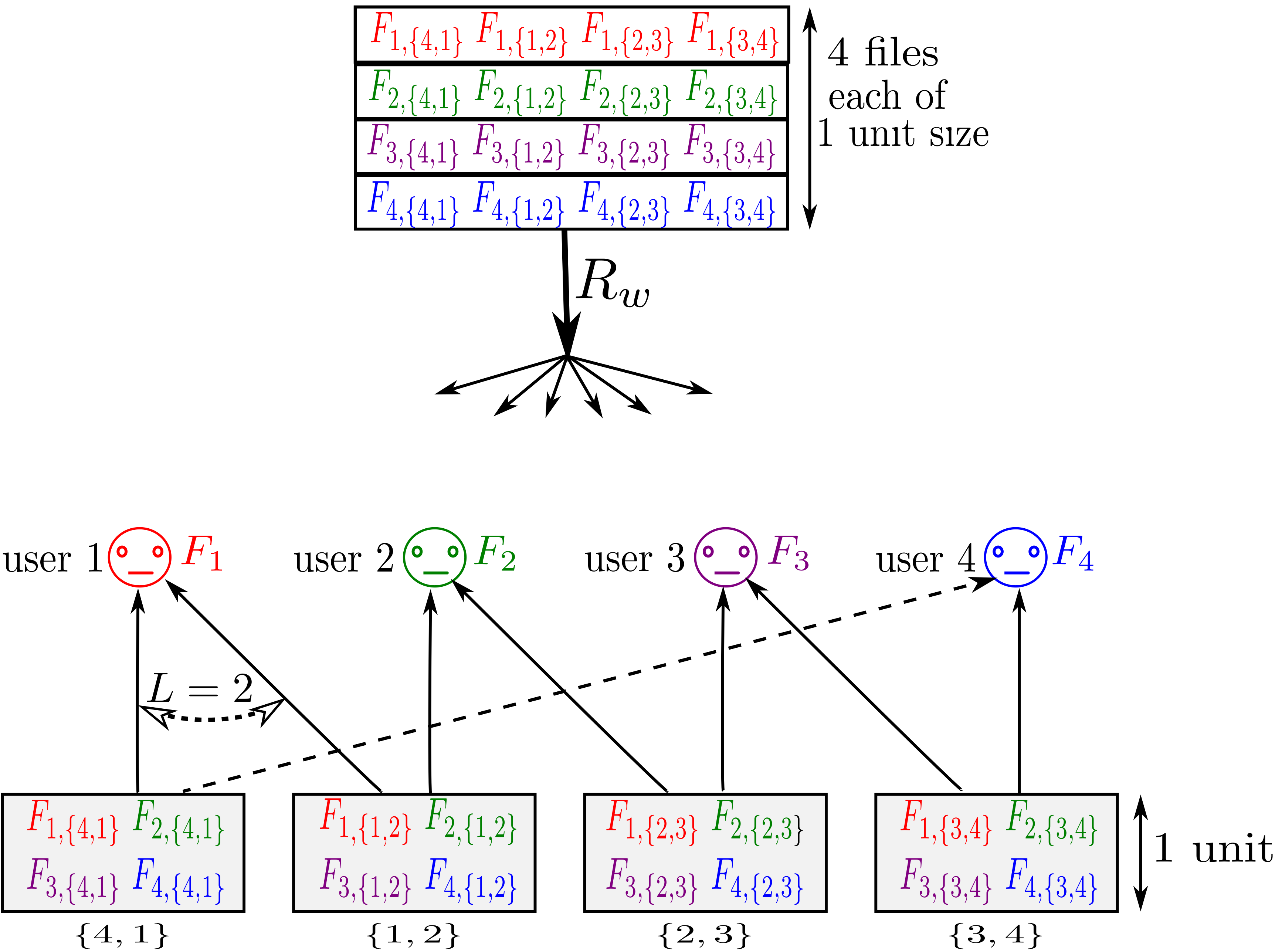}
		 \caption{Illustration of storage policy for ($N=4,K=4,L=2$)$-$CCDN at memory point $M=1$.} \label{fig:m1storage}
	\end{center}
\vspace{-0.5in}
\end{figure}
\addtocounter{subsection}{-3} 
\subsection{$M=0$ units} When M = 0, no file parts are stored in the cache. The
worst case request pattern is all users requesting different files.
Hence, $R(0)=4\text{ units.}$
\subsection{$M=1$ unit}
%\begin{enumerate}
%	\item[A.] $M=0$ $\hspace{1in}R(0)=4\text{ units.}$
%	\item[B.] $M=1$
	
	\subsubsection{Placement policy}
    At $M=1$, $i=\frac{MK}{N}=1$. Hence, $\hat{\mathcal{S}}=\{\{1\},\{2\},\{3\},\{4\}\}$.
	Split each file into $|\hat{\mathcal{S}}|=4$ parts and store the $1^{st}$ part of each file in Cache 1, the $2^{nd}$ part of each file in Cache 2, and so on as shown in Figure \ref{fig:m1storage}. Observe that the memory constraint ($M=1$) is satisfied. Since the $1^{st}$ cache is connected to User 4 and User 1, we subscript the stored content in Cache 1 with $\{4,1\}$, and repeat the same procedure for the other caches as well, i.e., 
	\begin{itemize}
		\item Cache 1 stores $\mathcal{F}_{1,\{4,1\}}$, $\mathcal{F}_{2,\{4,1\}}$, $\mathcal{F}_{3,\{4,1\}}$ and $\mathcal{F}_{4,\{4,1\}}$,
		\item Cache 2 stores $\mathcal{F}_{1,\{1,2\}}$, $\mathcal{F}_{2,\{1,2\}}$, $\mathcal{F}_{3,\{1,2\}}$ and $\mathcal{F}_{4,\{1,2\}}$,
		\item Cache 3 stores $\mathcal{F}_{1,\{2,3\}}$, $\mathcal{F}_{2,\{2,3\}}$, $\mathcal{F}_{3,\{2,3\}}$ and $\mathcal{F}_{4,\{2,3\}}$,
		\item Cache 4 stores $\mathcal{F}_{1,\{3,4\}}$, $\mathcal{F}_{2,\{3,4\}}$, $\mathcal{F}_{3,\{3,4\}}$ and $\mathcal{F}_{4,\{3,4\}}$.
	\end{itemize}
	\remove{
		Hence,
		\begin{itemize}
			\item {\color{blue}$A_{\{4,1\}}$, $B_{\{4,1\}}$, $C_{\{4,1\}}$, $D_{\{4,1\}}$, $A_{\{1,2\}}$, $B_{\{1,2\}}$, $C_{\{1,2\}}$, and $D_{\{1,2\}}$ are available {\color{blue}to} User 1. {\color{blue}($\because$ Subscripts contain 1).}}
			\item $A_{\{1,2\}}$, $B_{\{1,2\}}$, $C_{\{1,2\}}$, $D_{\{1,2\}}$, $A_{\{2,3\}}$, $B_{\{2,3\}}$, $C_{\{2,3\}}$, and $D_{\{2,3\}}$ are available {\color{blue}to} User 2.
			\item $A_{\{2,3\}}$, $B_{\{2,3\}}$, $C_{\{2,3\}}$, $D_{\{2,3\}}$, $A_{\{3,4\}}$, $B_{\{3,4\}}$, $C_{\{3,4\}}$, and $D_{\{3,4\}}$ are available {\color{blue}to} User 3.
			\item $A_{\{3,4\}}$, $B_{\{3,4\}}$, $C_{\{3,4\}}$, $D_{\{3,4\}}$, $A_{\{4,1\}}$, $B_{\{4,1\}}$, $C_{\{4,1\}}$, and $D_{\{4,1\}}$ are available {\color{blue}to} User 4.
		\end{itemize}
	}
	\subsubsection{Delivery policy}
	%	Let the request pattern be $\{A,B,C,D\}$. User 1 needs the files parts $\{A_{2,3}, A_{3,4}\}$, where, subscripts indicate file parts available at those users. Similarly, User 2 needs the file parts $\{B_{3,4}, B_{4,1}\}$, User 3 needs the file parts $\{C_{4,1}, C_{1,2}\}$, and User 4 needs the file parts $\{D_{1,2}, D_{2,3}\}$. 	
	Let the user request profile be $\{d_1,d_2.d_3,d_4\}$. In terms of subfiles,
	\begin{itemize}
		\item User 1 needs {\boldmath{\color{red}$\mathcal{F}_{d_1,\{4,1\}}$, $\mathcal{F}_{d_1,\{1,2\}}$,}} $\mathcal{F}_{d_1,\{2,3\}}$ and $\mathcal{F}_{d_1,\{3,4\}}$,
		\item Uset 2 needs {\boldmath{\color{red}$\mathcal{F}_{d_2,\{1,2\}}$, $\mathcal{F}_{d_2,\{2,3\}}$,}} $\mathcal{F}_{d_2,\{3,4\}}$ and $\mathcal{F}_{d_2,\{4,1\}}$,
		\item User 3 needs {\boldmath{\color{red}$\mathcal{F}_{d_3,\{2,3\}}$, $\mathcal{F}_{d_3,\{3,4\}}$,}} $\mathcal{F}_{d_3,\{4,1\}}$ and $\mathcal{F}_{d_3,\{1,2\}}$,
		\item User 4 needs {\boldmath{\color{red}$\mathcal{F}_{d_4,\{3,4\}}$, $\mathcal{F}_{d_4,\{4,1\}}$,}} $\mathcal{F}_{d_4,\{1,2\}}$ and $\mathcal{F}_{d_4\{2,3\}}$.
	\end{itemize}
	Note that the red color (bold font) subfiles are already available at the corresponding users, only the black color (normal font) subfiles are needed for them. Each user requires 2 subfiles and thus a total 8 subfiles are involved in the server transmission. We can map the problem here to an instance of the index coding problem described in Section~\ref{sec:prelim}, with $n=8$  nodes, each one requesting a distinct subfile. The side information at the  node representing (and requesting) some Subfile $i$ are the subfiles available to the  user which is requesting Subfile $i$. For example, the side information of the  node representing Subfile $\mathcal{F}_{d_1,\{3,4\}}$ are the subfiles available to User 1, i.e., $\mathcal{F}_{d_2,\{4,1\}}$, $\mathcal{F}_{d_3,\{4,1\}}$, $\mathcal{F}_{d_3,\{1,2\}}$, and $\mathcal{F}_{d_4,\{1,2\}}$.  
	We can solve this index coding problem to get the achievable transmission rate for our proposed scheme. Note that the number of messages transmitted by the central server in the CCDN is equal to the number of messages transmitted in the ICP, and the size of each message is equal to the size of a  subfile. 
 
 	To understand the structural properties of the above ICP, we form a $4\times2$ table (see Table \ref{Tab:4421}), such that the $p^{th}$ row and $q^{th}$ column contains Node ${F}_{d_p,\{<p+q>,<p+q+1>\}}$, i.e., the subfile requested by User $p$ and available at users $<p+q>,<p+q+1>$.  We refer to this entry as the Node $(p,q)$  where, $p\in[4],q\in[2]$. Note that the entries in the Row $p$ are the subfiles needed for User $p$.
 	We will use the notation $(p,\bar{q})$ to represent all the other nodes in Row $p$ excluding Node $(p,{q})$. For our example $(p,\bar{1})$ represents Node $(p,2)$ and $(p,\bar{2})$ represents Node $(p,1)$. 
 	
 		\begin{table}[h]
 				\vspace{-0.25in}
 					\begin{minipage}{.55\linewidth}
 			%\caption{}
 			\centering
 			%\resizebox{\linewidth}{!}{
 			\begin{tabular}{| c | c | }
 				\hline 
 				$\mathcal{F}_{d_1,\{2,3\}}$ & $\mathcal{F}_{d_1,\{3,4\}}$ \\
 				\hline 
 				$\mathcal{F}_{d_2,\{3,4\}}$ & $\mathcal{F}_{d_2,\{4,1\}}$ \\
 				\hline
 				$\mathcal{F}_{d_3,\{4,1\}}$ &  $\mathcal{F}_{d_3,\{1,2\}}$\\
 				\hline 
 				$\mathcal{F}_{d_4,\{1,2\}}$ &  $\mathcal{F}_{d_4,\{2,3\}}$ \\
 				\hline
 			\end{tabular}
 			\vspace{0.2cm}
 			\caption{ICP for $(N=4,K=4,L=2)$-CCDN. Row $p$ contains the subfiles needed for User $p$.}  \label{Tab:4421}
 					\end{minipage}%
 				\hspace{0.05\linewidth}
 		\begin{minipage}{.41\linewidth}
 			%\caption{}
 			\centering
 			%\resizebox{\linewidth}{!}{
 			\begin{tabular}{| c | c | }
 				\hline 
 				1 & 3\\
 				\hline 
 				2 & 4\\
 				\hline
 				3 &  1\\
 				\hline 
 				4 &  2 \\
 				\hline
 			\end{tabular}
 			\vspace{0.2cm}
 			\caption{coloring scheme for Table \ref{Tab:4421}}  \label{Tab:4422}
 		\end{minipage}
 	\vspace{-0.5in}
 			%\vspace{0cm}
 			%\caption{Global caption}
 	\end{table}
 Recall from Section \ref{sec:prelim} that the minimum broadcast rate of an ICP is upper bounded by its local chromatic number. A coloring scheme for an ICP  is said to be \textit{proper} if no node shares its color with any of its interfering nodes. %For Node $(p,q)$ in Table \ref{Tab:4421},
 Observe that for  Node $(p,q)$ in Table \ref{Tab:4421},  among the other nodes, only nodes $(p,\bar{q})$,   $(<p+1>,1)$, and  $(<p-1>,2)$ do not contain $p$ in the set which is represented by their subscripted rectangular brackets, and are thus interfering nodes. We take $4$ colors and for each $p\in[4]$ assign Color $p$ to nodes $(p,1)$ and $(<p-2>,2)$.  The coloring scheme is shown in Table \ref{Tab:4422}. The interfering nodes for nodes in the Row $i$ are present in rows $i-1$, $i$, $i+1$ but we repeat the color of Node ($i,1$) for its non-interfering Node ($<i-2>,2$) and the color of Node ($i,2$) for its non-interfering Node ($<i+2>,1$). So, this coloring scheme ensures that none of the nodes share its color with any of its interfering nodes.   Hence, this is a proper coloring scheme. 

	{The local chromatic number of an ICP is defined as the maximum number of different colors that appear in the closed anti-outneighborhood of any node, minimized over all proper colorings.} Observe that {for Node ($p,q$), 3 colors appear in the closed anti-outneighborhood, one color for Node ($p,q$), one color for Node ($p,\bar{q}$), one common color for nodes ($<p+1>,1$) and ($<p-1>,2$)}. Hence, an upper bound on  the local chromatic number of  the ICP is 3. Therefore, the central server transmits at most 3 messages to serve all the users.

	In this case, it is possible to get an explicit characterization of the optimal server transmission scheme and it is given by: 
	\begin{align*}
	&\mathcal{F}_{d_1,\{2,3\}}\oplus \mathcal{F}_{d_2,\{4,1\}}\oplus \mathcal{F}_{d_3,\{1,2\}} \oplus \mathcal{F}_{d_4,\{1,2\}},\\
	&\mathcal{F}_{d_2,\{3,4\}}\oplus \mathcal{F}_{d_3,\{1,2\}}\oplus \mathcal{F}_{d_4,\{2,3\}} \oplus \mathcal{F}_{d_1,\{2,3\}}, \\
	&\mathcal{F}_{d_3,\{4,1\}}\oplus \mathcal{F}_{d_4,\{2,3\}}\oplus \mathcal{F}_{d_1,\{3,4\}} \oplus \mathcal{F}_{d_2,\{3,4\}}.
	%	&\eqref{eq:4421} \oplus \eqref{eq:4422} \oplus \eqref{eq:4423}=D_{\{1,2\}}\oplus A_{\{3,4\}}\oplus B_{\{4,1\}}\oplus C_{\{4,1\}}. \label{eq:4424}
	\end{align*}
	It can be verified that using the above server transmissions and the accessible cache contents, each user can recover its requested file. 
	%Also, while we have illustrated the delivery scheme for the request profile $\{A,B,C,D\}$, a similar scheme works for any other request pattern say $\{X_1,X_2,X_3,X_4\}$ by substituting $A=X_1, B=X_2, C=X_3, D=X_4$ in equations \eqref{eq:4421},\eqref{eq:4422}, and \eqref{eq:4423}. 
	Since each message is of size $1/4$ units, the total server transmission size is given by $R(1)=3/4 \text{ units.}$ 
	\remove{Since, User 1 knows the file parts $\{B_{\{4,1\}},C_{\{1,2\}},C_{\{4,1\}}, \text{ and } D_{1,2}\}$ (recall that the subscripts represent the file part is available at corresponding users), User 1 gets $A_{\{2,3\}}$ from Equation \eqref{eq:4421} and $A_{\{3,4\}}$ from Equation \eqref{eq:4424}. Similarly, User 2 gets $B_{\{3,4\}}$ and $B_{\{4,1\}}$ from equations \eqref{eq:4421} and \eqref{eq:4422} and so on. Hence, by transmitting 3 messages \eqref{eq:4421},\eqref{eq:4422} and \eqref{eq:4423}, each user get their requested file in $\{A,B,C,D\}$ request pattern. 
		
		%For any other request pattern say $\{X_1,X_2,X_3,X_4\}$, substitute $A=X_1, B=X_2, C=X_3, D=X_4$ in equations \eqref{eq:4421},\eqref{eq:4422}, and \eqref{eq:4423} and recover the corresponding user requests.
	}
\subsection{$M=2$ units}
%	\item[C.] $M=2$
	
	Store files $\mathcal{F}_{1}$, $\mathcal{F}_2$ in Cache 1, Cache 3 and $\mathcal{F}_3$, $\mathcal{F}_4$ in Cache 2, Cache 4. Since each user has access to all files, the worst case transmission rate is $R(2)=0 \text{ units.}$
%\end{enumerate}

The transmission rate $R(M)$ at intermediate values is given by memory-sharing and thus the achievable rate-memory trade-off of our scheme is given by
\begin{align}
 R^*(M)\leq R(M) = \begin{cases}
4-{13M}/{4} & \text{if } 0\leq M\leq 1,  \\
{3}/{2}-{3M}/{4} & \text{if } 1\leq M\leq 2,\\
0 & \text{if } M\geq 2.
\end{cases}
\end{align}
On the other hand, the achievable rate ${R}_{color}(M)$ for the coloring-based scheme proposed in \cite{hachem2017codedmulti} is given by 
\begin{align}
 R^*(M)\leq {R}_{color}(M) = \begin{cases}
4-3M & \text{if } 0\leq M\leq 1,  \\
2-M & \text{if } 1\leq M\leq 2,\\
0 & \text{if } M\geq 2.
\end{cases}
\end{align}
Figure \ref{fig:442i} shows the comparison between the performance of the two schemes and demonstrates the improvement in rate using our modified placement scheme and an associated index coding-based delivery scheme. 

\remove{\begin{remark}\label{remark:k4u}
	For a system with $N$ files, $K=4$ caches, $K=4$ users, each one is connected to $L=2$ users,  
	\begin{itemize}
		\item  by considering $M=\{0,\frac{N}{4},\frac{2N}{4}\}$ as corner points, and following similar steps, we get
		\begin{align*}
		\therefore R^*(M)\leq R(M) = \begin{cases}
		4-\frac{13}{N}M, & \text{if } 0\leq M\leq \frac{N}{4},  \\
		\frac{3}{2}-\frac{3}{N}M, & \text{if } \frac{N}{4}\leq M\leq \frac{2N}{4},\\
		0 & \text{if } M\geq \frac{2N}{4}.
		\end{cases}
		\end{align*}
		
		\item by coloring scheme, we get
		\begin{align*}
		\therefore R^*(M)\leq \widetilde{R}(M) = \begin{cases}
		4-\frac{12}{N}M, & \text{if } 0\leq M\leq \frac{N}{4},  \\
		2-\frac{4}{N}M, & \text{if } \frac{N}{4}\leq M\leq \frac{2N}{4},\\
		0 & \text{if } M\geq \frac{2N}{4}.
		\end{cases}
		\end{align*}
	\end{itemize}
	The difference is 
	\begin{align*}
	\widetilde{R}(M)-R(M) = \begin{cases}
	\frac{1}{N}M, & \text{if } 0\leq M\leq \frac{N}{4},  \\
	\frac{1}{2}-\frac{1}{N}M, & \text{if } \frac{N}{4}\leq M\leq \frac{2N}{4},\\
	0 & \text{if } M\geq \frac{2N}{4}.
	\end{cases}
	\end{align*}
\end{remark}

The difference between the achievable rates of  coloring-based policy and Our policy is
\begin{align*}
{R}_{color}(M)-R(M) = \begin{cases}
\frac{M}{4}, & \text{if } 0\leq M\leq 1,  \\
\frac{1}{2}-\frac{M}{4}, & \text{if } 1\leq M\leq 2,\\
0 & \text{if } M\geq 2.
\end{cases}
\end{align*}}

%\begin{figure}[t]
%	\begin{center}
%		\includegraphics[scale=0.15]{rmplot1}
%		\caption{Rate vs Memory trade-off for the $(N=4,K=4,L=2)$-CCDN setup where the number of files are $N=4$, the number of caches and the number of users are $K=4$, and each user is connected to $L=2$ caches.    \label{fig:exampleratememory}}
%	\end{center}
%	\vspace{-0.5in}
%\end{figure}

\section{Main Results}
\label{sec:results_multiacc}
Now we evaluate the performance of our policy  in this section and characterize the optimal rate-memory trade-off $R^*(M)$ for multi-access coded caching system. As mentioned in Section \ref{sec:prelim}, most of our results are based on ICP bounds.  The proofs of our results are relegated to Section \ref{sec:proofs}.
\subsection{New achievable rate for general ($N$, $K$, $L$)$-$CCDN}
First, we provide a new upper bound on $R^*(M)$ for the general ($N$, $K$, $L$)$-$CCDN using our (uncoded) placement and delivery policy.  
\begin{theorem}\label{thm_new}
	Consider the general ($N,K,L$)$-$ CCDN with $L\in[K]$. Let $M$ be the cache size, and $R^*(M)$ be the optimal rate-memory trade-off under the restriction of uncoded placement. Then for $M=\frac{iN}{K}$, $i\in\{0\}\cup\big[\big\lceil\frac{K}{L}\big\rceil\big]$,  
	\begin{align*}
	R^*(M)\leq R_{\text{new}}(M)= \begin{cases}
	K\Big(1-\frac{LM}{N}\Big)^2 & \text{ if } i\in\{0\}\cup\big[\big\lfloor\frac{K}{L}\big\rfloor \big], \\
	0 & \text{ if } i= \big\lceil\frac{K}{L}\big\rceil
	\end{cases}
	\end{align*}
	is achievable using an uncoded placement scheme. For general $0\leq M \leq \frac{N}{K}\big\lceil\frac{K}{L}\big\rceil$,  the lower convex envelope of these points is achievable.
\end{theorem}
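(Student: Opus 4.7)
The plan is to split the claim into three cases by the value of $i$ and cover intermediate $M$ by memory sharing.

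\emph{Trivial cases.} For $i=0$, the worst-case request has the $K$ users asking distinct files, and with no cached content the server must broadcast all of them, so $R\leq K = K(1-0)^2$. For $i=\lceil K/L\rceil$, the cyclic filling of Section~\ref{sec:policy} stores enough copies of every file that every length-$L$ arc of caches covers the entire library; hence $R=0$.

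\emph{Main case} $i\in[\lfloor K/L\rfloor]$. First I would verify the memory constraint for the placement based on $\hat{\mathcal{S}}$: by cyclic symmetry of $\hat{\mathcal{S}}$, each cache holds exactly $i|\hat{\mathcal{S}}|/K=\binom{K-iL+i-1}{i-1}$ subfiles of every file, using total memory $Ni/K=M$. Next, for any request profile, I would identify that user $p$ already possesses the subfiles $\mathcal{F}_{d_p,s}$ with $s\cap A_p\neq\emptyset$, where $A_p=\{<p>,<p{+}1>,\dots,<p{+}L{-}1>\}$ is its arc of accessible caches, so it still needs exactly those subfiles indexed by the $s\in\hat{\mathcal{S}}$ disjoint from $A_p$; symmetry ensures their number $\mu$ is independent of $p$.

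With the pending demands identified, I would map the delivery task onto the ICP of Section~\ref{sec:prelim}: its nodes are the $K\mu$ (user, needed-subfile) pairs, and the side information at node $(p,s)$ is given by the full content of user $p$'s $L$ accessible caches. By Lemma~\ref{lemma:upperbound}, the number of subfiles the server needs to transmit is at most the local chromatic number $\mathcal{X}_l$ of this ICP, hence $R\leq \mathcal{X}_l/|\hat{\mathcal{S}}|$. The main task is therefore to exhibit a proper coloring satisfying $\mathcal{X}_l\leq \frac{(K-iL)^2}{K}|\hat{\mathcal{S}}|$, at which point the target $R\leq (K-iL)^2/K = K(1-LM/N)^2$ falls out; non-corner memory points are then covered by memory sharing to yield the lower convex envelope.

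I would construct the coloring in the spirit of the $(N{=}4,K{=}4,L{=}2)$ example: pack nodes into color classes via a cyclic-shift rule, exploiting the fact that $(p_1,s_1)$ and $(p_2,s_2)$ can safely share a color whenever $s_1\cap A_{p_2}\neq\emptyset$ and $s_2\cap A_{p_1}\neq\emptyset$ (mutual side information). The main obstacle will be carrying this design through for general $(K,L,i)$: unlike the example, larger $i$ forces color classes bigger than pairs, and both verifying properness and tightly counting the colors appearing in each closed anti-outneighborhood require careful bookkeeping over which $s\in\hat{\mathcal{S}}$ meet or avoid each arc of length $L$. Confirming the precise bound $\mathcal{X}_l\leq\frac{(K-iL)^2}{K}|\hat{\mathcal{S}}|$ is where I expect most of the technical effort to lie.
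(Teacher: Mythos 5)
Your framework is sound up to the point where you set yourself the wrong construction. The reduction to an ICP, the memory-constraint accounting, the identification of the $\mu=\binom{K-iL+i-1}{i}$ missing subfiles per user, and the target $\mathcal{X}_l\le\frac{(K-iL)^2}{K}|\hat{\mathcal{S}}|=(K-iL)\binom{K-iL+i-1}{i}$ all match the paper. The gap is that you expect ``most of the technical effort'' to lie in designing a cyclic-shift packing of nodes into shared color classes, in the spirit of the $(4,4,2)$ example. That example and that style of coloring belong to the proof of Theorem~\ref{thm_ubexact}, which aims for the \emph{strictly smaller} target $\frac{(K-L)(K-L+1)}{2K}$ and is only carried out for a handful of special $L$. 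For Theorem~\ref{thm_new} the paper does not pack at all: it simply assigns every one of the $K\binom{K-iL+i-1}{i}$ nodes its \emph{own} color. That coloring is trivially proper, and the local chromatic number then equals the maximum cardinality of a closed anti-outneighborhood, with no further cleverness needed.

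The remaining work is therefore a plain counting argument, not a coloring design problem. Arrange the nodes in a $K\times\binom{K-iL+i-1}{i}$ table (Row $p$ = subfiles User $p$ still needs, with a cyclic shift from row to row). Each subfile's subscript, after expanding from cache indices to the union of the covered length-$L$ arcs, contains exactly $iL$ user indices (here one uses that the elements of $s\in\hat{\mathcal{S}}$ are pairwise at cyclic distance at least $L$, so the $i$ arcs are disjoint). By the cyclic symmetry, in each column the index $p$ appears in exactly $iL$ subscripts; those $iL$ nodes are the non-interfering ones for $(p,q)$ in that column, leaving $K-iL$ nodes per column (including $(p,q)$ itself in its own column) in the closed anti-outneighborhood. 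Summing over $\binom{K-iL+i-1}{i}$ columns gives $(K-iL)\binom{K-iL+i-1}{i}$, and dividing by $|\hat{\mathcal{S}}|$ yields $R\le K(1-LM/N)^2$. Had you pursued the packing route, you would either have to reprove the restricted constructions of Theorem~\ref{thm_ubexact} (which do not cover all $L$), or eventually fall back to the trivial coloring anyway once you notice it already hits the target.
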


{As mentioned earlier,  \cite{hachem2017codedmulti} gave an upper bound{\footnote{The proposed scheme had an uncoded placement phase for the case when $L$ divides $K$.}} on $R^*(M)$, which \remove{using a scheme with a coded placement phase. The upper bound}is given by $R_{\text{color}}(M)= \min\{N/M,$ $K\}\Big(\frac{1-LM/N}{1+KM/N}\Big)$. Our achievable rate in Theorem \ref{thm_new} can sometimes be significantly  better than $R_{\text{color}}(M)$, see example below.\remove{for example when  {\color{red}$M\geq N(\frac{1}{L}-\frac{1}{K})$}. The following example illustrates this gap. }}

\textbf{\emph{Example:}} For \big($N,K,L=\frac{K-\sqrt{K}}{2}$\big)$-$CCDN,   $R_{\text{new}}\big({2N}/{K}\big) =1$ is a constant whereas $R_{\text{color}}\big({2N}/{K}\big)$ $={\sqrt{K}}/{6}$ grows unbounded as the number of users in the system $K$ increases.

Next, we specialize the result in Theorem \ref{thm_new} to the case of $L \ge K/2$, which is the regime for our remaining results. %From Theorem \ref{thm_new}, the achievable rate at memory points $0$, $N/K$, and $2N/K$ is $K$, ${(K-L)^2}/{K}$, and $0$ respectively. Denote the convex envelope of these points by $R_{ub}(M)$, given as follows:  

\begin{corollary} \label{cor_ub}
	Consider the $(N, K, L)$$-$CCDN with $L\geq K/2$. Let $M$ be the  cache size, $R^*(M)$ be the optimal rate-memory trade-off under the restriction of uncoded placement, then %and $R_{ub}(M)$ be as defined in \eqref{Eqn:UB1}. Then for any cache size $M$, there exists an uncoded placement and delivery policy such that the rate-memory pair ($R_{ub}(M),M$) is achievable, i.e., 
	\begin{align}
	R^*(M)\leq R_{ub}(M)= \begin{cases}
	K-\Big[K-\frac{(K-L)^2}{K}\Big]\frac{MK}{N} & \text{if } 0\leq M\leq \frac{N}{K},  \\
	\frac{(K-L)^2}{K}(2-\frac{MK}{N}) & \text{if } \frac{N}{K}\leq M\leq \frac{2N}{K},\\
	0 & \text{if } M\geq \frac{2N}{K}.
	\end{cases}
	\label{Eqn:UB1}
	\end{align}
%	\begin{align*}
%	 R_{ub}(M) .
%	\end{align*}
\end{corollary}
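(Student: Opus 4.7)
The plan is to derive Corollary \ref{cor_ub} as a direct specialization of Theorem \ref{thm_new} to the regime $L \geq K/2$, by evaluating the achievable rate at the relevant corner memory points and then stitching them together via memory-sharing. First, I would observe that when $K/2 \leq L < K$, we have $1 < K/L \leq 2$, so $\lfloor K/L \rfloor = 1$ and $\lceil K/L \rceil = 2$. Consequently, Theorem \ref{thm_new} gives bounds only at the three corner points $M \in \{0,\, N/K,\, 2N/K\}$. (The boundary case $L=K$, where every user sees every cache, is trivial: the entire library fits in any single cache once $M \geq N/K \cdot \lceil K/L \rceil = N/K$, so the piecewise form in \eqref{Eqn:UB1} degenerates correctly.)

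Next I would plug in the three corner values. At $i=0$, the bound is $R \leq K$. At $i=1$ (i.e., $M=N/K$), Theorem \ref{thm_new} gives $R \leq K(1-L/K)^2 = (K-L)^2/K$. At $i = \lceil K/L \rceil = 2$ (i.e., $M=2N/K$), the theorem gives $R \leq 0$. Having these three achievable rate-memory pairs, I would then write the memory-shared (straight-line interpolation) bound on each of the two intervals $[0, N/K]$ and $[N/K, 2N/K]$, which upon simplification produces exactly the two expressions in the first two cases of \eqref{Eqn:UB1}; for $M \geq 2N/K$ the third case follows because zero rate at $M = 2N/K$ can be maintained by storing extra (unused) content.

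The only non-trivial step is verifying that this piecewise-linear curve really is the \emph{lower convex envelope} of the three corner points, i.e., that the slope on the first segment is no larger (more negative) than the slope on the second segment. The first-segment slope is $-(K/N)[K-(K-L)^2/K] = -L(2K-L)/N$ (using $K^2-(K-L)^2 = L(2K-L)$), while the second-segment slope is $-(K-L)^2/N$. Convexity therefore reduces to the inequality $L(2K-L) \geq (K-L)^2$, equivalently $2L(2K-L) \geq K^2$. Since $L \geq K/2$ forces $2L \geq K$ and $2K-L \geq K$, the product is at least $K^2$, so the inequality holds and the lower-convex-envelope form is valid. This convexity check is the main thing one has to verify, and it is the place where the hypothesis $L \geq K/2$ is genuinely used; everything else is arithmetic.
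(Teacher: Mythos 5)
Your proposal is correct and follows essentially the same route as the paper's proof: instantiate Theorem~\ref{thm_new} at the three corner points $M\in\{0,\,N/K,\,2N/K\}$ (using $L\geq K/2$ to conclude $\lfloor K/L\rfloor=1$ and $\lceil K/L\rceil=2$) and then memory-share. The one thing you add beyond the paper's one-line argument is the explicit check that the slopes are nondecreasing, so that the piecewise-linear interpolant really is the convex envelope; that verification is correct, though strictly speaking the upper bound $R^*(M)\le R_{ub}(M)$ already follows from pairwise memory-sharing between consecutive corner points whether or not the interpolant is convex.
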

 \begin{proof}
 	{\bf Proof:}
 	From Theorem \ref{thm_new}, the achievable rate at memory points $0$, $\frac{N}{K}$, and $\frac{2N}{K}$ is $K$, $\frac{(K-L)^2}{K}$, and $0$ respectively. The convex envelope of these points is $R_{ub}(M)$. Hence $R^*(M)\leq R_{ub}(M)$.
 \end{proof}
% \begin{proof}
%	From Theorem \ref{thm_new}, the achievable rate at memory points $0$, $N/K$, and $2N/K$ is $K$, ${(K-L)^2}/{K}$, and $0$ respectively. The convex envelope of these points is $R_{ub}(M)$. Hence $R^*(M)\leq R_{ub}(M)$.
%\end{proof}

%Note that the achievable rate in above theorem is equal to the  $R_{ub}(M)$ if $L\geq \frac{K}{2}$.
\subsection{Order optimality for $L\geq {K}/{2}$}
As mentioned before, the gap between the achievable rate $R_{\text{color}}(M)$ and the information-theoretic lower bound derived in \cite{hachem2017codedmulti} scales with $L$, i.e., ${R_{\text{color}}(M)}/{R_{\text{inf}}^*(M)}\leq cL$, where $R^*_{\text{inf}}(M)$ is the information-theoretically optimal rate-memory trade-off and $c$ is some constant. While the characterization (exact or order-optimal) of $R^*_{\text{inf}}(M)$ for a general ($N, K, L$)$-$CCDN remains open, our following results establish the order-optimal (up to a factor of $2$) uncoded placement rate-memory trade-off $R^*(M)$ for any $(N, K, L)-$CCDN with $L \ge K/2$. For this, we  provide an improved lower bound on the server transmission rate for any valid (uncoded) placement and delivery scheme in Theorem \ref{thm_lb} and use the upper bound stated in Corollary \ref{cor_ub}. We derive the lower bound by mapping our setup to an appropriate ICP and using converse arguments for the ICP to derive lower bounds on the server transmission rate for our setup. Let $R_{lb}(M)$ be defined as follows: 

{
\begin{align}
{R}_{lb}(M)= \begin{cases}
K-\Big[K-\frac{(K-L)(K-L+1)}{2K}\Big]\frac{MK}{N} & \text{if } 0\leq M\leq \frac{N}{K},  \\
\frac{(K-L)(K-L+1)}{2K}(2-\frac{MK}{N}) & \text{if } \frac{N}{K}\leq M\leq \frac{2N}{K},\\
0 & \text{if } M\geq \frac{2N}{K}.
\end{cases}
\label{Eqn:LB1}
\end{align}}

\begin{theorem} \label{thm_lb}
Consider the $(N, K, L)$-CCDN with $L \ge K/2$. Let $M$ be the cache size, $R^*(M)$ be the optimal {rate-memory} trade-off under the restriction of uncoded placement, and $R_{lb}(M)$ be as defined in \eqref{Eqn:LB1}. Then we have
	\begin{align*}
	R^*(M)&\geq R_{lb}(M) .
	\end{align*}
\end{theorem}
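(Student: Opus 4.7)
My strategy is the standard ICP-based converse, adapted to the multi-access geometry. Parametrize any uncoded placement by subfile sizes $x_{n,\mathcal{S}}\geq 0$, the size of the part of file $n$ stored exactly in the caches $\mathcal{S}\subseteq[K]$, subject to the file-size constraints $\sum_{\mathcal{S}} x_{n,\mathcal{S}}=1$ for every $n$ and the per-cache constraints $\sum_n\sum_{\mathcal{S}\ni c} x_{n,\mathcal{S}}\leq M$ for every cache $c$. Since $N\geq K$, fix any request profile $\mathbf{d}$ with $K$ distinct files; once the placement is in place, the delivery task becomes an ICP whose nodes are the required subfiles $\mathcal{F}_{d_u,\mathcal{S}}$ (with $\mathcal{S}\cap\mathcal{A}_u=\emptyset$, where $\mathcal{A}_u=[u:u+L-1]$), and the side information at node $\mathcal{F}_{d_u,\mathcal{S}}$ is the collection of subfiles stored in $\mathcal{A}_u$.

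\textbf{Acyclic subgraph construction.} For each cyclic shift $\sigma\in[K]$, set $u_k^{(\sigma)}:=\langle\sigma+k-1\rangle$ for $k=1,\ldots,K-L$ and define
\begin{align*}
B_k^{(\sigma)} \,:=\, \bigcap_{j=1}^{k} \bigl([K]\setminus\mathcal{A}_{u_j^{(\sigma)}}\bigr).
\end{align*}
Because $L\geq K/2$, each complement $[K]\setminus\mathcal{A}_{u_j^{(\sigma)}}$ is a contiguous cyclic block of size $K-L$ and the intersections shrink by exactly one cache at each step, so $B_k^{(\sigma)}$ is itself a contiguous cyclic block of size $K-L-k+1$. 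I include in the candidate acyclic set every subfile $\mathcal{F}_{d_{u_k^{(\sigma)}},\mathcal{S}}$ with $\mathcal{S}\subseteq B_k^{(\sigma)}$; ordered by $k$, no earlier user has any later node as side information (since $\mathcal{S}\subseteq B_k^{(\sigma)}\subseteq [K]\setminus\mathcal{A}_{u_j^{(\sigma)}}$ for every $j\leq k$), and nodes sharing a level share a requesting user and are thus mutually non-adjacent. The induced subgraph is therefore acyclic, and Lemma~\ref{lemma:lowerbound} yields
\begin{align*}
R \,\geq\, \sum_{k=1}^{K-L}\sum_{\mathcal{S}\subseteq B_k^{(\sigma)}} x_{d_{u_k^{(\sigma)}},\mathcal{S}}.
\end{align*}

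\textbf{Symmetrization and LP.} I then average the above family of $K$ shift-dependent inequalities together with a uniform average over all request profiles with distinct entries. The cyclic symmetry of the block structure together with file-symmetric averaging collapses the right-hand side into a linear combination $\sum_{s} c_s\,a_s$, where $a_s:=\sum_{n}\sum_{|\mathcal{S}|=s}x_{n,\mathcal{S}}$ aggregates the subfile sizes by storage-set size and the $c_s$ are explicit nonnegative combinatorial counts (the number of (shift, level) pairs in which a typical size-$s$ subset is contained in $B_k^{(\sigma)}$). The resulting lower bound on $R^*(M)$ is the minimum of $\sum_s c_s a_s$ over the polytope $\{a_s\geq 0:\,\sum_s a_s=N,\,\sum_s s\,a_s\leq KM\}$, a one-variable LP. Since the achievable region under uncoded placement with memory sharing is convex, $R^*(M)$ is a convex function of $M$ and it suffices to check the bound at the corner points $M\in\{0,\,N/K,\,2N/K\}$. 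The endpoints are trivial ($K$ distinct requests at $M=0$, and $0$ at $M=2N/K$); at $M=N/K$ the acyclic sum evaluated on the symmetric Maddah-Ali--Niesen placement $x_{n,\{c\}}=1/K$ already equals $\sum_{k=1}^{K-L}(K-L-k+1)/K=(K-L)(K-L+1)/(2K)$, which matches $R_{lb}(N/K)$ exactly.

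\textbf{Main obstacle.} The main technical hurdle is the LP step: one must rule out that some asymmetric or large-$|\mathcal{S}|$ placement beats $R_{lb}(M)$ under the averaged constraints. I expect this to follow from a swap/exchange argument showing the LP optimum is supported on two consecutive values of $s$ (namely $\{0,1\}$ for $M\leq N/K$ and $\{1,2\}$ for $N/K\leq M\leq 2N/K$), at which point both regimes reduce to direct calculation. The hypothesis $L\geq K/2$ is used in an essential way throughout: it is precisely what keeps every intersection $B_k^{(\sigma)}$ a single contiguous cyclic arc, which is what makes the combinatorial counting behind the coefficients $c_s$ tractable and tight; when $L<K/2$ the intersections can split into multiple arcs and the argument breaks down.
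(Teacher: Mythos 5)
Your overall strategy coincides with the paper's: parametrize an uncoded placement by subfile sizes, map the delivery problem for a distinct-demand request profile to an ICP, build a nested acyclic induced subgraph for each cyclic rotation of the user order, invoke Lemma~\ref{lemma:lowerbound}, and then symmetrize over rotations and demands to produce a linear lower bound in aggregated variables. Your nested sets $B_k^{(\sigma)}$ are exactly the paper's $[K]\setminus\{u_1,\ldots,u_k\}$ written in cache coordinates rather than user coordinates (a subfile is inaccessible to $u_1,\ldots,u_k$ iff its storage set lies in $\bigcap_{j\le k}([K]\setminus\mathcal{A}_{u_j})$), and your remark that $L\ge K/2$ forces each $B_k^{(\sigma)}$ to be a single contiguous arc of size $K-L-k+1$ is precisely the combinatorial fact the paper exploits.

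However, there are two genuine gaps. First, your nested chain runs only over $k=1,\ldots,K-L$, but the paper's acyclic set runs over all $k=1,\ldots,K$: for $k>K-L$ the block $B_k^{(\sigma)}$ is empty yet the node $\mathcal{F}_{d_{u_k},\emptyset}$ may still be included (it has no side information and creates no cycle). Dropping those last $L$ levels strictly under-weights the coefficient $c_0$ on $a_0$ (you lose an additive $L$ per rotation), so even a perfectly solved LP would only deliver $R\ge K-L$ at $M=0$ and an under-slope on $(0,N/K)$; handling $M=0$ ``trivially'' does not repair the interior. Second, and more importantly, the LP step is not actually carried out, and the partial check given is logically backwards: evaluating the objective $\sum_s c_s a_s$ at the specific Maddah-Ali--Niesen placement $x_{n,\{c\}}=1/K$ produces an \emph{upper} bound on the LP minimum, whereas the theorem requires the LP minimum to be \emph{at least} $R_{lb}(N/K)$. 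The paper sidesteps the LP minimization entirely: it exhibits two explicit nonnegative linear combinations of the constraints $\sum_s a_s = N$ and $\sum_s s\,a_s\le KM$ with the objective — one eliminating $(x_{0,0},x_{L,1})$ and one eliminating $(x_{L,1},x_{K,2})$ — each yielding a linear-in-$M$ lower bound valid for all $M$, whose pointwise maximum is $R_{lb}(M)$. Finally, the convexity remark as stated does not work: convexity of $R^*$ alone gives the wrong-direction inequality between corner points; what would actually justify checking only at $M\in\{0,N/K,2N/K\}$ is that the LP value is itself a piecewise-linear convex function of $M$ with no breakpoints strictly inside $(0,N/K)$ or $(N/K,2N/K)$, so it agrees with a piecewise-linear function when they agree at the breakpoints — but you neither state nor use this.
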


%Next, we provide an upper bound on $R^*(M)$ by proposing a (uncoded) placement and delivery scheme for our setup and analyzing its achievable server transmission rate. 
The following corollary compares the upper and lower bounds on  $R^*(M)$ from Corollary~\ref{cor_ub} and Theorem~\ref{thm_lb} respectively and gives an approximate characterization of the optimal rate-memory trade-off 
$R^*(M)$.
\begin{corollary}\label{cor_ob}
		Consider the $(N, K, L)$-CCDN with $L \ge K/2$. Let $M$ be the cache size, $R^*(M)$ be the optimal rate-memory trade-off under the restriction of uncoded placement and $R_{ub}(M)$ be as defined in \eqref{Eqn:UB1}. Then we have 
		$$\frac{R_{ub}(M)}{R^*(M)}\leq 2.$$
\end{corollary}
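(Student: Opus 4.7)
\textbf{Proof plan for Corollary~\ref{cor_ob}.}
The natural approach is to pipe the two bounds we already have: since Theorem~\ref{thm_lb} gives $R^*(M) \ge R_{lb}(M)$ on the entire regime $L \ge K/2$, it suffices to bound the purely analytic quantity
\[
\frac{R_{ub}(M)}{R^*(M)} \;\le\; \frac{R_{ub}(M)}{R_{lb}(M)}
\]
uniformly in $M$. Both $R_{ub}$ and $R_{lb}$ are piecewise linear on $[0,2N/K]$ with the same break points $\{0,\, N/K,\, 2N/K\}$, they agree at $M=0$ (value $K$) and at $M=2N/K$ (value $0$), and they differ only in the ``corner'' value at $M = N/K$: $R_{ub}(N/K) = (K-L)^2/K$ versus $R_{lb}(N/K) = (K-L)(K-L+1)/(2K)$. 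So the whole argument reduces to comparing these two slopes in each piece.

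On the regime $N/K \le M \le 2N/K$, both $R_{ub}$ and $R_{lb}$ have the common factor $(2 - MK/N)$, so the ratio is constant and equals
\[
\frac{(K-L)^2/K}{(K-L)(K-L+1)/(2K)} \;=\; \frac{2(K-L)}{K-L+1}.
\]
Since $K - L \ge 0$ implies $K-L+1 \ge K-L$, this ratio is at most $2$ (and is exactly $0/0 = 1$ in the degenerate case $L=K$, which I would treat separately or take as a limit). I would also note that for $M \ge 2N/K$ both sides are $0$, so $R_{ub}(M) = R^*(M) = 0$ and the stated inequality holds trivially.

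On the first regime $0 \le M \le N/K$, I would write the rates as
\[
R_{ub}(M) \;=\; K + (\alpha - K)t, \qquad R_{lb}(M) \;=\; K + (\beta - K)t,
\]
where $t = MK/N \in [0,1]$, $\alpha = (K-L)^2/K$, and $\beta = (K-L)(K-L+1)/(2K)$. A direct derivative computation for $f(t) = (K + (\alpha-K)t)/(K + (\beta - K)t)$ gives
\[
f'(t) \;=\; \frac{K(\alpha - \beta)}{\bigl(K + (\beta - K)t\bigr)^{2}},
\]
which is nonnegative because $\alpha \ge \beta$ (equivalent to $2(K-L) \ge K-L+1$, i.e.\ $K-L \ge 1$; if $K=L$ the two coincide). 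Hence $f$ is monotone on $[0,1]$, its maximum is $f(1) = \alpha/\beta = 2(K-L)/(K-L+1) \le 2$, and the bound carries over to this regime too.

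Putting the three regimes together yields $R_{ub}(M)/R_{lb}(M) \le 2$ for all $M \in [0,\infty)$, and combined with Theorem~\ref{thm_lb} this completes the proof. I do not anticipate any real obstacle here: the only things to be careful about are (i) verifying monotonicity on the first piece rather than just checking endpoints (since the ratio of two linear functions need not be monotone without the sign condition on $\alpha - \beta$), and (ii) handling the edge case $L=K$ where both bounds collapse to the same expression and the ratio is $1$ by definition or by a limiting argument.
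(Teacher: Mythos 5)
Your proposal is correct and follows essentially the same route as the paper: bound $R_{ub}(M)/R^*(M) \le R_{ub}(M)/R_{lb}(M)$ via Theorem~\ref{thm_lb}, then analyze the ratio of the two piecewise-linear functions on each interval, using the key inequality $(K-L)^2 \le (K-L)(K-L+1)$. The only cosmetic difference is on $[0, N/K]$, where you establish monotonicity of the ratio via a derivative argument and evaluate at the right endpoint, whereas the paper bounds the denominator directly by half the numerator using $K \ge K/2$ and $\frac{(K-L)(K-L+1)}{2K} \ge \frac{(K-L)^2}{2K}$ to get the factor $2$ in one algebraic step.
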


We are able to give an approximate characterization of the optimal rate-memory trade-off $R^*(M)$ for $L\ge K/2$, because of improvement in both the upper and lower bounds. 

\subsection{Exact optimality for some special cases}
While Corollary~\ref{cor_ob} provides an approximate characterization of $R^*(M)$, for any $L \ge K/2$,  we now present some special cases where we are able to derive it exactly. %{\color{blue}In prior work, \cite{reddy2018multiaccess} characterized $R^*(M)$ for some small examples and also for the case of $L = K -1$.  }
\begin{theorem} \label{thm_ubexact}
	Consider the $(N, K, L)$-CCDN with $L \ge K/2$. Let $M$ be the cache size, $R^*(M)$ be the optimal rate-memory trade-off under the restriction of uncoded placement, and $R_{lb}(M)$ be as defined in \eqref{Eqn:LB1}. 
Then for any cache size $M$, we have ${R}^*(M)= {R}_{lb}(M)$ for the following scenarios: %(a) $L=K-1$, (b). $L=K-2$, (c). $L$ = $K - 3$, $K$ is even, and (d). $L = K-K/s + 1$ for some positive integer $s$.
\begin{enumerate}
	\item $L=K-1$,
\item $L=K-2$,
\item $L$ = $K - 3$, $K$ is even,
\item {$L = K-K/s + 1$ for some positive integer $s$.}
\end{enumerate}
\end{theorem}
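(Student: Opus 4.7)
The plan is to establish each scenario by matching the lower bound $R_{lb}(M)$ from Theorem \ref{thm_lb} with a corresponding achievability argument. For $L \ge K/2$, $R_{lb}(M)$ is piecewise linear with breakpoints at $M \in \{0,\,N/K,\,2N/K\}$; the two extreme points are achieved trivially ($R = K$ at $M = 0$, and $R = 0$ at $M = 2N/K$ via the $i = \lceil K/L\rceil = 2$ scheme of Section \ref{sec:policy}), so the task reduces to achieving $R_{lb}(N/K) = (K-L)(K-L+1)/(2K)$ at $M = N/K$, with memory-sharing yielding the full envelope.

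At $M = N/K$ I will use the $i = 1$ placement from our scheme: split each file into $K$ equal subfiles $\mathcal{F}_{\ell,1},\ldots,\mathcal{F}_{\ell,K}$ and store $\mathcal{F}_{\ell,c}$ in cache $c$. After the users reveal their demands, each user $j$ is missing exactly $K-L$ subfiles of its requested file, namely those residing in the $K-L$ caches outside $[j : j+L-1]$. This induces an index coding problem on $(K-L)K$ nodes; by Lemma \ref{lemma:upperbound}, its broadcast rate is bounded above by its local chromatic number $\mathcal{X}_l$. Since every subfile has size $1/K$, it suffices to exhibit a proper coloring of the ICP with $\mathcal{X}_l \le (K-L)(K-L+1)/2$ in each listed case.

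For $L = K-1$ the $K$-node ICP has no interference whatsoever (user $j$'s single missing cache $j-1$ is accessible to every other user), so a one-color scheme is proper and $\mathcal{X}_l = 1$. For $L = K-2$, the $2K$-node ICP admits the cyclic color-class structure $\{(j, j-1),\,(j+2, j)\}_{j \in [K]}$: the two nodes in each class are mutually non-interfering, and within the closed anti-outneighborhood $\{(j-1,j-2),\,(j,j-2),\,(j,j-1),\,(j+1,j-1)\}$ of any node exactly three of these classes appear, giving $\mathcal{X}_l = 3$. For $L = K-3$ with $K$ even I will extend this pairing idea to the $3K$-node ICP, using the evenness of $K$ to match non-interfering triples along ``anti-diagonals'' of the $3 \times 3$ grid underlying each anti-outneighborhood, achieving $\mathcal{X}_l = 6$. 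For $L = K-K/s+1$ with $s \mid K$ the $K$ caches partition naturally into $s$ blocks of length $K/s$; I will label each ICP node by an unordered pair of offsets drawn from a $(K-L+1)$-element set derived from its cache and user positions, so that each color class contains $s$ mutually non-interfering nodes and exactly $\binom{K/s}{2} = (K-L)(K-L+1)/2$ colors appear per anti-outneighborhood.

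The main technical obstacle is the explicit construction and properness verification of the colorings in Cases 3 and 4, where the $(K-L) \times (K-L)$ grid of nodes within any anti-outneighborhood must be packed into $\binom{K-L+1}{2}$ classes along an ``anti-diagonal'' pattern while simultaneously respecting the cyclic interference across different user demands; the additional structure (parity of $K$ in Case 3, divisibility $s \mid K$ in Case 4) is exactly what allows the color classes to be aligned globally without chromatic collisions. Once these colorings are in hand, Lemma \ref{lemma:upperbound} yields the matching upper bound and, combined with Theorem \ref{thm_lb}, gives $R^*(M) = R_{lb}(M)$ in all four scenarios.
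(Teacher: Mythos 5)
Your proposal follows the same overall strategy as the paper: place by splitting each file into $K$ equal subfiles at $M=N/K$, map the delivery to an ICP on $K(K-L)$ nodes, upper-bound its broadcast rate via the local chromatic number (Lemma~\ref{lemma:upperbound}), and exhibit a proper coloring with $\mathcal{X}_l \le (K-L)(K-L+1)/2$ for each case, with memory-sharing filling in the rest. Cases 1 and 2 are correct and match the paper's colorings up to a relabeling from the paper's (row, column) indexing to your (user, cache) indexing; I verified that your cyclic classes $\{(j,j-1),(j+2,j)\}$ for $L=K-2$ are mutually non-interfering and yield exactly three colors in each closed anti-outneighborhood.

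However, the proposal leaves the hard part of the theorem --- the explicit colorings for cases 3 and 4 --- as sketches, and these sketches as stated do not describe workable constructions. For $L=K-3$ with $K$ even, you propose ``matching non-interfering triples along anti-diagonals.'' The paper's coloring for this case is not a matching into triples: it pairs column-1 and column-3 nodes across rows (classes of size 2) and then two-colors column 2 by the parity of the row index (two classes of size $K/2$), using $K+2$ colors in total. A triple-based matching would yield only $3$ colors per anti-outneighborhood, not the required $6$, and the parity hypothesis on $K$ enters precisely to make the alternating two-coloring of column 2 globally consistent around the cycle --- neither of which your sketch captures. For $L = K - K/s + 1$, you claim each color class contains $s$ mutually non-interfering nodes while $\binom{K/s}{2}$ colors appear per anti-outneighborhood; but $K(K-L)$ nodes partitioned into $\binom{K/s}{2} = (K-L)(K-L+1)/2$ classes gives class size $\frac{2K}{K-L+1} = 2s$, not $s$, so the bookkeeping in your sketch is off by a factor of two. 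The paper's actual construction assigns each color to $2s$ nodes (two per residue of $(K-L+1)$ on the $K$-cycle) and requires distinct treatments of the middle column when $K-L$ is odd versus even. To complete the proof you would need to replace these sketches with explicit, verified colorings along the lines of Appendix II.
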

The above result is proven by deriving improved achievability bounds for the mentioned cases and showing that the rate-memory pair ($R_{lb}(M),M$) is feasible for all $M$. This combined with Theorem \ref{thm_lb} then gives us the above result. Note that for $L$ = $K - 3$, we are able to characterize $R^*(M)$ exactly when $K$ is even. For the case of  $L$ = $K - 3$ with $K$ odd, we are able to show that the rate-memory pair ($R_{a}(M),M$) is achievable, where $R_{a}(M)$ is defined as follows:
\begin{align*}
R_a(M) = \begin{cases}
K-\Big[K-\frac{6}{K}-\frac{2}{K(K-1)}\Big]\frac{MK}{N} & \text{if } 0\leq M\leq \frac{N}{K},  \\
\Big[\frac{6}{K}+\frac{2}{K(K-1)}\Big]\big(2-\frac{MK}{N}\big) & \text{if } \frac{N}{K}\leq M\leq \frac{2N}{K},\\
0 & \text{if } M\geq \frac{2N}{K}.
\end{cases}
\end{align*}
Comparing $R_a(M)$ to the lower bound $R_{lb}(M)$ from Theorem \ref{thm_lb}, we can show that the additive gap is at most $\frac{2}{K(K-1)}$, which decreases to zero as the number of users in the system $K$ becomes large. 

\section{Discussions}\label{sec:discussions}
In summary, we derived new bounds for the ($N,K,L$)$-$CCDN and established the order-optimal uncoded placement rate-memory trade-off for the case of $L\geq K/2$. We also established the exact uncoded placement rate-memory trade-off for a few cases. Note that while our achievable rate works for any $L$, our lower bound is specifically tailored towards the case of $L \ge K/2$. Generalizing this bound to the case of $L < K/2$ and using it to extend the order-optimality result to this regime are natural directions for future research. 

While our ultimate goal is to characterize the exact rate-memory trade-off for the general $(N, K, L)$-CCDN problem, there are several challenges involved. While our proposed scheme works for the general setup, characterizing its rate-memory trade-off can be very difficult in general. This is because our achievability scheme is based on the solution of the corresponding ICP  and that in general is an NP-hard \cite{bar2011index} problem. It can also be quite difficult to get closed-form expressions for the known upper bounds. To get tighter upper bounds for the achievable rate of our proposed strategy is indeed a part of our future work.

 Our lower bound is specifically tailored towards the case of $L \ge K/2$. Generalizing this bound to the case of $L < K/2$ is also more challenging for the general setup.{\remove{For example, the achievable rate-memory trade-off of our policy for the $(N=6, K=6, L=2)$-CCDN is given by 
\begin{align*}
\therefore R^*(M)\leq R(M) = \begin{cases}
6-{25M}/{6}, & \text{if } 0\leq M\leq 1,  \\
{29}/{9}-{25M}/{8}, & \text{if } 1\leq M\leq 2,\\
{4}/{3}-{4M}/{9}, & \text{if } 2\leq M\leq 3,\\
0 & \text{if } M\geq 3.
\end{cases}
\end{align*}
Unfortunately, for this case we do not yet have matching lower bounds.} A critical step in proving the index coding-based lower bound (illustrated in Section~\ref{proof_lb}) was to generate several inequalities by considering different request patterns $\mathbf{d}$ and different user orders $\mathbf{u}$, and then combining them. For the $L=1$ case, 
\cite{wan2016optimality,yu2018exact} considered all possible permutations  $\mathbf{u}$ to get a tight lower bound. However, for the multi cache-access case with $L > 1$, not all user permutations are equivalent since some of the cache-access subsets are not feasible. 

For example, consider the $(N=4, K=4, L=2)$-CCDN. Recall the lower bound argument discussed in Section~\ref{proof_lb}. For any request profile $\mathbf{d}$, if we use the user permutation $\mathbf{u}=(2,4,3,1)$, we will get the inequality {$R^*(M)\geq |\mathcal{F}_{d_{2},\phi}|+ |\mathcal{F}_{d_{2},\{3,4\}}|+|\mathcal{F}_{d_{2},\{4,1\}}|+|\mathcal{F}_{d_{2},\{3,4,1\}}|+|\mathcal{F}_{d_{4},\phi}|+ |\mathcal{F}_{d_{3},\phi}|+|\mathcal{F}_{d_{1},\phi}|$}, which contains one fewer term $\big(|\mathcal{F}_{d_{3},\{4,1\}}|\big)$ than the inequality we get for the user permutation $\mathbf{u}=(2,3,4,1)$, given by $R^*(M)\geq |\mathcal{F}_{d_{2},\phi}|+ |\mathcal{F}_{d_{2},\{3,4\}}|+|\mathcal{F}_{d_{2},\{4,1\}}|+|\mathcal{F}_{d_{2},\{3,4,1\}}|+
|\mathcal{F}_{d_{3},\phi}|+|\mathcal{F}_{d_{3},\{4,1\}}|+ |\mathcal{F}_{d_{4},\phi}|+|\mathcal{F}_{d_{1},\phi}|$. 

For  $L\geq K/2$,  rotations (instead of permutations) for the user order $\mathbf{u}$ are all equivalent and suffice to get a tight lower bound.  Unfortunately, they do not suffice in the general case and hence, we need more sophisticated analysis to devise tight lower bounds for the general ($N$, $K$, $L$)-CCDN setup. This is also part of our future work. 

\section{Proofs}\label{sec:proofs}
Here, we provide the proofs for  all the results. The proofs are given in the following order:   Theorem 	\ref{thm_lb}, Theorem \ref{thm_ubexact}, Theorem \ref{thm_new}, and Corollary \ref{cor_ob}.
%\section{Proofs}\label{sec:proofs}
%Here, we provide the proofs for  all the results. The proofs are given in the following order:  Theorem 	\ref{thm_lb}, Theorem \ref{thm_ubexact}, Theorem \ref{thm_new}, and Corollary \ref{cor_ob}. Corollary \ref{cor_ub} proof is simple and the proof sketch is already given in  Section \ref{sec:results_multiacc}. %For detailed proofs of all the  results see \cite{reddy2018isitmultiaccess}.
\subsection{Proof of Theorem 	\ref{thm_lb}}
%\section{Lower bound for ( $N $, $K$, $L\geq K/2$ )-CCDN}
%\label{sec:lb}
\begin{proof}\label{proof_lb}
	In this section, we derive a lower bound on the server transmission rate for the general ($N $, $K$, $L\geq K/2$)$-$CCDN. We provide detailed calculations for the ($N=4$, $K=4$, $L=2$)$-$CCDN in  Appendix I. Instead of including all the details for the general case here, which would have been very cumbersome, we refer to calculations in the appendix, at appropriate junctures through the proof in this section.

%For better understanding, we briefly discuss the ($N=4$, $K=4$, $L=2$) $-$ CCDN lower bound in Appendix B. {\color{blue}The proof is {\color{red}a} generalization of the  ($N=4$, $K=4$, $L=2$)$-$CCDN lower bound.} 
The server has $N$ files $(\mathcal{F}_1,\mathcal{F}_2,...,\mathcal{F}_N)$. Any uncoded placement policy divides each file $\mathcal{F}_i$ into $2^K$ disjoint parts (subfiles), denoted by $\left\{\mathcal{F}_{i,\mathcal{W}}: \mathcal{W}\in P(\{1,2,...,K\})\right\}$, where $\mathcal{F}_{i,\mathcal{W}}$ denotes the part of file $F_i$ which is available (via the caches) exclusively to the users in $\mathcal{W}$, and $P(\mathcal{S}) \text{ denotes the power set of }\mathcal{S}$.  

Let $x_{i,j}$ denote the total size of the file parts (in units) which are are each stored on $j$ caches and are available to $i$ users. Hence,
\begin{align}
\sum_{i=0}^{K}\sum_{j=0}^{K}x_{i,j}=N \text{ (total size of all files)}, \label{eq:conv_totalfilesize1g}\\
\sum_{i=0}^{K}\sum_{j=0}^{K}jx_{i,j}\leq KM \text{ (total storage capacity)}. \label{eq:conv_totalcachesize1g}
\end{align}
In our setup $x_{i,j}$ can be non-zero for only some of the possible pairs $(i,j)$. After removing the combinations of $i$ and $j$ which are not possible, \eqref{eq:conv_totalfilesize1g}, \eqref{eq:conv_totalcachesize1g} become
\begin{align}
x_{0,0}+x_{L,1}+\sum_{j=2}^{K-L}\sum_{i=j-1}^{K-L-1}x_{L+i,j}+\sum_{j=2}^{K}x_{K,j}=N , \label{eq:conv_totalfilesizeg}\\
x_{L,1}+\sum_{j=2}^{K-L}\sum_{i=j-1}^{K-L-1}jx_{L+i,j}+\sum_{j=2}^{K}jx_{K,j}\leq KM .\label{eq:conv_totalcachesizeg}
\end{align}
Following along similar lines as the converse for the ($N=4$, $K = 4$, $L=2$)-CCDN case given in Appendix I, we calculate a lower bound for a particular user rotation and request pattern first and then sum it over  all possible user rotations and request patterns. Then, we get
\begin{align*}
K(N!)R^*(M)\geq&\sum_{\mathbf{d}}\sum_{\mathbf{u}}\sum_{j\in[1:K]}\sum_{W_j\in[1:K] \backslash \{u_1...u_j\}}|\mathcal{F}_{d_{u_j},W_j}|\\
=&K(N!)\frac{K}{N}x_{0,0}+K(N!)\frac{(K-L)(K-L+1)}{2NK}x_{L,1}+\\
&\sum_{j=2}^{K-L}\sum_{i=j-1}^{K-L-1}K(N!)\frac{(K-L-i)(K-L+1-i)}{2NK}x_{L+i,j}.
\end{align*}
%
%where recall that $x_{i,j}$ denotes the total size of the file parts (in units) which are  each stored on $j$ caches and are available to $i$ users.
Hence,

\begin{align}
R^*(M) \geq &\frac{K}{N}x_{0,0}+\frac{(K-L)(K-L+1)}{2NK}x_{L,1}+\nonumber \\
&\sum_{j=2}^{K-L}\sum_{i=j-1}^{K-L-1}\frac{(K-L-i)(K-L+1-i)}{2NK}x_{L+i,j} .\label{eq:conv_rateg}
\end{align}
If we substitute $x_{0,0}$ and $x_{L,1}$ from \eqref{eq:conv_totalfilesizeg} and \eqref{eq:conv_totalcachesizeg} in \eqref{eq:conv_rateg}, we get
\begin{align}
R^*(M)\geq K-\bigg[K-\frac{(K-L)(K-L+1)}{2K}\bigg]\frac{MK}{N}  \text{ units}.\label{eq:lb1}
\end{align}
If we substitute $x_{L,1}$ and $x_{K,2}$ from \eqref{eq:conv_totalfilesizeg} and \eqref{eq:conv_totalcachesizeg} in \eqref{eq:conv_rateg}, we get
\begin{align}
R^*(M)\geq \frac{(K-L)(K-L+1)}{2K}\Big[2-\frac{MK}{N}\Big]\text{ units}. \label{eq:lb2}
\end{align}
\eqref{eq:lb1} dominates for $0\leq M\leq \frac{N}{K}$, and \eqref{eq:lb2} dominates for $\frac{N}{K}\leq M\leq \frac{2N}{K}$. Hence, $R^*(M)\geq R_{lb}(M)$.
\end{proof}

\subsection{Proof of Theorem \ref{thm_ubexact}}\label{proofs_placement}
	\begin{proof}
		For all the cases discussed in Theorem \ref{thm_ubexact}, $L$ is greater than or equal to $K/2$. Therefore, for our achievability scheme, we consider  3 corner points $M=\{0,{N}/{K},{2N}/{K}\}$. As mentioned in Section \ref{sec:policy}  the rates $R=K$ and $R=0$ are achievable at memory points $M=0$ and $M=2N/K$ respectively.
		Now, we discuss the memory point $M=N/K$. %For better understanding, we also discuss the $K=5,L=3$ case in parallel.
		
		\subsubsection{Placement Policy for any $L\geq K/2$ at $M=N/K$}
		We have $i=\frac{MK}{N}=1$. From the definition of $\hat{\mathcal{S}}$ in \eqref{shat},  $\hat{\mathcal{S}}=\{\{j\}:j\in[K]\}$. We split each file into $|\hat{\mathcal{S}}|=K$ equal parts and store the $j^{th}$ part of each file in Cache $j$. Observe that this storage policy satisfies the memory constraint $M=\frac{N}{K}$ units. Since Cache $j$ is connected to User $j$, User $<j-1>$,..., User $<j-L+1>$, we subscript the stored content in Cache $j$ with the set $[<j-L+1>:j]$, i.e., 
		Cache $j$ stores $\mathcal{F}_{i,[<j-L+1>:j]}$ $\forall i \in [N]$.
		Recall that User $l$ has access to caches $l$, $<l+1>$, ..., $<l+L-1>$. Hence, for all $i\in[N]$, $k\in[L]$,  $\mathcal{F}_{i,[<l+k-L>:<l+k-1>]}$  are available {to} User $l$. %{\color{green}\big(Because, $\mathcal{F}_{i,[<l+k-L>:<l+k-1>]}$ is stored in Cache $<l+k-1>$ which is available to User $l$ if $k\in[L]$.\big)}

			\textbf{\emph{Example:}} Consider a ($N, K=5, L=3$)$-$CCDN. For all $i\in[N]$,
		\begin{itemize}
			\item  $\mathcal{F}_{i,[4:1]}$, $\mathcal{F}_{i,[5:2]}$ and $\mathcal{F}_{i,[1:3]}$ are available {to} User 1, 
			%{\color{green}(Since the set represented by subscripted rectangular brackets contains 1),}
			\item   $\mathcal{F}_{i,[5:2]}$, $\mathcal{F}_{i,[1:3]}$ and $\mathcal{F}_{i,[2:4]}$ are available {to} User 2,
			\item   $\mathcal{F}_{i,[1:3]}$,  $\mathcal{F}_{i,[2:4]}$ and $\mathcal{F}_{i,[3:5]}$ are available {to} User 3,
			\item   $\mathcal{F}_{i,[2:4]}$,  $\mathcal{F}_{i,[3:5]}$ and $\mathcal{F}_{i,[4:1]}$ are available {to} User 4, 
			\item  $\mathcal{F}_{i,[3:5]}$, $\mathcal{F}_{i,[4:1]}$ and $\mathcal{F}_{i,[5:2]}$ are available {to} User 5.
		\end{itemize}
	
			\subsubsection{Delivery Policy at $M=N/K$}
			Now, we discuss the delivery phase. Let the user request profile be $(d_1,d_2,...,d_K){\in[N]^K}$, i.e., User $l$ requests File $d_l$. User $l$ needs only those File $d_l$'s subfiles which are not available to him. Explicitly, User $l$ needs only $K-L$ subfiles $\mathcal{F}_{d_l,[<l+1>:<l+L>]}$, $\mathcal{F}_{d_l,[<l+2>:<l+L+1>]}$, ..., $\mathcal{F}_{d_l,[<l+K-L>:<l+K-1>]}$ which are stored in User $l$'s non accessible caches  $<l+L>$, $<l+L+1>$ ,...,  $<l+K-1>$ respectively. %Hence, 
%			\begin{itemize}
%				\item User 1 needs  $\mathcal{F}_{d_1,[2:<L+1>]}$, $\mathcal{F}_{d_1,[3:<L+2>]}$,..., $\mathcal{F}_{d_1,[K-L+1:K]}$,
%				\item User 2 needs  $\mathcal{F}_{d_2,[3:<L+2>]}$, $\mathcal{F}_{d_2,[4:<L+3>]}$,..., $\mathcal{F}_{d_2,[<K-L+2>:1]}$,
%				
%				%{\begin{center}
%				\hspace{1in}		\vdots
%				%\end{center}}
%				\item  User $l$ needs  $\mathcal{F}_{d_l,[<l+1>:<l+L>]}$,  $\mathcal{F}_{d_l,[<l+2>:<l+L+1>]}$,..., $\mathcal{F}_{d_l,[<l+K-L>:<l+K-1>]}$,
%				
%				
%				\hspace{1in}		\vdots
%				
%				\item User $K$ needs  $\mathcal{F}_{d_K,[1:K-L]}$, $\mathcal{F}_{d_K,[2:K-L+1]}$, ..., $\mathcal{F}_{d_K,[K-L:K-1]}$.
%			\end{itemize}

		\textbf{\emph{Example:}} For a ($N, K=5, L=3$)$-$CCDN, 
		%For $K=5$, $L=3$ case,  
		\begin{itemize}
			\item  User 1 needs  $\mathcal{F}_{d_1,[2:4]}$ and $\mathcal{F}_{d_1,[3:5]}$,
			\item  User 2 needs  $\mathcal{F}_{d_2,[3:5]}$ and $\mathcal{F}_{d_2,[4:1]}$,
			\item  User 3 needs  $\mathcal{F}_{d_3,[4:1]}$ and $\mathcal{F}_{d_3,[5:2]}$,
			\item  User 4 needs  $\mathcal{F}_{d_4,[5:2]}$ and $\mathcal{F}_{d_4,[1:3]}$,
			\item  User 5 needs  $\mathcal{F}_{d_5,[1:3]}$ and $\mathcal{F}_{d_5,[2:4]}$.
		\end{itemize}
	%Each user needs 2 subfiles and thus a total of $2K$ subfiles are involved in the server transmission. We can map the problem here to an instance of ICP described in Section~\ref{sec:prelim}, with $n=2K$ nodes, each one requesting a distinct subfile. The side information at the node representing (and requesting) some Subfile $r$ are the subfiles available to the user which is requesting Subfile $r$. 
	
	Each user needs $K-L$ subfiles and thus a total of $K(K-L)$ subfiles are involved in the server transmission. We can map the problem here to an instance of the ICP described in Section~\ref{sec:prelim}, with $n=K(K-L)$ nodes, each one requesting a distinct subfile. The side information at the node representing (and requesting) some Subfile $r$ are the subfiles available to the user which is requesting Subfile $r$.

Recall from Lemma \ref{lemma:upperbound} in Section \ref{sec:prelim}, the minimum broadcast rate of an ICP is upper bounded by its local chromatic number. % In our ICP, for the cases mentioned in Theorem \ref{thm_ubexact}, an upper bound on the local chromatic number is 
	 For the cases mentioned in Theorem \ref{thm_ubexact}, we show that an upper bound on the local chromatic number of the corresponding ICP's is ${(K-L)(K-L+1)}/{2}$, the details can be found in Appendix II. Since the server transmits at most ${(K-L)(K-L+1)}/{2}$ messages, each of size $1/K$ units, the broadcast rate at $M=N/K$ is upper bounded by $\frac{(K-L)(K-L+1)}{2K}$ units.

Hence, for all the cases in Theorem \ref{thm_ubexact}, the transmission rates  $R=K$, $R=\frac{(K-L)(K-L+1)}{2K}$ and $R=0$ are achievable at memory points $M=0$, $M=N/K$ and $M=2N/K$ respectively. %The transmission rate  at intermediate values is given by memory-sharing and is equal to $R_{lb}(M)$ defined in Equation \eqref{Eqn:LB1}. 
%Hence, we achieve $R_{lb}(M)$ and from Theorem \ref{thm_lb}, we  can conclude that  $R_{lb}(M)$ is the exact rate-memory trade-off ($R^*(M)$) for the cases mentioned in Theorem \ref{thm_ubexact}. 
The transmission rates at the intermediate values are derived using memory-sharing arguments. It can be verified that the achievable rate expression matches the lower bound $R_{lb}(M)$ derived in Theorem \ref{thm_lb} at all values of the memory $M$. This concludes the proof of the theorem.
\end{proof}

%**************************************************************************************************************************************************************************************************************************************************************************
%
%
%
%
\remove{
	\textit{(a). $L=K-1$:}  The ICP for $L=K-1$ is given in Table \ref{Tab:11}. Here, User $p$ needs only $1(=K-L)$ sub-file and it is $\mathcal{F}_{d_p,[<p+1>:<p+K-1>]}=\mathcal{F}_{d_p,[K]\backslash\{p\}}$.  We can observe that the sub-file needed for User $p$  is available to all the users in the system except User $p$.
%		\begin{table}[h]
%		\centering
%		\begin{tabular}{| c | c | c | c |}
%			\hline 
%			$F_{d_1,[2:<L+1>]}$ & $F_{d_1,[3:<L+2>]}$ & ... & $F_{d_1,[K-L+1:K]}$\\
%			\hline 
%			$F_{d_2,[3:<L+2>]}$ & $F_{d_2,[4:<L+3>]}$ & ... & $F_{d_2,[<K-L+2>:1]}$\\
%			\hline 
%			\vdots & \vdots & ... & \vdots \\
%			\hline
%			${F}_{d_p,[<p+1>:<p+L>]}$ &  ${F}_{d_p,[<p+2>:<p+L+1>]}$ & ... & ${F}_{d_p,[<p+K-L>:<p+K-1>]}$\\
%			\hline 
%			\vdots & \vdots & ... & \vdots \\
%			\hline
%			${F}_{d_K,[1:<L>]}$ &  ${F}_{d_K,[2:<L+1>]}$ & ... &  ${F}_{d_K,[<K-L>:<K-1>]}$\\
%			\hline
%		\end{tabular}
%	\vspace{0.2cm}
%	\caption{ICP for $L(\geq K/2)$-cache access CCDN. Row $p$ represents needed subfiles for User $p$.}  \label{Tab:g1}
%	\end{table}	

Hence, the central server sends the following  message:
\begin{align*}
\mathcal{F}_{d_1,[K]\backslash\{1\}}\oplus \mathcal{F}_{d_2,[K]\backslash\{2\}}\oplus \hdots \oplus \mathcal{F}_{d_K,[K]\backslash\{K\}}.
\end{align*}
It can be verified that using the above message and the accessible cache contents, each user can recover its requested file.  
Since each sub-file is of size ${1}/{K}$ units, $R({N}/{K})=\frac{1}{K}$ units $=\frac{(K-L)(K-L+1)}{2K}$ units.

\textit{(b). $L=K-2$:} The ICP for $L=K-2$ is given in Table \ref{Tab:221}.
	\begin{table}[h]
	\begin{minipage}{.6\linewidth}
		%\caption{}
		\centering
		%\resizebox{\linewidth}{!}{
		\begin{tabular}{| c | c |}
			\hline 
			$\mathcal{F}_{d_1,[2:K-1]}$ &  $\mathcal{F}_{d_1,[3:K]}$\\
			\hline 
			$\mathcal{F}_{d_2,[3:K]}$ & $\mathcal{F}_{d_2,[4:1]}$\\
			\hline 
			\vdots & \vdots  \\
			\hline
			$\mathcal{F}_{d_p,[<p+1>:<p+K-2>]}$ &   $\mathcal{F}_{d_p,[<p+2>:<p+K-1>]}$\\
			\hline 
			\vdots & \vdots \\
			\hline
			$\mathcal{F}_{d_K,[1:K-2]}$ & $\mathcal{F}_{d_K,[2:K-1]}$\\
			\hline
		\end{tabular}
		\vspace{0.2cm}
		\caption{ICP for $(N,K,L=K-2)$-CCDN.}  \label{Tab:221}
	\end{minipage}%
	\begin{minipage}{.4\linewidth}
		\centering
		%\caption{}
		%\resizebox{\linewidth}{!}{
		\begin{tabular}{| c | c |}
			\hline 
			$K-1$ & 1\\
			\hline 
			$K$ & $2$\\
			\hline 
			\vdots & \vdots \\
			\hline
			$<l-2>$ &  $l$ \\
			\hline
			\vdots & \vdots \\
			\hline 
			$K-2$ & $K$\\
			\hline
		\end{tabular}
		\vspace{0.2cm}
		\caption{coloring scheme for Table \ref{Tab:221} } \label{Tab:color22}
	\end{minipage} 
	%\vspace{0cm}
	%\caption{Global caption}
\end{table}

\begin{table}[h]
	\begin{minipage}{.3\linewidth}
		%\caption{}
		\centering
		%\resizebox{\linewidth}{!}{
		\begin{tabular}{| c | c |}
			\hline 
			$\mathcal{F}_{d_1,[2:4]}$ & $\mathcal{F}_{d_1,[3:5]}$\\
			\hline 
			$\mathcal{F}_{d_2,[3:5]}$ & $\mathcal{F}_{d_2,[4:1]}$\\
			\hline 
			$\mathcal{F}_{d_3,[4:1]}$ &  $\mathcal{F}_{d_3,[5:2]}$ \\
			\hline
			$\mathcal{F}_{d_4,[5:2]}$ &  $\mathcal{F}_{d_4,[1:3]}$\\
			\hline 
			$\mathcal{F}_{d_5,[1:3]}$ &  $\mathcal{F}_{d_5,[2:4]}$\\
			\hline
		\end{tabular}
		\vspace{0.2cm}
		\caption{Example: $(N,K=5,L=3)$-CCDN with request profile ($d_1,d_2,d_3,d_4,d_5$). } \label{Tab:222}
	\end{minipage}%
	\quad \quad
	\begin{minipage}{.2\linewidth}
		\centering
		%\caption{}
		%\resizebox{\linewidth}{!}{
		\begin{tabular}{| c | c |}
			\hline 
			$4$ & 1\\
			\hline 
			$5$ & $2$\\
			\hline 
			1 &  3 \\
			\hline
			2 & 4 \\
			\hline 
			3 & 5\\
			\hline
		\end{tabular}
		\vspace{0.2cm}
		\caption{coloring scheme for Table \ref{Tab:222} } \label{Tab:color22e}
	\end{minipage} 
	\quad \quad
	\begin{minipage}{.45\linewidth}
		\centering
		%\caption{}
		%\resizebox{\linewidth}{!}{
		\begin{tabular}{| c | c |}
			\hline 
			$4$ (NI)& {\color{red}1} (NI)\\
			\hline 
			$5$ (NI)& $2$ (I)\\
			\hline 
			{\color{red}1} &  3 (I)\\
			\hline
			2 (I)& 4 (NI)\\
			\hline 
			3 (NI)& 5 (NI)\\
			\hline
		\end{tabular}
		\vspace{0.2cm}
		\caption{An illustration of Node (3,1) for $K=5$. Here, NI means Non-Interfering node and  I means Interfering node } \label{Tab:color2e22}
	\end{minipage} 
	%\vspace{0cm}
	%\caption{Global caption}
\end{table}

Observe that for  Node $(p,q)$,  among the other nodes, only nodes $(p,\bar{q})$ (the other node in $p^{th}$ row),   $(<p+1>,1)$, and  $(<p-1>,2)$ do not contain $p$ in the set which is represented by their subscripted rectangular brackets, and are thus interfering nodes. Recall from the placement phase  that for a subfile of File $\mathcal{F}_{d_p}$, the subscripted rectangular bracket shows that the subfile is available at the users in the set represented by the subscripted rectangular brackets.As an illustration, we discuss the $K=5, L = 3$ example  in Table \ref{Tab:222}. Observe that for Node (3,1),  nodes (3,2), (4,1), and (2,2)  do not contain 3 in the set represented by their subscripted rectangular brackets, and are thus interfering nodes.

\textit{A proper coloring scheme:} Take $K$ colors and assign Color $p$ to nodes $(p,2)$ and $(<p+2>,1)$.  The coloring scheme for the general case is shown in Table \ref{Tab:color22} and in Table \ref{Tab:color22e} for $K=5$.  { The interfering nodes for nodes in the Row $i$ are present in rows $i-1$, $i$, $i+1$ but we repeat the colors of nodes in Row $i$ to nodes in rows  $i-2$, and $i+2$. i.e.,{\color{blue}So,} this coloring scheme ensures that none of the nodes share its color with any of its interfering nodes.  (See Table \ref{Tab:color2e22} for Node (3,1) color and its interfering {\color{blue}nodes'} colors).} Hence, this is a proper coloring scheme.

Observe that {for Node ($p,q$), 3 colors appear in the closed anti-outneighborhood, one color for Node ($p,q$), one color for Node ($p,\bar{q}$), one common color for nodes ($<p+1>,1$) and ($<p-1>,2$)}. Hence, the local chromatic number of  the ICP is 3. %Therefore, from Lemma \ref{lemma:upperbound}, we broadcast 3 messages. %Since, each sub-file is of size ${1}/{K}$ units, $R({N}/{K})={3}/{K}$ units$=\frac{(K-L)(K-L+1)}{2K}$ units.
Therefore, from Lemma \ref{lemma:upperbound}, an upper bound on the corresponding ICP is 3 and  here each sub-file is of size ${1}/{K}$ units. Hence, $R({N}/{K})={3}/{K}$ units$=\frac{(K-L)(K-L+1)}{2K}$ units.

\textit{(c). $L=K-3$, $K$ is even:} The ICP for $L=K-3$ is given in Table \ref{Tab:331}.
\begin{table}[h]
	\begin{minipage}{.6\linewidth}
		%\caption{}
		\centering
		%\resizebox{\linewidth}{!}{
		\begin{tabular}{| c | c | c|}
			\hline 
			$\mathcal{F}_{d_1,[2:K-2]}$ &  $\mathcal{F}_{d_1,[3:K-1]}$ &  $\mathcal{F}_{d_1,[4:K]}$\\
			\hline 
			$\mathcal{F}_{d_2,[3:K-1]}$ & $\mathcal{F}_{d_2,[4:K]}$ & $\mathcal{F}_{d_2,[5:1]}$\\
			\hline 
			\vdots & \vdots  & \vdots\\
			\hline
			$\mathcal{F}_{d_p,[<p+1>:<p+K-3>]}$ & 
			$\mathcal{F}_{d_p,[<p+2>:<p+K-2>]}$ &   $\mathcal{F}_{d_p,[<p+3>:<p+K-1>]}$\\
			\hline 
			\vdots & \vdots & \vdots\\
			\hline
			$\mathcal{F}_{d_K,[1:K-3]}$ & $\mathcal{F}_{d_K,[2:K-2]}$ & $\mathcal{F}_{d_K,[3:K-1]}$\\
			\hline
		\end{tabular}
		\vspace{0.2cm}
		\caption{ICP for $(N,K,L=K-2)$-CCDN.}  \label{Tab:331}
	\end{minipage}%
	\begin{minipage}{.45\linewidth}
		\centering
		%\caption{}
		%\resizebox{\linewidth}{!}{
		\begin{tabular}{| c | c | c |}
			\hline 
			$(1,K-2)$ & (2,1) & (1,1)\\
			\hline 
			$(1,K-1)$ & (2,2) & (1,2)\\
			\hline
			$(1,K)$ & (2,1) & (1,3)\\
			\hline 
			$(1,1)$ & (2,2) & (1,4)\\
			\hline
			\vdots & \vdots & \vdots \\
			\hline 
			$(1,K-4)$ & (2,1) &$(1,K-1)$\\
			\hline 
			$(1,K-3)$ & (2,2) &$(1,K)$\\
			\hline
		\end{tabular}
		\vspace{0.2cm}
		\caption{Coloring scheme for Table \ref{Tab:331} } \label{Tab:color33}
	\end{minipage} 
	%\vspace{0cm}
	%\caption{Global caption}
\end{table}
	
	Observe that for  Node $(p,q)$,  among the other nodes, only nodes $(p,\bar{q})$ (the other nodes in $p^{th}$ row),   $(<p+1>,1)$, $(<p+1>,2)$, $(<p+2>,1)$, $(<p-1>,2)$, $(<p-1>,3)$, and  $(<p-2>,3)$ do not contain $p$ in the set which is represented by their subscripted rectangular brackets, and are thus interfering nodes. 
	
	\textit{A proper coloring scheme:} Take $K+2$ colors. Let the colors be $\{(1,1), (1,2),..., (1,K),$ $ (2,1),(2,2)\}$  and assign color $(1,p)$ to nodes $(p,3)$ and $(<p+3>,1)$ for $1\leq p \leq K$, assign color $(2,1)$ to $(<p>,2)$ if $<p>$ is odd and assign color $(2,2)$ to $(<p>,2)$ if $<p>$ is even.  The coloring scheme for general case is shown in Table \ref{Tab:color33} %{\color{blue} and in Table \ref{Tab:color3e} for $K=5$}. 
	This coloring scheme ensures that none of the nodes share its color with any of its interfering nodes.  %{\color{blue}(See Table \ref{Tab:color3e2} for Node (3,1) color and its interfering node colors.)} 
	Hence, this is a proper coloring scheme.   % The coloring scheme is shown in below Table \ref{3color}.
	
	Observe that {for Node ($p,q$), 6 colors appear in the closed anti-outneighborhood\remove{, one color for Node ($p,q$), one color each  for nodes ($p,\bar{q}$), one common color for nodes ($<p+1>,1$) and ($<p-2>,3$), one common color for nodes ($<p+2>,1$) and ($<p-1>,3$), one common color for nodes ($<p+1>,2$) and ($<p-1>,2$)}}. Hence, the local chromatic number of  the ICP is 6. %Therefore, from Lemma \ref{lemma:upperbound}, we broadcast 6 messages. Since, each sub-file is of size ${1}/{K}$ units, $R({N}/{K})={6}/{K}$ units $=\frac{(K-L)(K-L+1)}{2K}$ units.
%\end{table}
Therefore, from Lemma \ref{lemma:upperbound}, an upper bound on the corresponding ICP is 6 and  here each sub-file is of size ${1}/{K}$ units. Hence, $R({N}/{K})={6}/{K}$ units$=\frac{(K-L)(K-L+1)}{2K}$ units.

\textit{(d). $L = K(s-1)/s + 1$ for some positive integer $s$:} The ICP for $L = K(s-1)/s + 1$ (for some positive integer $s$) is given in Table \ref{Tab:K1}.

\begin{table}[h]
	\centering
	\begin{tabular}{| c | c | c | c |}
		\hline 
		$\mathcal{F}_{d_1,[2:<L+1>]}$ & $\mathcal{F}_{d_1,[3:<L+2>]}$ & ... & $\mathcal{F}_{d_1,[K-L+1:K]}$\\
		\hline 
		$\mathcal{F}_{d_2,[3:<L+2>]}$ & $\mathcal{F}_{d_2,[4:<L+3>]}$ & ... & $\mathcal{F}_{d_2,[<K-L+2>:1]}$\\
		\hline 
		\vdots & \vdots & ... & \vdots \\
		\hline
		$\mathcal{F}_{d_p,[<p+1>:<p+L>]}$ &  $\mathcal{F}_{d_p,[<p+2>:<p+L+1>]}$ & ... & $\mathcal{F}_{d_p,[<p+K-L>:<p+K-1>]}$\\
		\hline 
		\vdots & \vdots & ... & \vdots \\
		\hline
		$\mathcal{F}_{d_K,[1:L]}$ &  $\mathcal{F}_{d_K,[2:<L+1>]}$ & ... &  $\mathcal{F}_{d_K,[K-L:K-1]}$\\
		\hline
	\end{tabular}
	\vspace{0.1cm}
	\caption{ICP for $(N,K,L=K(s-1)/s + 1)$-CCDN.}  \label{Tab:K1}
\end{table}

%To understand the structural properties of the above ICP, we form a $K\times(K-L)$ table (see Table \ref{Tab:K1}) such that $p^{th}$ row and $q^{th}$ column contains node $\mathcal{F}_{d_p,[<p+q>,<p+q+1>,...<p+q+L-1>]}$ i.e., the subfile requested by User $p$ and available at users $<p+q>,<p+q+1>,...,<p+q+L-1>$. We call this node as $(p,q)$ node. \big(Here, $p\in[K],q\in[K-L]$\big). Note that the entries in the Row $p$ are the subfiles needed for User $p$.%The table looks similar  to \big($N (\geq K)$, $6$, $L = K-3$ \big)-CCDN case with $K$ rows and $3$ columns.
%The interference nodes of  Node ($p,q$) are
%\begin{itemize}
%	\item In column 1, nodes $(<p>,1),(<p+1>,1),..., (<p+K-L-1>,1)$,
%	\item In column 2, nodes $(<p-1>,2),(<p>,2),..., (<p+K-L-2>,2)$,
%	\item In column 3, nodes $(<p-2>,3),(<p-1>,3),..., (<p+K-L-3>,2)$,
%	
%	\hspace{2in} \vdots
%%	\item In column $q$, nodes $(<p-q+1>,q),..., (<p-1>,q)$, \quad \quad $(<p+1>,q),..., (<p+K-L-q>,q)$,
%	
%%	\hspace{2in} \vdots
%	\item In column $K-L$, nodes $(<p-K-L+1>,K-L), ..., (<p>,K-L)$. 
%\end{itemize}

%
%We divide the interference nodes of any  Node ($p,q$) into two categories:
%%\end{itemize}
%\begin{enumerate}
%	\item intra interference nodes:  nodes $(p,\bar{q})$ (all the other nodes in Row $p$),
%	\item inter interference nodes: interference nodes present in other rows.
%\end{enumerate}
The interference nodes of  Node ($p,q$) are nodes $(p,\bar{q})$ (all the other nodes in Row $p$),
\begin{itemize}
	\item in column 1, nodes $(<p+1>,1),..., (<p+K-L-1>,1)$,
	\item in column 2, nodes $(<p-1>,2)$,\quad \quad  $(<p+1>,2),..., (<p+K-L-2>,2)$,
	%\item In column 3, nodes $(r-2,3), (r-1,3)$, \quad \quad $(r+1,3), (r+2,3),..., (r+K-L-3,3)$,
	
	\hspace{2in} \vdots
	\item in column $t$, nodes $(<p-t+1>,t),..., (<p-1>,t)$, \quad \quad $(<p+1>,t),..., (<p+K-L-t>,t)$,
	
	\hspace{2in} \vdots
	\item in column $K-L$, nodes $(<p-K-L+1>,K-L), ..., (<p-1>,K-L)$. 
\end{itemize}
%In each column exactly $K-L-1$ nodes are present.	

Our coloring scheme is different for $K-L$ odd and even cases. So, we will discuss them separately.

\begin{itemize}
	\item $K-L$ is odd

	%\textit{A proper coloring scheme:} Take $K$+2 colors. et the colors be $\{(1,1), (1,2),..., (1,K), (2,1),$ $(2,2)\}$  and assign color $(1,i)$ to nodes $(i,3)$ and $(<i+3>,1)$ for $1\leq i \leq K$, assign color $(2,1)$ to $(<i>,2)$ if $i$ is odd and assign color $(2,2)$ to $(<i>,2)$ if $i$ is even.  The coloring scheme for general case is shown in Table \ref{Tab:color2} and in Table \ref{Tab:color2e} for $K=5$. { This coloring scheme ensures that none of the nodes share its color with any of its interfering nodes.  (See Table \ref{Tab:color3e2} for Node (3,1) color and its interfering node colors).} Hence, this is a proper coloring scheme.
	
	\textit{A proper coloring scheme:} Take $\frac{(K-L)(K-L+1)}{2}$ colors. Let the colors be 
	$\{(q',p'): q'\in [\frac{K-L-1}{2}], p' \in [K-L+1]\} \quad \cup \quad \{(\frac{K-L+1}{2},p''):  p''\in[\frac{K-L+1}{2}]\}$. 
	
	\begin{itemize}
		\item $\forall p''\in[\frac{K-L+1}{2}]$, assign Color  $(\frac{K-L+1}{2},p'')$ to nodes $(<p''+d\frac{K-L+1}{2}>,\frac{K-L+1}{2})$ for all $d\in [2s]$. 
		\item $\forall q' \in [\frac{K-L-1}{2}],p'\in [K-L+1]$, assign Color $(q',p')$ to nodes 
		$(<p'+d(K-L+1)>, K-L+1-q')$ and $(<p'+d(K-L+1)+{K-L+1}-q'>,q')$   for all $d\in [s]$.
	\end{itemize}
	%{\color{red}	
		%If $L=K(s-1)/s + 1\implies K=s(K-L+1)$. We are repeating colors in a column with a circular gap of either  $(K-L+1)/2$ (for $\big(\frac{K-L+1}{2}\big)^{th}$ column) or $K-L+1$ (for remaining columns). Since, the number of nodes in a column are $K=i(K-L+1)$ and color repetition gap is $(K-L+1)/2$ (or $K-L+1$), the coloring scheme assigns only one color to a node (i.e., a color does not spread to all the nodes in a column, it restricts to $2i$ (or $i$) number of nodes in a column\remove{For example: Color (1,1) is assigned to columns 1 and $K-L$. In Column 1, Color (1,1) is assigned to nodes ($K-L+1$,1),...,($i(K-L+1),1$) but not to any other nodes}).}
	
	For Node $(p,q)$,
	\begin{itemize}
		\item [--] If $q=(K-L+1)/2$, color assigned to Node $(p,q)$ is repeated only in Column $q$ with a circular gap of $(<p-(K-L-1)/2>,q),...,(<p-1>,q)$ and $(K-L+1)/2$. Its interference nodes in $q^{th}$ column are nodes $(<p+1>,q),...,<(p+(K-L-1)/2>,q)$ and but we are assigning same color to (non interfering) nodes $(<p-d(K-L+1)/2>,q)$ and $(<p+d(K-L+1)/2>,q)$ for all $d\in [2s]$.
		\item [--] If $q<(K-L+1)/2$, color assigned to Node $(p,q)$ is repeated only in columns $q$ and $K-L+1-q$. Its interference nodes in $q^{th}$ column are nodes $(<p-q+1>,q),..., (<p-1>,q)$, $(<p+1>,q),..., (<p+K-L-q>,q)$, but we are assigning same color to (non interfering) nodes $(<p-(K-L+1)>,q)$ and $(<p+(K-L+1)>,q)$, in $(K-L+1-q)^{th}$ column are nodes $(<p-K+L+q>,K-L+1-q),..., (<p-1>,K-L+1-q)$, $(<p+1>,K-L+1-q),..., (<p+q-1>,K-L+1-q)$, but we are assigning same color to (non interfering) nodes $(<p-(K-L-q+1)>,K-L+1-q)$ and $(<p+q>,K-L+1-q)$.
		\item [--] If $q>(K-L+1)/2$, by using similar arguments, we can prove color is repeated only for non-interference nodes.
	\end{itemize}
	I.e., this coloring scheme ensures that none of the nodes share its color with any of its interfering nodes.   Hence, this is a proper coloring scheme. 
	\begin{table}[h]
		\centering
		\begin{tabular}{| c | c | c | c | c |}
			\hline 
			(1,2) NI & (2,3) NI & (3,1) NI & (2,1) NI & {\color{red}(1,1)} NI\\
			\hline 
			(1,3) NI & (2,4) NI & (3,2) NI & (2,2) NI & (1,2) I\\
			\hline
			(1,4) NI & (2,5) NI & (3,3) NI & (2,3) I & (1,3) I\\
			\hline 
			(1,5) NI & (2,6) NI & (3,1) I & (2,4) I & (1,4) I\\
			\hline
			(1,6) NI & (2,1) I& (3,2) I& (2,5) I& (1,5) I\\
			\hline 
			{\color{red}(1,1)} & (2,2) I & (3,3) I & (2,6) I & (1,6) I\\
			\hline 
			(1,2) I & (2,3) I & (3,1) I & (2,1) I & {\color{red}(1,1)} NI\\
			\hline 
			(1,3) I & (2,4) I & (3,2) I & (2,2) NI & (1,2) NI\\
			\hline
			(1,4) I & (2,5) I & (3,3) NI & (2,3) NI & (1,3) NI\\
			\hline 
			(1,5) I & (2,6) NI & (3,1) NI & (2,4) NI & (1,4) NI\\
			\hline
			(1,6) NI & (2,1) NI & (3,2) NI & (2,5) NI & (1,5) NI\\
			\hline 
			{\color{red}(1,1)} NI & (2,2) NI & (3,3) NI & (2,6) NI & (1,6) NI\\
			\hline 
		\end{tabular}
		\vspace{0.2cm}
		\caption{Coloring scheme for  $(N,K=12,L=7)-$CCDN.  Here, NI means Non-Interfering node and  I means Interfering node for Node (6,1). } \label{Tab:colorK}
	\end{table}
	
	For example consider $\{N, 12, 7\}-$CCDN. The coloring scheme is shown  in Table \ref{Tab:colorK}. We also highlight Node (6,1) interference and non-interference nodes.

	Observe that {for Node ($p,q$), $\frac{(K-L)(K-L+1)}{2}$ colors\footnote{one color for Node ($p,q$), $K-L-1$ colors for $K-L-1$ interference nodes  ($p,\bar{q}$) in the $p^{th}$ row, and $\frac{(K-L)(K-L-1)}{2}$ colors for $(K-L)(K-L-1)$  interference nodes in the other nodes} appear in the closed anti-outneighborhood\remove{, one color for Node ($i,j$), one color each  for nodes ($i,\bar{j}$), one common color for nodes ($<i+1>,1$) and ($<i-2>,3$), one common color for nodes ($<i+2>,1$) and ($<i-1>,3$), one common color for nodes ($<i+1>,2$) and ($<i-1>,2$)}}. Hence, the local chromatic number of  the ICP is $\frac{(K-L)(K-L+1)}{2}$. %Therefore, from Lemma \ref{lemma:upperbound}, we broadcast $\frac{(K-L)(K-L+1)}{2}$ messages. Since, each sub-file is of size ${1}/{K}$ units, $R({N}/{K})=\frac{(K-L)(K-L+1)}{2K}$ units.
	Therefore, from Lemma \ref{lemma:upperbound}, an upper bound on the corresponding ICP is $\frac{(K-L)(K-L+1)}{2}$ and  here each sub-file is of size ${1}/{K}$ units. Hence, $R({N}/{K})=\frac{(K-L)(K-L+1)}{2K}$ units.
	
	%The transmission rate  at intermediate values is given by memory-sharing and is equal to $R_{lb}(M)$ defined in \eqref{Eqn:LB1}. 
	%Hence, we achieve $R_{lb}(M)$ and from Theorem \ref{thm_lb}, we  can conclude that  $R_{lb}(M)$ is the exact rate-memory trade-off ($R^*(M)$). 
	
	\item $K-L$ is even 
	
	\textit{A proper coloring scheme:} Take $\frac{(K-L)(K-L+1)}{2}$ colors. Let the colors be 
	$\{(q',p'): q'\in [\frac{K-L}{2}],p' \in [K-L+1]\} $. 
	%\begin{itemize}
	%\item $\forall j\in[\frac{K-L+1}{2}]$, assign Color  $(\frac{K-L+1}{2},j)$ to nodes $(<j+r\frac{K-L+1}{2}>,\frac{K-L+1}{2})$ for all $r\in \mathbb{N}\cup \{0\}$. 
	$\forall q' \in [\frac{K-L}{2}],p'\in [K-L+1]$, assign Color $(q',p')$ to nodes 
	$(<p'+d(K-L+1)>,K-L+1-q')$ and $(<p'+d(K-L+1)+{K-L+1}-q'>,q')$   for all $d\in [s]$.
	
	For Node $(p,q)$,
	\begin{itemize}
		%\item [--] If $q=(K-L+1)/2$, color assigned to Node $(p,q)$ is repeated only in Column $q$ with a circular gap of $(K-L+1)/2$. Its interference nodes in $q^{th}$ column are nodes $(<p+1>,q),...,<(p+(K-L-1)/2>,q)$ and $(<p-1>,q),...,(<p-(K-L-1)/2>,q)$ but we are assigning same color to (non interfering) nodes $(<p-d(K-L+1)/2>,q)$ and $(<p+d(K-L+1)/2>,q)$ for all $d\in \mathbb{N}\cup \{0\}$.
		\item [--] If $q\leq(K-L+1)/2$, color assigned to Node $(p,q)$ is repeated only in columns $q$ and $K-L+1-q$. Its interference nodes in $q^{th}$ column are nodes $(<p-q+1>,q),..., (<p-1>,q)$, $(<p+1>,q),..., (<p+K-L-q>,q)$, but we are assigning same color to (non interfering) nodes $(<p-(K-L+1)>,q)$ and $(<p+(K-L+1)>,q)$, in $(K-L+1-q)^{th}$ column are nodes $(<p-K+L+q>,K-L+1-q),..., (<p-1>,K-L+1-q)$, $(<p+1>,K-L+1-q),..., (<p+q-1>,K-L+1-q)$, but we are assigning same color to (non interfering) nodes $(<p-(K-L-q+1)>,K-L+1-q)$ and $(<p+q>,K-L+1-q)$.
		\item [--] If $q>(K-L+1)/2$, by using similar arguments, we can prove color is repeated only for non-interference nodes.
	\end{itemize}
	%\end{itemize}
	I.e., this coloring scheme ensures that none of the nodes share its color with any of its interfering nodes.   Hence, this is a proper coloring scheme. 
%	\begin{table}[h]
%		\centering
%		\begin{tabular}{| c | c | c | c | c | c |}
%			\hline 
%			(1,2) NI & (2,3) NI & (3,4) NI & (3,1) NI & (2,1) NI & (1,1) I \\
%			\hline 
%			(1,3) NI & (2,4) NI & (3,5) NI & (3,2) NI & (2,2) I & (1,2) I\\
%			\hline
%			(1,4) NI & (2,5) NI & (3,6) NI & (3,3) I & (2,3) I & (1,3) I\\
%			\hline 
%			(1,5) NI & (2,6) NI & (3,7) I & (3,4) I & (2,4) I & (1,4) I\\
%			\hline
%			(1,6) NI & (2,7) I & (3,1) I & (3,5) I & (2,5) I & (1,5) I\\
%			\hline 
%			{\color{red}(1,7)} & (2,1) 	I & (3,2) I & (3,6) I & (2,6) I & (1,6) I\\
%			\hline 
%			(1,1) I & (2,2) I & (3,3) I & (3,7) I & (2,7) I & {\color{red}(1,7)} NI\\
%			\hline 
%			(1,2) I & (2,3) I & (3,4) I & (3,1) I & (2,1) NI & (1,1) NI\\
%			\hline 
%			(1,3) I & (2,4) I & (3,5) I & (3,2) NI & (2,2) NI & (1,2) NI\\
%			\hline
%			(1,4) I & (2,5) I & (3,6) NI & (3,3) NI & (2,3) NI & (1,3) NI\\
%			\hline 
%			(1,5) I & (2,6) NI & (3,7) NI & (3,4) NI & (2,4) NI & (1,4) NI\\
%			\hline
%			(1,6) NI & (2,7) NI & (3,1) NI & (3,5) NI & (2,5) NI & (1,5) NI\\
%			\hline 
%			{\color{red}(1,7)} NI & (2,1) NI & (3,2) NI & (3,6) NI & (2,6) NI & (1,6) NI\\
%			\hline 
%			(1,1) NI & (2,2) NI & (3,3) NI & (3,7) NI & (2,7) NI & {\color{red}(1,7)} NI\\
%			\hline
%		\end{tabular}
%		\vspace{0.1cm}
%		\caption{Coloring scheme for Table $(N,14,8)-$CCDN. NI: Non-interfering node I:Interfering node for  Node (6,1). } \label{Tab:colorKeven}
%	\end{table}
%	For example consider the $(N, 14, 8)-$CCDN. The coloring scheme is shown  in Table \ref{Tab:colorKeven}. We also highlight Node (6,1) interference and non-interference nodes.
	
	Observe that {for Node ($i,j$), $\frac{(K-L)(K-L+1)}{2}$ colors appear in the closed anti-outneighbor\\hood\remove{, one color for Node ($i,j$), one color each  for nodes ($i,\bar{j}$), one common color for nodes ($<i+1>,1$) and ($<i-2>,3$), one common color for nodes ($<i+2>,1$) and ($<i-1>,3$), one common color for nodes ($<i+1>,2$) and ($<i-1>,2$)}}. Hence, the local chromatic number of  the ICP is $\frac{(K-L)(K-L+1)}{2}$. %Therefore, from Lemma \ref{lemma:upperbound}, we broadcast $\frac{(K-L)(K-L+1)}{2}$ messages. Since, each sub-file is of size ${1}/{K}$ units, $R({N}/{K})=\frac{(K-L)(K-L+1)}{2K}$ units.
	Therefore, from Lemma \ref{lemma:upperbound}, an upper bound on the corresponding ICP is $\frac{(K-L)(K-L+1)}{2}$ and  here each sub-file is of size ${1}/{K}$ units. Hence, $R({N}/{K})=\frac{(K-L)(K-L+1)}{2K}$ units.
\end{itemize}

Hence, for all the cases in Theorem \ref{thm_ubexact}, the transmission rates  $R=K$, $R=\frac{(K-L)(K-L+1)}{2K}$ and $R=0$ are achievable at memory points $M=0$, $M=N/K$ and $M=2N/K$ respectively. The transmission rate  at intermediate values is given by memory-sharing and is equal to $R_{lb}(M)$ defined in \eqref{Eqn:LB1}. 

Hence, we achieve $R_{lb}(M)$ and from Theorem \ref{thm_lb}, we  can conclude that  $R_{lb}(M)$ is the exact rate-memory trade-off ($R^*(M)$) for the cases mentioned in Theorem \ref{thm_ubexact}.}
%\end{proof}
%%%%%\input{examplenewkeven.tex}
%\input{examplenewkorderwise.tex}
\subsection{Proof of Theorem \ref{thm_new}}
%\subsection{Theorem \ref{thm_ubexact}}
\begin{proof}
%Now, we discuss the Theorem \ref{thm_new} proof. Theorem \ref{thm_new} gives a new achievable rate of general $(N,K,L)-$CCDN. 
We first discuss the $L\geq K/2$ case. The proof follows along the same lines as Theorem \ref{thm_ubexact}. The placement scheme is identical and in the delivery phase, the key distinction is in the coloring scheme used for the corresponding ICPs.
%For better understanding, we first discuss {\color{red}the} $L\geq K/2$ case. {\color{blue}The proof follows the same steps of Theorem \ref{thm_ubexact}. The only difference is {\color{red}the} coloring scheme.}
\subsubsection{ $(N,K,L\geq K/2)-$CCDN}
%\subsubsection{$(N,K,L\geq K/2)-$CCDN}$\newline$
	For our achievability scheme, we consider  3 corner points $M=\{0,{N}/{K},{2N}/{K}\}$. As mentioned in Section \ref{sec:policy}  the rates $R=K$ and $R=0$ are achievable at memory points $M=0$ and $M=2N/K$ respectively.
	Now, we discuss  the memory point $M=N/K$.	
	
	We use the same placement scheme as given in the proof of Theorem \ref{thm_ubexact}, Section~\ref{proofs_placement} and following the same arguments before,
	 we get the ICP for $L\geq K/2$ as shown in Table \ref{Tab:lk1}.
		 %From Theorem \ref{thm_ubexact}, the ICP for $L\geq K/2$ is given in {\color{blue}Table \ref{Tab:lk1}}.  
		 In the table,
		 \begin{enumerate}
		 	\item Row $p$ of the table represents User $p$'s needed subfiles, 
		 	\item Node ($p,q$) is $F_{d_{p},[<p+q>:<p+q+L-1>]}$.
		 \end{enumerate}
	 Observe that the subscripts of the elements in a column are sliding windows of length $L$.
		 	\begin{table}[h]
		 	\begin{minipage}{.70\linewidth}
		 		%\caption{}
		 		\centering
		 		%\resizebox{\linewidth}{!}{
		 		\begin{tabular}{| c | c | c | c |}
		 			\hline 
		 			$F_{d_1,[2:<L+1>]}$ & $F_{d_1,[3:<L+2>]}$ & ... & $F_{d_1,[K-L+1:K]}$\\
		 			\hline 
		 			$F_{d_2,[3:<L+2>]}$ & $F_{d_2,[4:<L+3>]}$ & ... & $F_{d_2,[<K-L+2>:1]}$\\
		 			\hline 
		 			\vdots & \vdots & ... & \vdots \\
		 			\hline
		 			${F}_{d_p,[<p+1>:<p+L>]}$ &  ${F}_{d_p,[<p+2>:<p+L+1>]}$ & ... & ${F}_{d_p,[<p+K-L>:<p+K-1>]}$\\
		 			\hline 
		 			\vdots & \vdots & ... & \vdots \\
		 			\hline
		 			${F}_{d_K,[1:<L>]}$ &  ${F}_{d_K,[2:<L+1>]}$ & ... &  ${F}_{d_K,[<K-L>:<K-1>]}$\\
		 			\hline
		 		\end{tabular}
		 		\vspace{0.2cm}
		 		\caption{ICP for $(N,K,L\geq K/2)-$CCDN. Row $p$ contains the subfiles needed for User $p$. } \label{Tab:lk1}
		 	\end{minipage}%
		 	\hspace{0.05\linewidth}
		 	\begin{minipage}{.24\linewidth}
		 		\centering
		 		%\caption{}
		 		%\resizebox{\linewidth}{!}{
		 		\begin{tabular}{| c | c | c | c | }
		 			\hline 
		 			1 & 10 & 19 & 28  \\
		 			\hline 
		 			2 & 11 & 20 & 29 \\
		 			\hline
		 			3 & 12 & 21 & 30 \\
		 			\hline 
		 			4 & 13 & 22 & 31 \\
		 			\hline
		 			5 & 14 & 23 & 32 \\
		 			\hline 
		 			6 & 15 & 24 & 33 \\
		 			\hline 
		 			7 & 16 & 25 & 34 \\
		 			\hline 
		 			8 & 17 & 26 & 35 \\
		 			\hline 
		 			9 & 18 & 27 & 36 \\
		 			\hline
		 		\end{tabular}
		 		\vspace{0.2cm}
		 		\caption{coloring scheme for ($N,K=9,L=5$)$-$CCDN  } \label{Tab:color230}
		 	\end{minipage} 
		 	%\vspace{0cm}
		 	%\caption{Global caption}
		 		\vspace{-0.5in}
		 \end{table}	
	 	
		Recall from Section \ref{sec:prelim} that a coloring scheme for an ICP  is said to be \textit{proper} if no node shares its color with any of its interfering nodes. For our ICP, we take $K(K-L)$ colors and assign a unique color to each node. Explicitly, let the colors be $\{1,2,...K(K-L)\}$. For Node $(p,q)$ assign Color $(q-1)K+p$. This is a proper coloring scheme. The coloring scheme for $(N,K=9,L=5)-$CCDN is shown in Table \ref{Tab:color230}. %Since each node is assigned with a different color, this is a proper coloring scheme.

		Consider any Node ($p,q$). It can be verified that in each column, the index $p$
			%{\color{blue}Because of symmetrical property (Property (\emph{ii})) of the table, in each column, any $p$ 
			appears exactly $L$ times as subscript.  Hence, every node has $L$ non-interfering nodes in every column. Therefore the closed anti-outneighborhood of any node contains $(K-L)^2$ nodes. For each node, we assign a different color. Hence, the local chromatic number of  the ICP is $(K-L)^2$. Therefore from Lemma \ref{lemma:upperbound}, we broadcast $(K-L)^2$ messages. Since each sub-file is of size ${1}/{K}$ units, $R(N/K)={(K-L)^2}/{K}$ units $=	K\big(1-\frac{LM}{N}\big)^2$ units.

	%	We can observe this coloring scheme gives, for each node local chromatic number value 6.

%\end{proof}
\subsubsection{ $(N,K,L< K/2)-$CCDN}
%Now, we will discuss  the general $(N,K,L\leq K/2)-$CCDN new achievable rate.
	%\textit{(b). $(N,K,L\leq K/2)-$CCDN:}
%\subsubsection{$(N,K,L\leq K/2)-$CCDN}$\newline$
%\begin{proof}
	For our achievability scheme, we consider the corner points $M={iN}/{K}$, $i\in\{0\} \cup [\lceil K/L \rceil]$.  As mentioned in Section \ref{sec:policy}  the rates $R=K$ and $R=0$ are achievable at memory points $M=0$ and $M=\big\lceil\frac{K}{L} \big\rceil \frac{N}{K}$ respectively. Now, we  discuss  the remaining memory points. Let $M={iN}/{K}$, where $i\in [\lfloor K/L \rfloor]$.

	%The number of elements ($X$)in $\hat{\mathcal{S}}$ are 
	%$$X=|\hat{\mathcal{S}}|={K-iL+i-1 \choose i-1}\frac{K}{i}$$.
	Recall from Section \ref{sec:policy} that in our policy, we divide each file into $|\hat{\mathcal{S}}|={K-iL+i-1 \choose i-1}\frac{K}{i}$  parts, with one subfile corresponding to each subset $s \in \hat{\mathcal{S}}$ and store the subfile corresponding to set $s$ in all the $i$ caches whose index belongs to $s$.

	%Let $\hat{\mathcal{S}}$ be the union of subsets $s$ of $\{1,2,...,K\}$ which satisfy (\emph{i}) $|s|=i$, and (\emph{ii}) if { $i>1$}, every two elements $(j,l)$ of $s$ satisfy $|j-l|\geq L$ and $|K-|j-l||\geq L$. The number of elements ($X$)in $\hat{\mathcal{S}}$ are $$X=|\hat{\mathcal{S}}|={K-iL+i-1 \choose i-1}\frac{K}{i}$$.
	
	%\textit{Placement policy:}  Divide each file into $X$ parts, with one subfile corresponding to each subset $s \in \hat{S}$. Store the subfile corresponding to set $s$ in all the caches whose index belongs to $s$. 
	
	%\textit{Delivery policy:} 
	Let the user request profile be $\{d_1,d_2,...,d_K\}$, i.e., User $l$ requests File $d_l$. Some of the subfiles of File $d_l$  are already stored in User $l$'s accessible caches. Hence, User $l$ needs only those subfiles of File $d_l$ which are not stored in its accessible caches. It can be verified that the the number of subfiles needed for each user is $\frac{K}{i}{K-iL+i-1 \choose i-1}-L{K-iL+i-1 \choose i-1}={K-iL+i-1 \choose i}$.  Like $L\geq K/2$ case, we map the  problem here to an instance of the ICP described in Section~\ref{sec:prelim}, with $n=K{K-iL+i-1 \choose i}$ nodes, each one corresponding to a distinct subfile.  Now, we form a table with the following properties:
	\begin{enumerate}
		\item  Row $p$ of the table represents User $p$'s needed subfiles,
		\item  If Node ($1,q$) is $F_{d_{1},[s_1:<s_1+L-1>]\cup[s_2:<s_2+L-1>]\cup,...,\cup [s_i:<s_i+L-1>]}$, then  Node ($p,q$) is\\ $F_{d_{p},[s_1+p-1:<s_1+L-1+p-1>]\cup[s_2+p-1:<s_2+L-1+p-1>]\cup,...,\cup [s_i+p-1:<s_i+L-1+p-1>]}$.
	\end{enumerate} 
%	(\emph{i}) \\
%	(\emph{ii}) \\
	%Note that  $F_{d_r,[s_1+r-1:<s_1+L-1+r-1>]\cup[s_2+r-1:<s_2+L-1+r-1>]\cup,...,\cup [s_i+r-1:<s_i+L-1+r-1>]}$ is the subfile structure. The 
	The number of columns in the table is equal to the number of File $d_1$'s subfiles needed for User $1$, which is equal to ${K-iL+i-1 \choose i}$.

	For this ICP, we take $K{K-iL+i-1 \choose i}$ colors and assign one color for one node. Explicitly, let the colors be $\Big\{1,2,...K{K-iL+i-1 \choose i}\Big\}$. For Node $(p,q)$ assign Color $(q-1)K+p$. 
	
	The subscript of a node contains $iL$ elements. 	Consider any Node ($p,q$). It can be verified that in each column, the index $p$   appears exactly $iL$ times as subscript.  Hence, every node has $iL$ non-interfering nodes in every column. Therefore the closed anti-outneighborhood of any node contains $(K-iL){K-iL+i-1 \choose i}$ nodes. For each node, we assign a different color. Hence, the local chromatic number of  the ICP is bounded by $(K-iL){K-iL+i-1 \choose i}$. Therefore from Lemma \ref{lemma:upperbound}, we broadcast $(K-iL){K-iL+i-1 \choose i}$ messages. Since each sub-file is of size ${1}/|\hat{\mathcal{S}}|$ units, $R(M)={(K-iL){K-iL+i-1 \choose i}}/|\hat{\mathcal{S}}|={K\Big(1-\frac{LM}{N}\Big)^2}$ units.
\end{proof}

\subsection{Proof of Corollary \ref{cor_ob}}
\begin{proof}
\remove{An upper bound $R_{ub}(M)$  on the optimal rate-memory trade-off for $(N,K,L\geq K/2)-$CCDN is given in Corollary \ref{cor_ub}, and is 
\begin{align*}
R_{ub}(M)= \begin{cases}
K-\Big[K-\frac{(K-L)^2}{K}\Big]\frac{MK}{N}, & \text{if } 0\leq M\leq \frac{N}{K},  \\
\frac{(K-L)^2}{K}(2-\frac{MK}{N}), & \text{if } \frac{N}{K}\leq M\leq \frac{2N}{K},\\
0 & \text{if } M\geq \frac{2N}{K}.
\end{cases}
%\label{Eqn:UB1}
\end{align*}
A lower bound $R_{lb}(M)$ on the optimal rate-memory trade-off for $(N,K,L\geq K/2)-$CCDN is given in Theorem 	\ref{thm_lb}, and is 
\begin{align*}
{R}_{lb}(M)= \begin{cases}
K-\Big[K-\frac{(K-L)(K-L+1)}{2K}\Big]\frac{MK}{N}, & \text{if } 0\leq M\leq \frac{N}{K},  \\
\frac{(K-L)(K-L+1)}{2K}(2-\frac{MK}{N}), & \text{if } \frac{N}{K}\leq M\leq \frac{2N}{K},\\
0 & \text{if } M\geq \frac{2N}{K}.
\end{cases}
%\label{Eqn:LB1}
\end{align*}}
The upper bound $R_{ub}(M)$ and lower bound $R_{lb}(M)$ on the optimal rate-memory trade-off for $(N,K,L\geq K/2)-$CCDN are given in Corollary \ref{cor_ub} and Theorem 	\ref{thm_lb} respectively. 

%From Section \ref{sec:lb}, the lower bound is 
%	
%	\begin{align}
%	R^*(M)\geq K-\bigg[K-\frac{(K-L)(K-L+1)}{2K}\bigg]\frac{MK}{N}   \text{ units} \label{eq:conv_rate1g1} \\
%	R^*(M)\geq \frac{(K-L)(K-L+1)}{2K}\Big[2-\frac{MK}{N}\Big]\text{ units} \label{eq:conv_rate2}
%	\end{align}

Hence,	for $0 \leq M \leq \frac{N}{K}$,
	
	\begin{align*}
	\frac{R_{ub}(M)}{R^*(M)} \leq \frac{R_{ub}(M)}{R_{lb}(M)} & \leq \frac{K-\Big[K-\frac{(K-L)^2}{K}\Big]\frac{MK}{N}}{K-\bigg[K-\frac{(K-L)(K-L+1)}{2K}\bigg]\frac{MK}{N}}
%	&=\frac{K\big[1-\frac{MK}{N}\big]+\big[\frac{(K-L)^2}{K}\big]\frac{MK}{N}}{K\big[1-\frac{MK}{N}\big]+\big[\frac{(K-L)(K-L+1)}{2K}\big]\frac{MK}{N}}\\
	\leq \frac{K\big[1-\frac{MK}{N}\big]+\big[\frac{(K-L)^2}{K}\big]\frac{MK}{N}}{\frac{K}{2}\big[1-\frac{MK}{N}\big]+\big[\frac{(K-L)^2}{2K}\big]\frac{MK}{N}}
	=2,
	\end{align*}
	
	and for $\frac{N}{K}\leq M \leq \frac{2N}{K}$,
	\begin{align*}
	\frac{R_{ub}(M)}{R^*(M)} \leq \frac{R_{ub}(M)}{R_{lb}(M)} & \leq \frac{\frac{(K-L)^2}{K}\Big[2-\frac{MK}{N}\Big]}{\frac{(K-L)(K-L+1)}{2K}\Big[2-\frac{MK}{N}\Big]}
	\leq \frac{\frac{(K-L)^2}{K}\Big[2-\frac{MK}{N}\Big]}{\frac{(K-L)^2}{2K}\Big[2-\frac{MK}{N}\Big]}
	=2.
	\end{align*}
\end{proof}

\bibliographystyle{IEEEtran}
\bibliography{myref2}
\section*{Appendix I} \label{sec:442lb}
\section*{Lower bound for $(N=4.K=4,L=2)-$CCDN}
%\subsection{Lower bound for $(N=4.K=4,L=2)-$CCDN}\label{sec:442lb}
Recall that in ($N=4$, $K=4$, $L=2$) $-$ CCDN, the server has $N=4$ files $(\mathcal{F}_1,\mathcal{F}_2,\mathcal{F}_3,\mathcal{F}_4)$. Any uncoded placement policy divides each file $\mathcal{F}_i$ into $2^K=16$ disjoint parts (subfiles), denoted by $\left\{\mathcal{F}_{i,\mathcal{W}}: \mathcal{W}\in P([4])\right\}$, where $\mathcal{F}_{i,\mathcal{W}}$ denotes the part of file $\mathcal{F}_i$ which is available (via the caches) exclusively to the users in $\mathcal{W}$, and $P(\mathcal{S})$  denotes the power set of $\mathcal{S}$.  

Let $x_{i,j}$ denote the total size of the file parts (in units) which are each stored on $j$ caches and are available to $i$ users. For example, $x_{0,0 }$ indicates, total size of file  parts which are each stored in  none of the caches and are available to none of the  users, i.e., $x_{0,0 }=|\mathcal{F}_{1,\phi}|+|\mathcal{F}_{2,\phi}|+|\mathcal{F}_{3,\phi}|+|\mathcal{F}_{4,\phi}|$. Hence,
\begin{align}
\sum_{i=0}^{4}\sum_{j=0}^{4}x_{i,j}=4 \text{ (total size of all files)}, \label{eq:conv_totalfilesize1}\\
\sum_{i=0}^{4}\sum_{j=0}^{4}jx_{i,j}\leq 4M \text{ (total storage capacity)}. \label{eq:conv_totalcachesize1}
\end{align}
Our setup with $K = 4$ and $L = 2$ implies that $x_{i,j}$ can be non-zero for only some of the possible pairs $(i,j)$. {For example, $x_{1,1 }$ is not possible because if we store a file part in any cache, it will be available to 2 users, and hence $x_{1,1}=0$.} Table~\ref{tab:impo442} lists the combinations of $i$ and $j$ which are not possible, and hence for which $x_{i,j}=0$.

\begin{table}[h]
	\centering
	\begin{tabular}{|c|c|}
		\hline
		$i$ & $j$\\
		\hline
		0 & 1,2,3,4\\
		\hline
		1 & 0,1,2,3,4\\
		\hline
		2 & 0,2,3,4 \\
		\hline
		3 & 0,1,3,4\\
		\hline
		4 & 0,1\\
		\hline
	\end{tabular}
	\vspace{0.1in}
	\caption{Pairs $(i,j)$ for which $x_{i,j} = 0$}
	\label{tab:impo442}
		\vspace{-0.5in}
\end{table}

After removing the combinations of $i$ and $j$ which are not possible,  \eqref{eq:conv_totalfilesize1}, \eqref{eq:conv_totalcachesize1} become
\begin{align}
x_{0,0}+x_{2,1}+x_{3,2}+x_{4,2}+x_{4,3}+x_{4,4}=4 , \label{eq:conv_totalfilesize}\\
x_{2,1}+2x_{3,2}+2x_{4,2}+3x_{4,3}+4x_{4,4}\leq4M .\label{eq:conv_totalcachesize}
\end{align}
Let the request profile be $\mathbf{d}=(d_1,d_2,d_3,d_4)\in[1:4]^4$, where $d_i\neq d_j$ for all $i\neq j$. According to $\mathbf{d}$, User 1,2,3,4 requests $\mathcal{F}_{d_1}, \mathcal{F}_{d_2}, \mathcal{F}_{d_3},\mathcal{F}_{d_4}$ respectively. Similar to the  analysis of the $(N=4.K=4,L=2)-$CCDN achievable rate in Section \ref{sec:policy}, we generate an instance of the index coding problem for each request profile $\mathbf{d}$. There is a node corresponding to each subfile $\mathcal{F}_{i, \mathcal{W}}$ demanded by a real user in the caching system, which does not already have it in its cache. As before, the side information at the node representing (and requesting) some Subfile $i$ are the subfiles available to the user which is requesting Subfile $i$. Based on this, the side information graph for the index coding problem instance can be created.

Recall from Lemma~\ref{lemma:lowerbound} that any acyclic induced subgraph of the side information graph provides a lower bound on the server transmission rate of the index coding problem. For a request profile $\mathbf{d}$, consider a rotation $\mathbf{u}=(u_1,u_2,u_3,u_4)$ of $\{1,2,3,4\}$. For any such $\mathbf{u}$, a set of nodes inducing an acyclic subgraph in the side information graph is as follows:
\begin{itemize}
	\item $\mathcal{F}_{d_{u_1},\mathcal{W}_1}$ for all valid{\footnote{Our problem setup doesn't support some subsets. One example is $\{2,4\}$, because no cache is common to User 2 and User 4 and if it is stored in 2 caches then it will be available to at least 3 users.}} $\mathcal{W}_1 \subseteq [1:4]\backslash \{u_1\}$,
	\item $\mathcal{F}_{d_{u_2},\mathcal{W}_2}$ for all valid $\mathcal{W}_2 \subseteq [1:4]\backslash \{u_1,u_2\}$,
	\item $\mathcal{F}_{d_{u_3},\mathcal{W}_3}$ for all valid $\mathcal{W}_3 \subseteq [1:4]\backslash \{u_1,u_2,u_3\}$,
	\item $\mathcal{F}_{d_{u_4},\mathcal{W}_4}$ for all valid $\mathcal{W}_4 \subseteq [1:4]\backslash \{u_1,u_2,u_3,u_4\}$.
\end{itemize}   

For example, when $\mathbf{d}=(1,3,4,2)$ and $\mathbf{u}=(2,3,4,1)$, the selected nodes include
\begin{itemize}
	\item $d_{u_1}=d_2=3: \mathcal{F}_{3,\mathcal{W}_1}$ for valid subsets $\mathcal{W}_1 \subseteq \{3,4,1\}$,
	\item $d_{u_2}=d_3=4:\mathcal{F}_{4,\mathcal{W}_2}$ for valid subsets $\mathcal{W}_2 \subseteq\{4,1\}$,
	\item $d_{u_3}=d_4=2:F_{2,\mathcal{W}_3}$ for valid subsets $\mathcal{W}_3 \subseteq \{1\}$,
	\item $d_{u_4}=d_1=1:F_{1,\mathcal{W}_4}$ for valid subsets $\mathcal{W}_4 \subseteq\phi$.
\end{itemize}
This collection of nodes is {depicted} in Figure \ref{fig:acyclic}. {As} the figure illustrates, the corresponding subset $\mathcal{J}$ of nodes $(\mathcal{F}_{3,\phi}, \mathcal{F}_{3,\{3,4\}},\mathcal{F}_{3,\{4,1\}},\mathcal{F}_{3,\{3,4,1\}}, \mathcal{F}_{4,\phi},\mathcal{F}_{4,\{4,1\}}, \mathcal{F}_{2,\phi},\mathcal{F}_{1,\phi})$ does not induce a cycle in the side information graph. Then from Lemma \ref{lemma:lowerbound}, we have  
\begin{align}
R^*(M)\geq |\mathcal{F}_{3,\phi}|+ |\mathcal{F}_{3,\{3,4\}}|+|\mathcal{F}_{3,\{4,1\}}|+|\mathcal{F}_{3,\{3,4,1\}}|+
|\mathcal{F}_{4,\phi}|+|\mathcal{F}_{4,\{4,1\}}|+ |\mathcal{F}_{2,\phi}|+|\mathcal{F}_{1,\phi}|.\label{eq:particular1}
\end{align}

\begin{figure}[t]
	\begin{center}
		\includegraphics[scale=0.13]{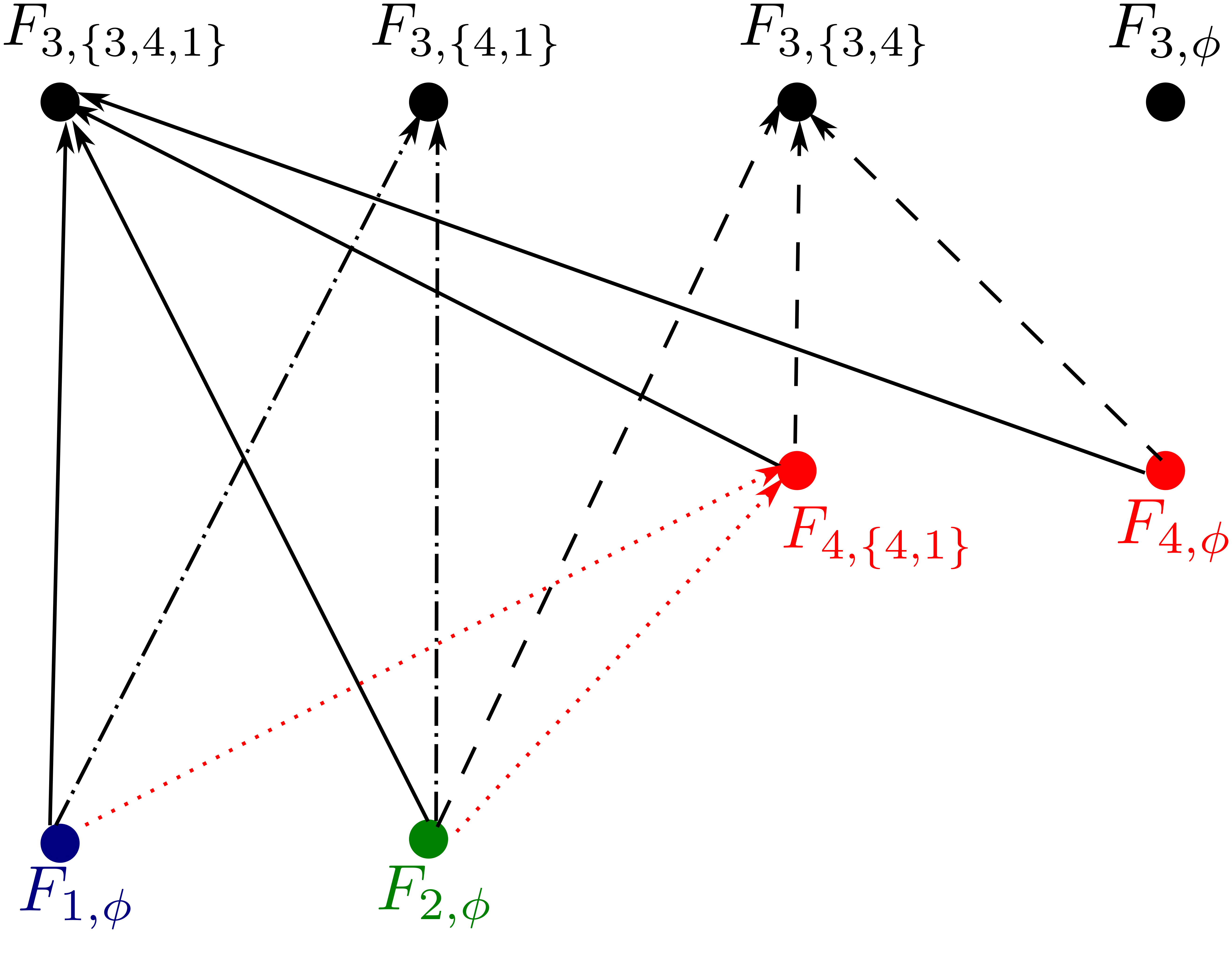}
		\caption{Directed acyclic subgraph  for request pattern $\mathbf{d}=(1,3,4,2)$ and user rotation $\mathbf{u}=(2,3,4,1)$.  All edges are in upward direction and hence the graph has no cycles.  \label{fig:acyclic}}
	\end{center}
	\vspace{-0.5in}
\end{figure}

\remove{{\color{red}Note that, if we take all possible combinations of user rotation patterns and user request profiles then each term will appear equal number of times. Hence, we are interested in number of terms instead of exact terms.
		For request pattern $\mathbf{d}=(1,3,4,2)$ and user rotation $\mathbf{u}=(2,3,4,1)$, the number of terms involved in the lower bound  are given in Table~\ref{Tab:not}\footnote{For better understanding, the total number of terms summed in each $x_{i,j}$ are given below.
			\begin{itemize}
				\item[--] $x_{0,0}=\sum_{r=1}^N |{F}_{r,\phi}|$, the number of terms involved is $N$,
				\item[--] $x_{2,1}=\sum_{r=1}^N\sum_{s=1}^{K} |{F}_{r,\{<s>,<s+1>\}}|$, the number of terms involved is $NK$,
				\item[--] $x_{3,2}=\sum_{r=1}^N\sum_{s=1}^{K} |{F}_{r,\{<s>,<s+1>,<s+2>\}}|$, the number of terms involved is $NK$,
				\item[--] $x_{4,2}=\sum_{r=1}^N |{F}_{r,\{1,2,3,4\}}|$, the number of terms involved is $N$  (Here, ${F}_{r,\{1,2,3,4\}}$ are the file parts which are stored in 2 caches and are available to 4 users),
				\item[--] $x_{4,3}=\sum_{r=1}^N |{F}_{r,\{1,2,3,4\}}|$, the number of terms involved is $N$ (Here, ${F}_{r,\{1,2,3,4\}}$ are the file parts which are stored in 3 caches and are available to 4 users),
				\item[--] $x_{4,4}=\sum_{r=1}^N |{F}_{r,\{1,2,3,4\}}|$, the number of terms involved is $N$  (Here, ${F}_{r,\{1,2,3,4\}}$ are the file parts which are stored in 4 caches and are available to 4 users).
		\end{itemize}}
		
		\begin{table}[h]
			\centering
			\begin{tabular}{|c|c|c|c|}
				\hline
				Related to & Total no. of terms    & No. of terms involved & Terms\\
				\hline
				$x_{0.0}$  & $N$ & $K=4$ & $F_{3,\phi},
				\mathcal{F}_{4,\phi},\mathcal{F}_{2,\phi},\mathcal{F}_{1,\phi}$ \\
				\hline
				$x_{{\color{blue}2},1}$ & $NK$ &  $\frac{(K-{\color{blue}2})(K-{\color{blue}2}+1)}{2}$ =3& $\mathcal{F}_{3,\{3,4\}},\mathcal{F}_{3,\{4,1\}},
				\mathcal{F}_{4,\{4,1\}}$\\
				\hline
				$x_{{\color{blue}3},2}$ & $NK$  & $\frac{(K-{\color{blue}3})(K-{\color{blue}3}+1)}{2}$  =1 & $\mathcal{F}_{3,\{3,4,1\}}$\\
				\hline
				$x_{{\color{blue}4},2}$ & $N$  & 0 &-\\
				\hline
				$x_{{\color{blue}4},3}$  & $N$  & 0 &-\\
				\hline
				$x_{{\color{blue}4},4}$ & $N$  & 0 & -\\
				\hline
			\end{tabular}
			\vspace{0.1cm}
			\caption{Description of number of terms involved in one particular user rotation pattern and request profile}
			\label{Tab:not}
\end{table}}}

In general, we can find a similar inequality as \eqref{eq:particular1} for each combination of request profiles $\mathbf{d}$ with distinct demands amongst the users ($4!$ permutations) and rotations $\mathbf{u}$ of the users ($4$ rotations). We then sum all the $4\times 4!$ ($K(N!)$) inequalities to obtain
\begin{align*}
K(N!)R^*(M)\geq&\sum_{\mathbf{d}}\sum_{\mathbf{u}}\sum_{j\in[1:4]}\sum_{W_j\in[1:4] \backslash \{u_1...u_j\}}|F_{d_{u_j},W_j}|\\
=&K(N!)\frac{K}{N}x_{0,0}+K(N!)\frac{(K-L)(K-{L}+1)}{2NK}x_{{L},1}+\\
&K(N!)\sum_{j=2}^{K-L}\sum_{i=j-1}^{K-L-1}\frac{(K-({L+i}))(K-({L+i})+1)}{2NK}x_{{L+i},j}\\
=&4\times (4!)\times x_{0,0}+6\times 3\times x_{2,1}+6 \times x_{3,2},
\end{align*}
where recall that $x_{i,j}$ denotes the total size of the file parts (in units) which are  each stored on $j$ caches and are available to $i$ users.

Hence,
\begin{align}
R^*(M) \geq x_{0,0}+3{x_{2,1}}/{16}+{x_{3,2}}/{16} .\label{eq:conv_rate}
\end{align}

If we substitute $x_{0,0}$ and $x_{2,1}$ from \eqref{eq:conv_totalfilesize} and \eqref{eq:conv_totalcachesize} in \eqref{eq:conv_rate}, we get
\begin{align}
&R^*(M)\geq 4-\frac{13}{4}M+\frac{11}{16}x_{3,2}+\frac{5}{8}x_{4,2}+\frac{23}{16}x_{4,3}+\frac{36}{16}x_{4,4}, \nonumber \\
\Longrightarrow \ &R^*(M)\geq 4-{13M}/{4} \text{ units}.\label{eq:conv_rate1}
\end{align}

If we substitute $x_{2,1}$ and $x_{4,2}$ from \eqref{eq:conv_totalfilesize} and \eqref{eq:conv_totalcachesize} in \eqref{eq:conv_rate}, we get
\begin{align}
%&R\geq \frac{3}{2}-\frac{3}{4}M+\frac{5}{8}x_{0,0}+\frac{1}{16}x_{3,2}+\frac{3}{16}x_{4,3}+\frac{3}{8}x_{4,4} \nonumber\\
R^*(M)\geq {3}/{2}-{3M}/{4}\text{ units}. \label{eq:conv_rate2}
\end{align}

Hence, from  \eqref{eq:conv_rate1} and \eqref{eq:conv_rate2},
\begin{align*}
R^*(M)\geq  \begin{cases}
4-13M/4 & \text{if } 0\leq M\leq 1,  \\
3/2-3M/4 & \text{if } 1\leq M\leq 2,\\
0 & \text{if } M\geq 2.
\end{cases}
\end{align*}

Note that the above lower bound matches with the upper bound we derived for ($N=4$, $K=4$, $L=2$)$-$CCDN in Section \ref{sec:policy}.  

%As illustrated in Figure \ref{fig:exampleratememory}, these lower bounds match exactly with the achievable rate of our proposed scheme for this setup. Thus, for the $(N=4, K=4, L=2)$-CCDN setup, we have characterized the exact rate-memory trade-off under the restriction of uncoded placement. 
%

\remove{\begin{remark}\label{remark:k4l}
		For a system with $N$ files, $K=4$ caches, $K=4$ users, each one is connected to $L=2$ users,  
		by following similar steps, we get
		\begin{align*}
		 R^*(M)\geq  \begin{cases}
		4-\frac{13}{N}M & \text{if } 0\leq M\leq \frac{N}{4},  \\
		\frac{3}{2}-\frac{3}{N}M & \text{if } \frac{N}{4}\leq M\leq \frac{2N}{4},\\
		0 & \text{if } M\geq \frac{2N}{4}.
		\end{cases}
		\end{align*}
		By Remark \ref{remark:k4u}, we can see that our policy is exactly optimal for a system with $N$ files, $K=4$ caches, $K=4$ users, each one is connected to $L=2$ users.
\end{remark}}
%\appendix
%\begin{appendices}
\section*{Appendix II} \label{upperbo}
\section*{Local chromatic number for the special cases}
%\subsection{\color{blue}Local chromatic number for the special cases}\label{upperbo}{\tiny }

Recall from the proof of Theorem \ref{thm_ubexact}  in Section \ref{proofs_placement} that each user needs $K-L$ subfiles and we map the problem there to an instance of ICP with $n=K(K-L)$ nodes.  Lemma \ref{lemma:upperbound} in  Section \ref{sec:prelim} states that the minimum broadcast rate of an ICP is upper bounded by its local chromatic number. In this section, we prove that the local chromatic number of the ICPs corresponding to the cases mentioned in Theorem \ref{thm_ubexact} is upper bounded by ${(K-L)(K-L+1)}/{2}$. 
	
	To understand the structural properties of the  ICP, we form a $K\times(K-L)$ table (see Table \ref{Tab:g1}), such that the entry corresponding to the $p^{th}$ row and $q^{th}$ column contains Node $\mathcal{F}_{d_p,[<p+q>:<p+q+L-1>]}$, i.e., the subfile requested by User $p$ and available at users $<p+q>,$ $<p+q+1>,...,<p+q+L-1>$.  We refer to this entry as the Node $(p,q)$  where, $p\in[K],q\in[K-L]$. Note that the entries in the Row $p$ are the subfiles needed for User $p$. We use the notation $(p,\bar{q})$ to represent all the other nodes in Row $p$ excluding Node ($p,q$).
	
	%{\color{cyan}In Table \ref{Tab:g1}, the subscripted rectangular subsets are circular sliding windows of length $L$ i.e., for $q^{th}$ column, first row's subscripted rectangular subset is $[<q+1>:<q+1+L-1>]$, second row's subscripted rectangular subset is $[<q+2>:<q+2+L-1>]$ and so on. }
	\begin{table}[h]
		%		\begin{minipage}{.70\linewidth}
		%\caption{}
		\centering
		%\resizebox{\linewidth}{!}{
		\begin{tabular}{| c | c | c | c |}
			\hline 
			$\mathcal{F}_{d_1,[2:<L+1>]}$ & $\mathcal{F}_{d_1,[3:<L+2>]}$ & ... & $\mathcal{F}_{d_1,[K-L+1:K]}$\\
			\hline 
			$\mathcal{F}_{d_2,[3:<L+2>]}$ & $\mathcal{F}_{d_2,[4:<L+3>]}$ & ... & $\mathcal{F}_{d_2,[<K-L+2>:1]}$\\
			\hline 
			\vdots & \vdots & ... & \vdots \\
			\hline
			$\mathcal{F}_{d_p,[<p+1>:<p+L>]}$ &  $\mathcal{F}_{d_p,[<p+2>:<p+L+1>]}$ & ... & $\mathcal{F}_{d_p,[<p+K-L>:<p+K-1>]}$\\
			\hline 
			\vdots & \vdots & ... & \vdots \\
			\hline
			$\mathcal{F}_{d_K,[1:<L>]}$ &  $\mathcal{F}_{d_K,[2:<L+1>]}$ & ... &  $\mathcal{F}_{d_K,[<K-L>:<K-1>]}$\\
			\hline
		\end{tabular}
		\vspace{0.2cm}
		\caption{ICP for $(N,K,L\geq K/2)$-CCDN. Row $p$ contains the subfiles needed for User $p$.}  \label{Tab:g1}
		%		\end{minipage}%
		%	\hspace{0.05\linewidth}
		%		\begin{minipage}{.24\linewidth}
		%			\centering
		%			%\caption{}
		%			%\resizebox{\linewidth}{!}{
		%			\begin{tabular}{| c |}
		%				\hline 
		%				$F_{d_1,[2:K]}$\\
		%				\hline 
		%				$F_{d_2,[3:1]}$\\
		%				\hline 
		%				\vdots \\
		%				\hline
		%				$F_{d_p,[p+1:p-1]}$ \\
		%				\hline
		%				\vdots  \\
		%				\hline 
		%				$F_{d_K,[1:K-1]}$\\
		%				\hline
		%			\end{tabular}
		%			\vspace{0.2cm}
		%			\caption{ICP for $(N,K,L=K-1)$-CCDN. } \label{Tab:11}
		%		\end{minipage} 
		%\vspace{0cm}
		%\caption{Global caption}
			\vspace{-0.5in}
	\end{table}
	
	%Recall from the placement phase of Theorem \ref{thm_ubexact} proof in Section \ref{sec:proofs} that for a subfile of File $\mathcal{F}_{d_j}$, the subscripted rectangular bracket shows that the subfile is available at the users in the set represented by the subscripted rectangular brackets. 
	For Node ($p,q$), the nodes which contain $p$ in the set which is represented by their subscripted rectangular brackets are available as side-information and are thus non-interfering nodes.   In Table \ref{Tab:g1}, we observe that any $p\in[K]$ will appear in the subscripted rectangular subsets of  $L$ nodes in each column and $p$ will not appear in $p^{th}$ row's nodes. Hence,  for any Node $(p,q)$, $K(K-L)$ nodes can be partitioned as follows: Node $(p,q)$ itself, $(K-L)L$ non-interfering nodes with $L$ in each column, $K-L-1$ interfering nodes in Row $p$ and the remaining $(K-L)(K-L-1)$ nodes which are also interfering nodes to  Node $(p,q)$. We call the $K-L-1$ interfering nodes in Row $p$ as intra-interference nodes and the remaining $(K-L)(K-L-1)$ interference nodes as inter-interference nodes  of Node $(p,q)$. Therefore, for a Node $(p,q)$, the total number of interfering nodes are $(K-L+1)(K-L-1)$ and it's closed anti-outneighborhood contains $(K-L)^2$ nodes.
	%To understand the structural properties of the above ICP, we form a $K\times2$ table (see Table \ref{Tab:21}), such that $p^{th}$ row and $q^{th}$ column contains  $\mathcal{F}_{d_p,[<p+q>:<p+q+K-3>]}$ i.e., the subfile requested by User $p$ and available at users $<p+q>,<p+q+1>,...,<p+q+K-3>$. We refer to this entry as the Node $(p,q)$  (Here, $p\in[K],q\in[2]$). Note that the entries in the Row $p$ are the subfiles needed for User $p$.

Recall from Section \ref{sec:prelim} that a coloring scheme for an ICP  is said to be \textit{proper} if no node shares its color with any of its interfering nodes. The local chromatic number of an ICP is defined as the maximum number of different colors that appear in the closed anti-outneighborhood of any node, minimized over all proper colorings. A viable proper coloring scheme is one which assigns a unique color to each of the $K(K-L)$ nodes.  From the proof of Theorem \ref{thm_new}, this gives an upper bound of $(K-L)^2$ on the local chromatic number. For the cases mentioned in Theorem \ref{thm_ubexact}, we devise an alternate proper coloring scheme which provides an improved upper bound on the local chromatic number.  In our improved coloring scheme, we assign colors such that $(K-L)(K-L-1)$ inter-interference nodes share $(K-L)(K-L-1)/2$ colors without violating the proper coloring scheme restrictions. Hence, the closed anti-outneighborhood of any node contains one color for the node itself, $K-L-1$ colors for intra-interference nodes and  $(K-L)(K-L-1)/2$ colors for inter-interference nodes, i.e., $(K-L)(K-L+1)/2$ colors in total. Therefore, the local chromatic number of the ICP is upper bounded by $(K-L)(K-L+1)/2$.{\footnote{Using Lemma \ref{lemma:lowerbound}, it can be shown that this bound is indeed a lower bound for the ICP mentioned in Table \ref{Tab:g1}.}}  %{\color{cyan}$(K-L)(K-L+1)/2$ is also a lower bound for the ICP mentioned in Table \ref{Tab:2a21}. Therefore, no further improvements are possible.}
\addtocounter{subsection}{-4}
\subsection{An upper bound on the local chromatic number for $L=K-1$}
The ICP for $L=K-1$ is given in Table \ref{Tab:1a1}.
\begin{table}[h]
	\begin{minipage}{.6\linewidth}
		%\caption{}
		\centering
		%\resizebox{\linewidth}{!}{
		\begin{tabular}{| c |}
			\hline 
			$\mathcal{F}_{d_1,[2:K]}$\\
			\hline 
			$\mathcal{F}_{d_2,[3:1]}$\\
			\hline 
			\vdots \\
			\hline
			$\mathcal{F}_{d_p,[p+1:p-1]}$ \\
			\hline
			\vdots  \\
			\hline 
			$\mathcal{F}_{d_K,[1:K-1]}$\\
			\hline
		\end{tabular}
		\vspace{0.2cm}
		\caption{ICP for $(N,K,L=K-1)$-CCDN. } \label{Tab:1a1}
	\end{minipage}%
	\begin{minipage}{.4\linewidth}
		\centering
		%\caption{}
		%\resizebox{\linewidth}{!}{
		\begin{tabular}{| c |}
			\hline 
			$1$ \\
			\hline 
			$2$ \\
			\hline 
			\vdots  \\
			\hline
			$l$  \\
			\hline
			\vdots  \\
			\hline 
			$K$ \\
			\hline
		\end{tabular}
		\vspace{0.2cm}
		\caption{coloring scheme for Table \ref{Tab:1a1} } \label{Tab:color1a2}
	\end{minipage} 
	%\vspace{0cm}
	%\caption{Global caption}
		\vspace{-0.5in}
\end{table}

%{\color{green}Recall from Theorem \ref{thm_ubexact} proof in Section \ref{sec:proofs}, User $p$ needs only $K-L=1$ sub-file and it is $\mathcal{F}_{d_p,[<p+1>:<p+K-1>]}(\mathcal{F}_{d_p,[K]\backslash\{p\}})$. For a subfile of File $\mathcal{F}_{d_p}$, the subscripted rectangular bracket shows that the subfile is available at the users in the set represented by the subscripted rectangular brackets.}
 Observe that, for  Node $(p,1)$,  all the other nodes contain $p$ in the set, which is represented by their subscripted rectangular brackets. Hence, all the other nodes' sub-files are available at Node $(p,1)$ and  thus, all the other nodes are non-interfering nodes. 
%{\color{green}Recall from Section \ref{sec:prelim} that a coloring scheme for an ICP  is said to be \textit{proper} if no node shares its color with any of its interfering nodes.   
%For the ICP of $L=K-1$ case, we take $K$ colors and assign Color $p$ to nodes $(p,1)$.  Table \ref{Tab:color1a2} shows the coloring scheme for the general case.   For any Node $(p,1)$, all the other nodes are non-interfering nodes.    Hence, this is a proper coloring scheme. 
%Again recall from Section \ref{sec:prelim} that the local chromatic number of an ICP is defined as the maximum number of different colors that appear in the closed anti-outneighborhood of any node, minimized over all proper colorings. For the $L=K-1$ case,  any node closed anti-out neighborhood size is 1. Hence, the local chromatic number is 1.} {\color{green}In fact, any coloring scheme is a proper coloring scheme and  gives the local chromatic number 1.}

 Therefore, the closed anti-outneighborhood cardinality is 1. A coloring scheme that assigns $K$ colors to the $K$ nodes is proper and gives the upper bound on the local chromatic number of $(K-L)(K-L+1)/2= 1$. Table \ref{Tab:color1a2} shows one such proper coloring scheme.
%
%{\color{blue}
%%\begin{remark}
%	Note that for $L=K-1$ case, the message $$
%	\mathcal{F}_{d_1,[K]\backslash\{1\}}\oplus \mathcal{F}_{d_2,[K]\backslash\{2\}}\oplus \hdots \oplus \mathcal{F}_{d_K,[K]\backslash\{K\}}
%	$$ is sufficient to serve all the users.
%%\end{remark}
%}
%
\subsection{An upper bound on the local chromatic number for $L=K-2$}
\label{K-2}
The ICP for $L=K-2$ is given in Table \ref{Tab:2a21}.
\begin{table}[h]
	\begin{minipage}{.6\linewidth}
		%\caption{}
		\centering
		%\resizebox{\linewidth}{!}{
		\begin{tabular}{| c | c |}
			\hline 
			$\mathcal{F}_{d_1,[2:K-1]}$ &  $\mathcal{F}_{d_1,[3:K]}$\\
			\hline 
			$\mathcal{F}_{d_2,[3:K]}$ & $\mathcal{F}_{d_2,[4:1]}$\\
			\hline 
			\vdots & \vdots  \\
			\hline
			$\mathcal{F}_{d_p,[<p+1>:<p+K-2>]}$ &   $\mathcal{F}_{d_p,[<p+2>:<p+K-1>]}$\\
			\hline 
			\vdots & \vdots \\
			\hline
			$\mathcal{F}_{d_K,[1:K-2]}$ & $\mathcal{F}_{d_K,[2:K-1]}$\\
			\hline
		\end{tabular}
		\vspace{0.2cm}
		\caption{ICP for $(N,K,L=K-2)$-CCDN.}  \label{Tab:2a21}
	\end{minipage}%
	\begin{minipage}{.4\linewidth}
		\centering
		%\caption{}
		%\resizebox{\linewidth}{!}{
		\begin{tabular}{| c | c |}
			\hline 
			1 & $<3>$\\
			\hline 
			2 & $<4>$\\
			\hline 
			\vdots & \vdots \\
			\hline
			$p$ &  $<p+2>$ \\
			\hline
			\vdots & \vdots \\
			\hline 
			$K$ & $<K+2>$\\
			\hline
		\end{tabular}
		\vspace{0.2cm}
		\caption{coloring scheme for Table \ref{Tab:2a21} } \label{Tab:color2a2}
	\end{minipage} 
	%\vspace{0cm}
	%\caption{Global caption}
		\vspace{-0.5in}
\end{table}

\begin{table}[h]
	\begin{minipage}{.6\linewidth}
		%\caption{}
		\centering
		%\resizebox{\linewidth}{!}{
		\begin{tabular}{| c | c |}
			\hline 
			$\mathcal{F}_{d_1,[2:4]}$ & $\mathcal{F}_{d_1,[3:5]}$\\
			\hline 
			$\mathcal{F}_{d_2,[3:5]}$ & $\mathcal{F}_{d_2,[4:1]}$\\
			\hline 
			$\mathcal{F}_{d_3,[4:1]}$ &  $\mathcal{F}_{d_3,[5:2]}$ \\
			\hline
			$\mathcal{F}_{d_4,[5:2]}$ &  $\mathcal{F}_{d_4,[1:3]}$\\
			\hline 
			$\mathcal{F}_{d_5,[1:3]}$ &  $\mathcal{F}_{d_5,[2:4]}$\\
			\hline
		\end{tabular}
		\vspace{0.2cm}
		\caption{ $(N,K=5,L=3)$-CCDN with request profile ($d_1,d_2,d_3,d_4,d_5$). } \label{Tab:2a22}
	\end{minipage}%
	\quad \quad
	\begin{minipage}{.35\linewidth}
		\centering
		%\caption{}
		%\resizebox{\linewidth}{!}{
		\begin{tabular}{| c | c |}
			\hline 
			1 & 3\\
			\hline 
			2 & 4\\
			\hline 
			3 &  5 \\
			\hline
			4 &1 \\
			\hline 
			5 & 2\\
			\hline
		\end{tabular}
		\vspace{0.2cm}
		\caption{coloring scheme for Table \ref{Tab:2a22} } \label{Tab:color2a2e}
	\end{minipage} 
	\vspace{-0.5in}
%	\quad \quad
%	\begin{minipage}{.45\linewidth}
%		\centering
%		%\caption{}
%		%\resizebox{\linewidth}{!}{
%		\begin{tabular}{| c | c |}
%			\hline 
%			$4$ (NI)& {\color{red}1} (NI)\\
%			\hline 
%			$5$ (NI)& $2$ (I)\\
%			\hline 
%			{\color{red}1} &  3 (I)\\
%			\hline
%			2 (I)& 4 (NI)\\
%			\hline 
%			3 (NI)& 5 (NI)\\
%			\hline
%		\end{tabular}
%		\vspace{0.2cm}
%		\caption{An illustration of Node (3,1) for $K=5$. Here, NI means Non-Interfering node and  I means Interfering node } \label{Tab:color2ae22}
%	\end{minipage} 
	%\vspace{0cm}
	%\caption{Global caption}
\end{table}

Observe that for  Node $(p,q)$,  among the other nodes, only nodes $(p,\bar{q})$,   $(<p+1>,1)$, and  $(<p-1>,2)$ do not contain $p$ in the set which is represented by their subscripted rectangular brackets, and are thus interfering nodes. %{\color{green}Recall from the placement phase of \ref{thm_ubexact} proof in Section VIII  that for a subfile of File $\mathcal{F}_{d_p}$, the subscripted rectangular bracket shows that the subfile is available at the users in the set represented by the subscripted rectangular brackets.} 
As an illustration, we discuss the $(N,K=5,L=3)-$CCDN example  in Table \ref{Tab:2a22}. Observe that for Node (3,1),  nodes (3,2), (4,1), and (2,2)  do not contain 3 in the set represented by their subscripted rectangular brackets, and are thus interfering nodes.
%{\color{blue}Recall from Section \ref{sec:prelim} that a coloring scheme for an ICP  is said to be \textit{proper} if no node shares its color with any of its interfering nodes and  
% the local chromatic number of an ICP is defined as the maximum number of different colors that appear in the closed anti-outneighborhood of any node, minimized over all proper colorings.}

%{\color{green}Recall from Section \ref{sec:prelim} that a coloring scheme for an ICP  is said to be \textit{proper} if no node shares its color with any of its interfering nodes.} 
For the ICP of $L=K-2$ case, we take $K$ colors and assign Color $p$ to nodes $(p,1)$ and $(<p-2>,2)$.  The coloring scheme for the general case is shown in Table \ref{Tab:color2a2} and in Table \ref{Tab:color2a2e} for $(N,K=5,L=3)-$CCDN.   The interfering nodes for nodes in  Row $i$ are present in rows $i-1$, $i$, $i+1$ but we repeat the color of Node ($i,1$) for its non-interfering Node ($<i-2>,2$) and the color of Node ($i,2$) for its non-interfering Node ($<i+2>,1$). So, this coloring scheme ensures that none of the nodes share its color with any of its interfering nodes.   Hence, this is a proper coloring scheme. Note that  the two inter-interference nodes (Node($i+1,1$) and Node ($i-1,2$)) of any node (Node ($i,q$)) share one color (Color $i+1$)}.
%{\color{green}This coloring scheme ensures that the a (every) color of any nodes interfering nodes in other rows repeated twice among them so that the local chromatic number value is $(K-L)(K-L+1)/2$, which is a lower bound for the ICP mentioned in Table \ref{Tab:2a21}.}  
%{\color{green}For the ICP of $L=K-2$ case, we take $K$ colors and assign Color $p$ to nodes $(p,2)$ and $(<p+2>,1)$.  The coloring scheme for the general case is shown in Table \ref{Tab:color22} and in Table \ref{Tab:color2a2e} for $K=5$.  { The interfering nodes for nodes in the Row $i$ are present in rows $i-1$, $i$, $i+1$ but we repeat the colors of nodes in Row $i$ to nodes in rows  $i-2$, and $i+2$. i.e.,{\color{blue}So,} this coloring scheme ensures that none of the nodes share its color with any of its interfering nodes.  (See Table \ref{Tab:color2e22} for Node (3,1) color and its interfering {\color{blue}nodes'} colors).} Hence, this is a proper coloring scheme. }  

%{\color{green}Again recall from Section \ref{sec:prelim} that the local chromatic number of an ICP is defined as the maximum number of different colors that appear in the closed anti-outneighborhood of any node, minimized over all proper colorings.} 
In Table \ref{Tab:2a22}, observe that for Node ($p,q$), 3 colors appear in the closed anti-outneighborhood, one color for Node ($p,q$), one color for intra-interference Node ($p,\bar{q}$), one common color for inter-interference nodes ($<p+1>,1$) and ($<p-1>,2$). Hence, an upper bound on  the local chromatic number of  the ICP is 
 ${(K-L)(K-L+1)}/{2}=3$. %Therefore, from Lemma \ref{lemma:upperbound}, we broadcast 3 messages. %Since, each sub-file is of size ${1}/{K}$ units, $R({N}/{K})={3}/{K}$ units$=\frac{(K-L)(K-L+1)}{2K}$ units.
%%Therefore, from Lemma \ref{lemma:upperbound}, an upper bound on the corresponding ICP is 3 and  here each sub-file is of size ${1}/{K}$ units. Hence, $R({N}/{K})={3}/{K}$ units{\color{green}$=\frac{(K-L)(K-L+1)}{2K}$ units}.
\subsection{An upper bound on the local chromatic number for  $L=K-3$, $K$ even}
\label{K-3} 
The ICP for $L=K-3$ is given in Table \ref{Tab:3a31}.
\begin{table}[h]
	\begin{minipage}{.6\linewidth}
		%\caption{}
		\centering
		%\resizebox{\linewidth}{!}{
		\begin{tabular}{| c | c | c|}
			\hline 
			$\mathcal{F}_{d_1,[2:K-2]}$ &  $\mathcal{F}_{d_1,[3:K-1]}$ &  $\mathcal{F}_{d_1,[4:K]}$\\
			\hline 
			$\mathcal{F}_{d_2,[3:K-1]}$ & $\mathcal{F}_{d_2,[4:K]}$ & $\mathcal{F}_{d_2,[5:1]}$\\
			\hline 
			\vdots & \vdots  & \vdots\\
			\hline
			$\mathcal{F}_{d_p,[<p+1>:<p+K-3>]}$ & 
			$\mathcal{F}_{d_p,[<p+2>:<p+K-2>]}$ &   $\mathcal{F}_{d_p,[<p+3>:<p+K-1>]}$\\
			\hline 
			\vdots & \vdots & \vdots\\
			\hline
			$\mathcal{F}_{d_K,[1:K-3]}$ & $\mathcal{F}_{d_K,[2:K-2]}$ & $\mathcal{F}_{d_K,[3:K-1]}$\\
			\hline
		\end{tabular}
		\vspace{0.2cm}
		\caption{ICP for $(N,K,L=K-3)$-CCDN.}  \label{Tab:3a31}
	\end{minipage}%
	\begin{minipage}{.45\linewidth}
		\centering
		%\caption{}
		%\resizebox{\linewidth}{!}{
		\begin{tabular}{| c | c | c |}
			\hline 
			$(1,1)$ & (2,1) & (1,4)\\
			\hline 
			$(1,2)$ & (2,2) & (1,5)\\
			\hline
			$(1,3)$ & (2,1) & (1,6)\\
			\hline 
			$(1,4)$ & (2,2) & (1,7)\\
			\hline
			\vdots & \vdots & \vdots \\
			\hline 
			$(1,K-1)$ & (2,1) &$(1,2)$\\
			\hline 
			$(1,K)$ & (2,2) &$(1,3)$\\
			\hline
		\end{tabular}
		\vspace{0.2cm}
		\caption{Coloring scheme for Table \ref{Tab:3a31} } \label{Tab:color3a3}
	\end{minipage} 
	%\vspace{0cm}
	%\caption{Global caption}
		\vspace{-0.5in}
\end{table}

Observe that for  Node $(p,q)$,  among the other nodes, only nodes $(p,\bar{q})$,   $(<p+1>,1)$, $(<p+2>,1)$, $(<p-1>,2)$, $(<p+1>,2)$, $(<p-2>,3)$, and  $(<p-1>,3)$ do not contain $p$ in the set which is represented by their subscripted rectangular brackets, and are thus interfering nodes. 

In this case, we take $K+2$ colors. Let the colors be $\{(1,1), (1,2),..., (1,K),$ $ (2,1),(2,2)\}$  and assign color $(1,p)$ to nodes $(p,1)$ and $(<p-3>,3)$ for $1\leq p \leq K$, assign color $(2,1)$ to $(p,2)$ if $p$ is odd and assign color $(2,2)$ to $(p,2)$ if $p$ is even.  The coloring scheme for general case is shown in Table \ref{Tab:color3a3}. %{\color{blue} and in Table \ref{Tab:color3e} for $K=5$}. 
This coloring scheme ensures that none of the nodes share its color with any of its interfering nodes. For example, the interfering nodes for Node ($1,1$) are nodes ($1,2$), ($1,3$), ($2,1$), ($3,1$), ($K,2$), ($2,2$), ($K-1,3$) and ($K,3$). We repeat its color ($1,1$) for a non-interfering node ($K-2, 3$).  %{\color{blue}For example, Node ($1,1$),   interference nodes are nodes   ($1,2$),   and , we are assigning its color ($1,1$) for its non-interfering node ($K-2,3$).} %{\color{blue}(See Table \ref{Tab:color3e2} for Node (3,1) color and its interfering node colors.)} 
Hence, this is a proper coloring scheme.   % The coloring scheme is shown in below Table \ref{3color}.

Observe that {for Node ($p,q$), 6 colors appear in the closed anti-outneighborhood\remove{, one color for Node ($p,q$), one color each  for nodes ($p,\bar{q}$), one common color for nodes ($<p+1>,1$) and ($<p-2>,3$), one common color for nodes ($<p+2>,1$) and ($<p-1>,3$), one common color for nodes ($<p+1>,2$) and ($<p-1>,2$)}}.  Hence, an upper bound on  the local chromatic number of  the ICP is 
${(K-L)(K-L+1)}/{2}=6$.
\subsection{An upper bound on the local chromatic number for $L = K(s-1)/s + 1$, $s\in\mathbb{N}$}
\label{K-j}
The ICP for $L = K(s-1)/s + 1$ (for some positive integer $s$) is given in Table \ref{Tab:Ka1}.
\begin{table}[h]
	\centering
	\begin{tabular}{| c | c | c | c |}
		\hline 
		$\mathcal{F}_{d_1,[2:<L+1>]}$ & $\mathcal{F}_{d_1,[3:<L+2>]}$ & ... & $\mathcal{F}_{d_1,[K-L+1:K]}$\\
		\hline 
		$\mathcal{F}_{d_2,[3:<L+2>]}$ & $\mathcal{F}_{d_2,[4:<L+3>]}$ & ... & $\mathcal{F}_{d_2,[<K-L+2>:1]}$\\
		\hline 
		\vdots & \vdots & ... & \vdots \\
		\hline
		$\mathcal{F}_{d_p,[<p+1>:<p+L>]}$ &  $\mathcal{F}_{d_p,[<p+2>:<p+L+1>]}$ & ... & $\mathcal{F}_{d_p,[<p+K-L>:<p+K-1>]}$\\
		\hline 
		\vdots & \vdots & ... & \vdots \\
		\hline
		$\mathcal{F}_{d_K,[1:L]}$ &  $\mathcal{F}_{d_K,[2:<L+1>]}$ & ... &  $\mathcal{F}_{d_K,[K-L:K-1]}$\\
		\hline
	\end{tabular}
	\vspace{0.1cm}
	\caption{ICP for $(N,K,L=K(s-1)/s + 1)$-CCDN.}  \label{Tab:Ka1}
		\vspace{-0.5in}
\end{table}
%To understand the structural properties of the above ICP, we form a $K\times(K-L)$ table (see Table \ref{Tab:K1}) such that $p^{th}$ row and $q^{th}$ column contains node $\mathcal{F}_{d_p,[<p+q>,<p+q+1>,...<p+q+L-1>]}$ i.e., the subfile requested by User $p$ and available at users $<p+q>,<p+q+1>,...,<p+q+L-1>$. We call this node as $(p,q)$ node. \big(Here, $p\in[K],q\in[K-L]$\big). Note that the entries in the Row $p$ are the subfiles needed for User $p$.%The table looks similar  to \big($N (\geq K)$, $6$, $L = K-3$ \big)-CCDN case with $K$ rows and $3$ columns.
%The interference nodes of  Node ($p,q$) are
%\begin{itemize}
%	\item In column 1, nodes $(<p>,1),(<p+1>,1),..., (<p+K-L-1>,1)$,
%	\item In column 2, nodes $(<p-1>,2),(<p>,2),..., (<p+K-L-2>,2)$,
%	\item In column 3, nodes $(<p-2>,3),(<p-1>,3),..., (<p+K-L-3>,2)$,
%	
%	\hspace{2in} \vdots
%%	\item In column $q$, nodes $(<p-q+1>,q),..., (<p-1>,q)$, \quad \quad $(<p+1>,q),..., (<p+K-L-q>,q)$,
%	
%%	\hspace{2in} \vdots
%	\item In column $K-L$, nodes $(<p-K-L+1>,K-L), ..., (<p>,K-L)$. 
%\end{itemize}

%
%We divide the interference nodes of any  Node ($p,q$) into two categories:
%%\end{itemize}
%\begin{enumerate}
%	\item intra interference nodes:  nodes $(p,\bar{q})$ (all the other nodes in Row $p$),
%	\item inter interference nodes: interference nodes present in other rows.
%\end{enumerate}
%The interference nodes of  Node ($p,q$) are nodes $(p,\bar{q})$ and in column $t$, nodes $$(<p-t+1>,t)\text{ to } (<p-1>,t),  (<p+1>,t)\text{ to } (<p+K-L-t>,t)  \forall t\in[K-L]$$.
The interference nodes of  Node ($p,q$) are nodes $(p,\bar{q})$, in column $t$ nodes $(<p-t+1>,$ $t)\text{ to } (<p-1>,t)$ and  $(<p+1>,t)\text{ to } (<p+K-L-t>,t) $ for all $ t\in[K-L]$ .\\
%\begin{itemize}
%	\item in column 1, nodes $(<p+1>,1)\text{ to } (<p+K-L-1>,1)$,
%	\item in column 2, nodes $(<p-1>,2)$,  $(<p+1>,2)\text{ to } (<p+K-L-2>,2)$,
%	%\item In column 3, nodes $(r-2,3), (r-1,3)$, \quad \quad $(r+1,3), (r+2,3),..., (r+K-L-3,3)$,
%	
%	\hspace{2in} \vdots
%	\item in column $t$, nodes $(<p-t+1>,t)\text{ to } (<p-1>,t)$,  $(<p+1>,t)\text{ to } (<p+K-L-t>,t)$,
%	
%	\hspace{2in} \vdots
%	\item in column $K-L$, nodes $(<p-K-L+1>,K-L)\text{ to } (<p-1>,K-L)$. 
%\end{itemize}
%In each column exactly $K-L-1$ nodes are present.	
%Our coloring scheme is different for $K-L$ odd and even cases. So, we will discuss them separately.
%\subsubsection{$K-L$ is even}
%\begin{enumerate}
%	\item[i)] 
\vspace{-0.1in}

	For $K-L$ even, we take ${(K-L)(K-L+1)}/{2}$ colors. Let the colors be 
	$$\{(q',p'): q'\in [(K-L)/{2}],p' \in [K-L+1]\}. $$ 
	%\begin{itemize}
	%\item $\forall j\in[\frac{K-L+1}{2}]$, assign Color  $(\frac{K-L+1}{2},j)$ to nodes $(<j+r\frac{K-L+1}{2}>,\frac{K-L+1}{2})$ for all $r\in \mathbb{N}\cup \{0\}$.
	For all	$ q' \in [(K-L)/{2}],p'\in [K-L+1]$, assign Color $(q',p')$ to nodes 
	$(<p'+d(K-L+1)>,$ $q')$ and $(<p'-d(K-L+1)+q'>, K-L+1-q')$   for all $d\in [s]$. 
	
		For example consider the $(N,K= 14, L=8)-$CCDN. The coloring scheme is shown  in Table \ref{Tab:coloraKeven}. We also highlight Node (6,1) interference and non-interference nodes. Note that  the Node (6,1)'s color  is repeated at its non-interfering nodes. 
	%{\color{green}This coloring scheme ensures that the a (or every) color of any nodes interfering nodes in other rows repeated twice among them so that the local chromatic number value is $(K-L)(K-L+1)/2$, which is a lower bound for the ICP mentioned in Table \ref{Tab:coloraKeven}.}
	
	\begin{table}[h]
		\centering
		\begin{tabular}{| c | c | c | c | c | c |}
			\hline 
			(1,1) NI & (2,1) NI & (3,1) NI & (3,5) NI & (2,6) NI & (1,7) I \\
			\hline 
			(1,2) NI & (2,2) NI & (3,2) NI & (3,6) NI & (2,7) I & (1,1) I\\
			\hline
			(1,3) NI & (2,3) NI & (3,3) NI & (3,7) I & (2,1) I & (1,2) I\\
			\hline 
			(1,4) NI & (2,4) NI & (3,4) I & (3,1) I & (2,2) I & (1,3) I\\
			\hline
			(1,5) NI & (2,5) I & (3,5) I & (3,2) I & (2,3) I & (1,4) I\\
			\hline 
			{\cellcolor{red}(1,6)} & (2,6) 	I & (3,6) I & (3,3) I & (2,4) I & (1,5) I\\
			\hline 
			(1,7) I & (2,7) I & (3,7) I & (3,4) I & (2,5) I & (1,6) NI\\
			\hline 
			(1,1) I & (2,1) I & (3,1) I & (3,5) I & (2,6) NI & (1,7) NI\\
			\hline 
			(1,2) I & (2,2) I & (3,2) I & (3,6) NI & (2,7) NI & (1,1) NI\\
			\hline
			(1,3) I & (2,3) I & (3,3) NI & (3,7) NI & (2,1) NI & (1,2) NI\\
			\hline 
			(1,4) I & (2,4) NI & (3,4) NI & (3,1) NI & (2,2) NI & (1,3) NI\\
			\hline
			(1,5) NI & (2,5) NI & (3,5) NI & (3,2) NI & (2,3) NI & (1,4) NI\\
			\hline 
			(1,6) NI & (2,6) NI & (3,3) NI & (3,6) NI & (2,4) NI & (1,5) NI\\
			\hline 
			(1,7) NI & (2,7) NI & (3,7) NI & (3,4) NI & (2,5) NI & (1,6) NI\\
			\hline
		\end{tabular}
		\vspace{0.1cm}
		\caption{Coloring scheme for $(N,K=14,L=8)-$CCDN. Here NI represents the node is a Non-Interfering node for  Node (6,1) and  I represents the node is an Interfering node for  Node (6,1). } \label{Tab:coloraKeven}
		\vspace{-0.5in}
	\end{table}
	%$\forall q' \in [(K-L)/{2}],p'\in [K-L+1]$, assign Color $(q',p')$ to nodes 
	%$(<p'+d(K-L+1)>, K-L+1-q')$ and $(<p'+d(K-L+1)+{K-L+1}-q'>,q')$   for all $d\in [s]$.
	%\vspace{-0.1in}
	
	For any Node $(p,q)$,
	\begin{itemize}
		%\item [--] If $q=(K-L+1)/2$, color assigned to Node $(p,q)$ is repeated only in Column $q$ with a circular gap of $(K-L+1)/2$. Its interference nodes in $q^{th}$ column are nodes $(<p+1>,q),...,<(p+(K-L-1)/2>,q)$ and $(<p-1>,q),...,(<p-(K-L-1)/2>,q)$ but we are assigning same color to (non interfering) nodes $(<p-d(K-L+1)/2>,q)$ and $(<p+d(K-L+1)/2>,q)$ for all $d\in \mathbb{N}\cup \{0\}$.
		\item  if $ q\leq(K-L+1)/2$, color assigned to Node $(p,q)$ is repeated only in columns $q$ and $K-L+1-q$. 
		Node $(p,q)$'s interference nodes are
		\begin{itemize}
			\item in the $q^{th}$ column, nodes $$(<p-q+1>,q)\text{ to } (<p-1>,q), (<p+1>,q)\text{ to } (<p+K-L-q>,q).$$ In our coloring scheme, we are assigning Node $(p,q)$'s color in the $q^{th}$ column, for the non-interfering nodes $ (<p+d(K-L+1)>,q) \text{ } \forall d\in [s],$			
			\item in the $(K-L+1-q)^{th}$ column,  nodes $$(<p-K+L+q>,K-L+1-q)\text{ to } (<p-1>,K-L+1-q),$$ $$(<p+1>,K-L+1-q)\text{ to } (<p+q-1>,K-L+1-q).$$ In our coloring scheme, we are assigning Node $(p,q)$'s color in the $(K-L+1-q)^{th}$ column, for the non-interfering nodes $(<p+q-d(K-L+1)>,K-L+1-q)\text{ } \forall d\in [s],$
		\end{itemize}
%		  %{\color{blue}and \color{green}its interference nodes }
		\item if $q>(K-L+1)/2$, by using similar arguments, we can prove Node $(p,q)$'s  color is repeated only for non-interference nodes.
	\end{itemize}
	%\end{itemize}

%	\begin{table}[h]
%		\centering
%		\begin{tabular}{| c | c | c | c | c | c |}
%			\hline 
%			(1,2) NI & (2,3) NI & (3,4) NI & (3,1) NI & (2,1) NI & (1,1) I \\
%			\hline 
%			(1,3) NI & (2,4) NI & (3,5) NI & (3,2) NI & (2,2) I & (1,2) I\\
%			\hline
%			(1,4) NI & (2,5) NI & (3,6) NI & (3,3) I & (2,3) I & (1,3) I\\
%			\hline 
%			(1,5) NI & (2,6) NI & (3,7) I & (3,4) I & (2,4) I & (1,4) I\\
%			\hline
%			(1,6) NI & (2,7) I & (3,1) I & (3,5) I & (2,5) I & (1,5) I\\
%			\hline 
%			{\color{red}(1,7)} & (2,1) 	I & (3,2) I & (3,6) I & (2,6) I & (1,6) I\\
%			\hline 
%			(1,1) I & (2,2) I & (3,3) I & (3,7) I & (2,7) I & {\color{red}(1,7)} NI\\
%			\hline 
%			(1,2) I & (2,3) I & (3,4) I & (3,1) I & (2,1) NI & (1,1) NI\\
%			\hline 
%			(1,3) I & (2,4) I & (3,5) I & (3,2) NI & (2,2) NI & (1,2) NI\\
%			\hline
%			(1,4) I & (2,5) I & (3,6) NI & (3,3) NI & (2,3) NI & (1,3) NI\\
%			\hline 
%			(1,5) I & (2,6) NI & (3,7) NI & (3,4) NI & (2,4) NI & (1,4) NI\\
%			\hline
%			(1,6) NI & (2,7) NI & (3,1) NI & (3,5) NI & (2,5) NI & (1,5) NI\\
%			\hline 
%			{\color{red}(1,7)} NI & (2,1) NI & (3,2) NI & (3,6) NI & (2,6) NI & (1,6) NI\\
%			\hline 
%			(1,1) NI & (2,2) NI & (3,3) NI & (3,7) NI & (2,7) NI & {\color{red}(1,7)} NI\\
%			\hline
%		\end{tabular}
%		\vspace{0.1cm}
%		\caption{Coloring scheme for Table $(N,K=14,L=8)-$CCDN. Here NI represents the node is a Non-Interfering node for  Node (6,1) and  I represnts the node is an Interfering node for  Node (6,1). } \label{Tab:coloraKeven}
%			\vspace{-0.5in}
%	\end{table}
	
This coloring scheme ensures that none of the nodes share its color with any of its interfering nodes.   Hence, this is a proper coloring scheme. Note that $(K-L)(K-L-1)$ inter-interference nodes of any node share $(K-L)(K-L-1)/2$ colors. 
	
	%{\color{green}Observe that {for Node ($i,j$), $\frac{(K-L)(K-L+1)}{2}$ colors\footnote{one color for Node ($p,q$), $K-L-1$ colors for $K-L-1$ interference nodes  ($p,\bar{q}$) in the $p^{th}$ row, and $\frac{(K-L)(K-L-1)}{2}$ colors for $(K-L)(K-L-1)$  interference nodes in the other nodes} appear in the closed anti-outneighbor\\hood\remove{, one color for Node ($i,j$), one color each  for nodes ($i,\bar{j}$), one common color for nodes ($<i+1>,1$) and ($<i-2>,3$), one common color for nodes ($<i+2>,1$) and ($<i-1>,3$), one common color for nodes ($<i+1>,2$) and ($<i-1>,2$)}}. Hence, the local chromatic number of  the ICP is $\frac{(K-L)(K-L+1)}{2}$.}
	
	Note that for Node ($p,q$), ${(K-L)(K-L+1)}/{2}$ colors appear in the closed anti-outneigh\\borhood, one color for Node ($p,q$), $K-L-1$ colors for $K-L-1$ intra-interference nodes, and ${(K-L)(K-L-1)}/{2}$ colors for $(K-L)(K-L-1)$  inter-interference nodes in the other rows.  Hence, an upper bound on  the local chromatic number of  the ICP is 
	${(K-L)(K-L+1)}/{2}$.
	%\subsubsection{$K-L$ is odd}
	%\item[ii)] 
	
	For $K-L$ odd, we take $\frac{(K-L)(K-L+1)}{2}$ colors. Let the colors be 
	$$\bigg\{(q',p'): q'\in \Big[\frac{K-L-1}{2}\Big], p' \in [K-L+1]\bigg\} \cup  \bigg\{\Big(\frac{K-L+1}{2},p''\Big):  p''\in\Big[\frac{K-L+1}{2}\Big]\bigg\}.$$
	
	\begin{itemize}
		\item $\forall p''\in[\frac{K-L+1}{2}]$, assign Color  $(\frac{K-L+1}{2},p'')$ to nodes $(<p''+d\frac{K-L+1}{2}>,\frac{K-L+1}{2})$ for all $d\in [2s]$. 
		\item $\forall q' \in [\frac{K-L-1}{2}],p'\in [K-L+1]$, assign Color $(q',p')$ to nodes 
		$(<p'+d(K-L+1)>, q')$ and $(<p'-d(K-L+1)+q'>,K-L+1-q')$   for all $d\in [s]$.
	\end{itemize}

It can be verified that the  coloring scheme is a proper coloring scheme and an upper bound on  the local chromatic number of  the ICP using this scheme  is 
${(K-L)(K-L+1)}/{2}$. We skip the details here for brevity.

	\remove{%{\color{red}	
	%If $L=K(s-1)/s + 1\implies K=s(K-L+1)$. We are repeating colors in a column with a circular gap of either  $(K-L+1)/2$ (for $\big(\frac{K-L+1}{2}\big)^{th}$ column) or $K-L+1$ (for remaining columns). Since, the number of nodes in a column are $K=i(K-L+1)$ and color repetition gap is $(K-L+1)/2$ (or $K-L+1$), the coloring scheme assigns only one color to a node (i.e., a color does not spread to all the nodes in a column, it restricts to $2i$ (or $i$) number of nodes in a column\remove{For example: Color (1,1) is assigned to columns 1 and $K-L$. In Column 1, Color (1,1) is assigned to nodes ($K-L+1$,1),...,($i(K-L+1),1$) but not to any other nodes}).}
	For Node $(p,q)$,
	\begin{itemize}
		\item  if $q=(K-L+1)/2$, color assigned to Node $(p,q)$ is repeated only in Column $q$.  Its interference nodes in $q^{th}$ column are nodes $$(<p-(K-L-1)/2>,q)\text{ to }(<p-1>,q),$$ $$(<p+1>,q)\text{ to }<(p+(K-L-1)/2>,q)$$ but we are assigning its color to (non-interfering) nodes  $$(<p+d(K-L+1)/2>,q) \text{ } \forall d\in [2s],$$
		\item  if $q<(K-L+1)/2$, color assigned to Node $(p,q)$ is repeated only in columns $q$ and $K-L+1-q$.
			{\color{blue}Node $(p,q)$'s interference nodes}
		\begin{itemize}
			\item in $q^{th}$ column are nodes $$(<p-q+1>,q)\text{ to } (<p-1>,q), (<p+1>,q)\text{ to } (<p+K-L-q>,q)$$ but we are assigning {\color{blue}Node $(p,q)$'s} color to it's non-interfering nodes $$ (<p+d(K-L+1)>,q) \text{ } \forall d\in [s],$$			
			\item in $(K-L+1-q)^{th}$ column are nodes $$(<p-K+L+q>,K-L+1-q)\text{ to } (<p-1>,K-L+1-q),$$ $$(<p+1>,K-L+1-q)\text{ to } (<p+q-1>,K-L+1-q)$$ but we are assigning {\color{blue}Node $(p,q)$'s} color to it's non-interfering nodes $$(<p+q+d(K-L+1)>,K-L+1-q)\text{ } \forall d\in [s],$$
		\end{itemize} %Its interference nodes in $q^{th}$ column are nodes $(<p-q+1>,q),..., (<p-1>,q)$, $(<p+1>,q),..., (<p+K-L-q>,q)$, but we are assigning same color to (non interfering) nodes $(<p-(K-L+1)>,q)$ and $(<p+(K-L+1)>,q)$, in $(K-L+1-q)^{th}$ column are nodes $(<p-K+L+q>,K-L+1-q),..., (<p-1>,K-L+1-q)$, $(<p+1>,K-L+1-q),..., (<p+q-1>,K-L+1-q)$, but we are assigning same color to (non interfering) nodes $(<p-(K-L-q+1)>,K-L+1-q)$ and $(<p+q>,K-L+1-q)${\color{blue}, and}
		\item  if $q>(K-L+1)/2$, by using similar arguments, we can prove Node $(p,q)$'s color is repeated only for non-interference nodes.
	\end{itemize}
	So, this coloring scheme ensures that none of the nodes share its color with any of its interfering nodes.   Hence, this is a proper coloring scheme. 
	%\begin{table}[h]
	%	\centering
	%	\begin{tabular}{| c | c | c | c | c |}
	%		\hline 
	%		(1,2) NI & (2,3) NI & (3,1) NI & (2,1) NI & {\color{red}(1,1)} NI\\
	%		\hline 
	%		(1,3) NI & (2,4) NI & (3,2) NI & (2,2) NI & (1,2) I\\
	%		\hline
	%		(1,4) NI & (2,5) NI & (3,3) NI & (2,3) I & (1,3) I\\
	%		\hline 
	%		(1,5) NI & (2,6) NI & (3,1) I & (2,4) I & (1,4) I\\
	%		\hline
	%		(1,6) NI & (2,1) I& (3,2) I& (2,5) I& (1,5) I\\
	%		\hline 
	%		{\color{red}(1,1)} & (2,2) I & (3,3) I & (2,6) I & (1,6) I\\
	%		\hline 
	%		(1,2) I & (2,3) I & (3,1) I & (2,1) I & {\color{red}(1,1)} NI\\
	%		\hline 
	%		(1,3) I & (2,4) I & (3,2) I & (2,2) NI & (1,2) NI\\
	%		\hline
	%		(1,4) I & (2,5) I & (3,3) NI & (2,3) NI & (1,3) NI\\
	%		\hline 
	%		(1,5) I & (2,6) NI & (3,1) NI & (2,4) NI & (1,4) NI\\
	%		\hline
	%		(1,6) NI & (2,1) NI & (3,2) NI & (2,5) NI & (1,5) NI\\
	%		\hline 
	%		{\color{red}(1,1)} NI & (2,2) NI & (3,3) NI & (2,6) NI & (1,6) NI\\
	%		\hline 
	%	\end{tabular}
	%	\vspace{0.2cm}
	%	\caption{Coloring scheme for  $(N,K=12,L=7)-$CCDN.  Here, NI means Non-Interfering node and  I means Interfering node for Node (6,1). } \label{Tab:colorK}
	%\end{table}
	%
	%For example consider $\{N, 12, 7\}-$CCDN. The coloring scheme is shown  in Table \ref{Tab:colorK}. We also highlight Node (6,1) interference and non-interference nodes.

	Observe that for Node ($p,q$), ${(K-L)(K-L+1)}/{2}$ colors appear in the closed anti-outneighborhood. Therefore, the local chromatic number of  the ICP is $(K-L)(K-L+1)/{2}$.
	
	{\color{green}If space is not sufficient, then we will give only coloring scheme and mention that this is a proper color scheme without explanation and mention it gives the local chromatic number value as $(K-L)(K-L+1)/2$.}
}

\end{document}